\documentclass[12pt]{article}

\usepackage[margin=1in]{geometry}
\usepackage{amsmath,amssymb,amsthm,mathtools,bm}
\usepackage{graphicx,booktabs,tabularx,array}
\usepackage{enumitem}
\usepackage{natbib}
\usepackage{hyperref}
\usepackage[nameinlink,capitalize]{cleveref}

\hypersetup{colorlinks=true,linkcolor=blue,citecolor=blue,urlcolor=blue}
\graphicspath{{figures/}{./}}
\setlist[itemize]{leftmargin=2em}
\setlist[enumerate]{leftmargin=2em}

\newtheorem{assumption}{Assumption}
\newtheorem{definition}{Definition}
\newtheorem{lemma}{Lemma}
\newtheorem{proposition}{Proposition}
\newtheorem{theorem}{Theorem}
\newtheorem{corollary}{Corollary}
\newtheorem{remark}{Remark}

\crefname{assumption}{Assumption}{Assumptions}
\Crefname{assumption}{Assumption}{Assumptions}
\crefname{definition}{Definition}{Definitions}
\Crefname{definition}{Definition}{Definitions}
\crefname{lemma}{Lemma}{Lemmas}
\Crefname{lemma}{Lemma}{Lemmas}
\crefname{proposition}{Proposition}{Propositions}
\Crefname{proposition}{Proposition}{Propositions}
\crefname{theorem}{Theorem}{Theorems}
\Crefname{theorem}{Theorem}{Theorems}
\crefname{corollary}{Corollary}{Corollaries}
\Crefname{corollary}{Corollary}{Corollaries}
\crefname{section}{Section}{Sections}
\Crefname{section}{Section}{Sections}

\newcommand{\Pp}{\mathbb P}
\newcommand{\Ee}{\mathbb E}
\newcommand{\Rr}{\mathbb R}
\newcommand{\ind}{\mathbf{1}}
\newcommand{\calY}{\mathcal Y}
\newcommand{\calX}{\mathcal X}
\newcommand{\calS}{\mathcal S}
\newcommand{\calT}{\mathcal T}
\newcommand{\calA}{\mathcal A}

\newcommand{\Gam}{\Gamma}
\newcommand{\Lam}{\Lambda}
\DeclareMathOperator{\argmin}{arg\,min}
\DeclareMathOperator{\argmax}{arg\,max}

\DeclareMathOperator{\Var}{Var}
\DeclareMathOperator{\Cov}{Cov}

\title{Nested Sensitivity Envelopes for Transported Quantile Treatment Effects}
\author{Pengyun Wang \\ Data Science Institute, The University of Chicago}
\date{Working manuscript \today}

\begin{document}
\maketitle

\begin{abstract}
We study target-population quantile treatment effects when a source study may have unmeasured treatment confounding and may not transport to a target population after conditioning on observed covariates. The observed data consist of a source sample with treatment, outcome and covariates, and a target sample with covariates only. We impose two marginal sensitivity restrictions: an odds-ratio bound \(\Gam\) for source treatment assignment and a conditional likelihood-ratio bound \(\Lam\) for source-to-target potential-outcome distribution shift. For each treatment arm and threshold \(y\), we derive a closed-form sharp target counterfactual CDF envelope.  The envelope nests a source marginal-sensitivity map inside a target outcome-shift map, preserving two normalizations and generally improving on a single product likelihood-ratio relaxation. We prove process-level sharpness, so the envelopes are attainable as entire CDFs and can be inverted to obtain sharp target quantile bounds and sharp interval-hull QTE bounds. We then develop semiparametric theory for these nonsmooth bound processes. On regular index sets, we give the canonical gradient, including the source propensity contribution required in observational studies, and construct cross-fitted Neyman-orthogonal one-step estimators with uniform Gaussian approximation. On full index sets with active-set ties or mass points, we use Hadamard directional differentiability and subsampling-valid inference, with a primitive finite-support route for the required weak convergence. Finally, we invert simultaneous monotone CDF bands to obtain honest confidence sets for quantile and QTE interval-hull processes, and formulate the two-dimensional \((\Gam,\Lam)\) breakdown frontier as level-set inference for interval-hull non-refutation.
\end{abstract}

\noindent\textbf{Keywords:} partial identification; quantile treatment effect; transportability; external validity; unmeasured confounding; marginal sensitivity model; efficient influence function; directional differentiability; breakdown frontier.

\section{Introduction}

A common design in causal generalization combines a source study, where treatment and outcome are observed, with a target sample, where only baseline covariates are observed \citep{ColeStuart2010,StuartColeBradshawLeaf2011,DahabrehEtAl2019,PearlBareinboim2011,BareinboimPearl2016,DegtiarRose2023}.  Under source internal validity and conditional transportability of potential outcomes, the target counterfactual distribution and the target quantile treatment effect are identified by standardization to the target covariate distribution \citep{Firpo2007,ChernozhukovFernandezValMelly2013,Wang2026}.  This paper studies the harder case in which both assumptions are allowed to fail in controlled ways.  The source treatment assignment may be confounded by unmeasured variables, and the conditional potential-outcome distribution in the target population may differ from the corresponding source distribution even after conditioning on observed covariates.

The estimand is the target-population quantile treatment effect
\[
        \Delta_T(\tau)=Q^T_1(\tau)-Q^T_0(\tau),\qquad \tau\in(0,1),
\]
where $Q^T_a$ is the quantile function of the target potential outcome $Y^a$ under treatment $a$.  The target sample contains no outcomes, so even standard point identification requires transporting an entire conditional distribution from the source to the target.  Sensitivity analysis for this object is not a direct extension of average-effect sensitivity analysis.  Average effects are linear in the conditional potential-outcome law.  QTEs require first characterizing sharp bounds on counterfactual distribution functions and then passing through a generalized inverse map, which is nonsmooth at mass points, flat regions and envelope contact sets; this feature is central in recent distributional sensitivity analysis under relaxations of unconfoundedness \citep{MastenPoirierRen2025}.

We impose two sensitivity restrictions.  First, the source population satisfies a marginal odds-ratio sensitivity model with parameter $\Gam\ge 1$: the odds of receiving treatment $a$ after conditioning on the potential outcome $Y^a$ and covariates may differ from the observed source propensity odds by at most $\Gam$ \citep{Tan2006,ZhaoSmallBhattacharya2019,DornGuo2023,MastenPoirierRen2025}.  Second, source-to-target outcome shift is controlled by a conditional likelihood-ratio bound $\Lam\ge1$: for each treatment arm and covariate value, the target potential-outcome distribution may be an outcome-dependent tilt of the source potential-outcome distribution, with Radon--Nikodym derivative between $\Lam^{-1}$ and $\Lam$ \citep{AsiaeePalBeckHuling2026}.  When $(\Gam,\Lam)=(1,1)$ the standard point-identified transported QTE model is recovered.

The first contribution is a closed-form sharp distributional envelope under the joint sensitivity model.  Let
\[
        p_a(y,x)=\Pp(Y\le y\mid R=1,A=a,X=x),
        \qquad e_a(x)=\Pp(A=a\mid R=1,X=x),
\]
where $R=1$ denotes membership in the source sample and $R=0$ denotes membership in the target covariate sample.  Define
\[
        \ell_\Gam(e)=e+(1-e)/\Gam,
        \qquad u_\Gam(e)=e+\Gam(1-e),
\]
which are the lower and upper bounds on the inverse treatment-selection tilt induced by the marginal sensitivity model.  The sharp source-population potential-outcome CDF bounds are
\[
 C^-_\Gam(p,e)=\max\{\ell_\Gam(e)p,\,1-u_\Gam(e)(1-p)\},
\quad
 C^+_\Gam(p,e)=\min\{u_\Gam(e)p,\,1-\ell_\Gam(e)(1-p)\}.
\]
The sharp transport-shift CDF maps are
\[
 T^-_\Lam(q)=\max\{\Lam^{-1}q,\,1-\Lam(1-q)\},
\quad
 T^+_\Lam(q)=\min\{\Lam q,\,1-\Lam^{-1}(1-q)\}.
\]
We prove that the sharp target conditional CDF envelope is
\[
        b^-_{a,s}(y,x)=T^-_\Lam\{C^-_\Gam(p_a(y,x),e_a(x))\},
        \qquad
        b^+_{a,s}(y,x)=T^+_\Lam\{C^+_\Gam(p_a(y,x),e_a(x))\},
\]
where $s=(\Gam,\Lam)$.  The target marginal CDF envelope is then
\[
        \psi^-_{a,s}(y)=\Ee\{b^-_{a,s}(y,X)\mid R=0\},
        \qquad
        \psi^+_{a,s}(y)=\Ee\{b^+_{a,s}(y,X)\mid R=0\}.
\]
The proof is constructive: the least-favorable source distribution is obtained by a threshold inverse-selection tilt, and the least-favorable target distribution is obtained by a second threshold outcome-shift tilt.  Thus the bounds are not merely pointwise.  They are attainable as entire CDF paths.  Consequently the sharp target quantile bounds are
\[
        q^-_{a,s}(\tau)=\inf\{y:\psi^+_{a,s}(y)\ge \tau\},
        \qquad
        q^+_{a,s}(\tau)=\inf\{y:\psi^-_{a,s}(y)\ge \tau\},
\]
and the sharp interval-hull bounds for the target QTE are
\[
        \Big[q^-_{1,s}(\tau)-q^+_{0,s}(\tau),\;
             q^+_{1,s}(\tau)-q^-_{0,s}(\tau)\Big].
\]

The second contribution is efficient semiparametric theory for the sharp CDF-bound process, using tangent-space calculations for semiparametric models and orthogonal-score ideas for flexible nuisance estimation \citep{Tsiatis2006,vdV2000,ChernozhukovChetverikovDemirerDufloHansenNeweyRobins2018}.  At active-set regular laws the endpoint map is pathwise differentiable.  If $G^-_s=T^-_\Lam\circ C^-_\Gam$ or $G^+_s=T^+_\Lam\circ C^+_\Gam$ denotes the active endpoint map and $G_p,G_e$ its derivatives with respect to $(p,e)$, the source contribution to the efficient influence function is
\[
        G_p\{p_a(y,X),e_a(X)\}\frac{\ind(A=a)}{e_a(X)}
        \{\ind(Y\le y)-p_a(y,X)\}
        +G_e\{p_a(y,X),e_a(X)\}\{\ind(A=a)-e_a(X)\}.
\]
The second term is essential for observational source studies because the endpoint depends on the source treatment propensity through the sensitivity envelope.  It disappears only when the treatment probability is known by design.  Combining this conditional contribution with the target empirical distribution yields the canonical gradient.  We use it to define cross-fitted one-step estimators and establish uniform asymptotic linearity over regular index sets of thresholds, treatment arms, endpoint signs and sensitivity parameters; full-index nonregular cases are handled by directional inference.

The third contribution is inference that respects the nonsmooth structure of the problem, building on the delta method for quantiles and on modern inference for directionally differentiable maps \citep{vdV2000,FangSantos2019}.  The envelope maps are maxima and minima of affine functions, and the quantile map is a generalized inverse.  Wald intervals for quantile endpoints are therefore fragile and require additional density and active-set assumptions.  Our primary inferential object is a simultaneous confidence band for the CDF-bound process; inversion of this band yields honest confidence sets for quantile bounds and QTE bounds.  Under active-set regularity the critical value can be obtained by a multiplier bootstrap for the efficient process.  Without active-set separation the endpoint maps remain Hadamard directionally differentiable, and we use subsampling to obtain valid critical values for the induced non-Gaussian sup-norm limit.

The fourth contribution is to treat the two-dimensional sensitivity frontier as a statistical object, in the spirit of breakdown-frontier analysis for partially identified causal models \citep{MastenPoirier2020,LannersRudinVolfovskyParikh2025}.  For a fixed quantile index $\tau$, define the lower and upper QTE interval-hull endpoints by $\Delta^-_s(\tau)$ and $\Delta^+_s(\tau)$.  The null value zero lies in the sharp QTE interval hull at sensitivity level $s=(\Gam,\Lam)$ precisely when
\[
        \kappa^{\mathrm{hull}}_s(\tau)=\min\{\Delta^+_s(\tau),-\Delta^-_s(\tau)\}\ge 0.
\]
We construct simultaneous confidence bands for $\kappa^{\mathrm{hull}}_s(\tau)$ over compact rectangles of $(\Gam,\Lam)$ and use them to build inner and outer confidence sets for the interval-hull non-refutation region and its boundary, the breakdown frontier.  This distinction is necessary: with discrete outcomes or support gaps, the exact scalar QTE identified set can be nonconnected even when its sharp interval hull contains zero.

\section{Literature review and gap}\label{sec:lit}

Generalizability and transportability methods combine study data with a target covariate distribution under assumptions that link the conditional potential-outcome law across populations.  Early potential-outcome formulations used standardization, inverse-odds-of-sampling weights, and doubly robust estimators for target average treatment effects \citep{ColeStuart2010,StuartColeBradshawLeaf2011,BuchananEtAl2018,DahabrehEtAl2019}.  The graphical literature formalized related questions under transportability and data fusion \citep{PearlBareinboim2011,BareinboimPearl2016}, and recent reviews emphasize that most mature inferential theory remains centered on first moments \citep{DegtiarRose2023}.  Efficient first-moment transportation has also been studied in detail, including doubly robust and minimax theory for target ATEs \citep{ZengKennedyBodnarNaimi2023}.  Our setting uses the same two-sample structure--a source sample with outcomes and a target sample with covariates only--but targets the entire target counterfactual distribution and its quantile contrasts under partial identification.

Distributional and quantile treatment effects have a separate large literature.  \citet{Firpo2007} develops efficient semiparametric estimation of QTEs under selection on observables, while \citet{ChernozhukovFernandezValMelly2013} provide inference for counterfactual distribution and quantile processes under regression-based models.  \citet{Rothe2010} studies nonparametric distributional policy effects, and empirical work such as \citet{BitlerGelbachHoynes2006} shows why average effects can miss economically relevant heterogeneity; related sorted-effect tools are developed by \citet{Chernozhukov2018}.  The closest recent transport paper is \citet{Wang2026}, who studies efficient transported distributional and quantile treatment effects in a point-identified source-target design with surrogate-assisted missing outcomes.  That work covers canonical gradients and uniform inference for transported distributional/QTE estimands when exchangeability and transportability hold.  It does not address sensitivity envelopes, hidden confounding, source-to-target outcome shift, or sharp partial identification.  Earlier identification results under conditional partial independence already show how distributional treatment-effect bounds can be expressed analytically in single-population settings \citep{MastenPoirier2018a}, and recent likelihood-ratio sensitivity work for linear estimators provides unified calculations for mean-type targets \citep{DornYap2024}.  More generally, \citet{BenMichael2025} develops debiased estimation for bounds defined by conditional linear programs.  That framework is important background for conditional optimization, but it does not deliver the nested closed-form CDF process, the observational-source propensity contribution, or inverse-CDF/frontier inference under simultaneous internal- and external-validity sensitivity.  These results motivate our binary-event calculus while clarifying the remaining gap.

Sensitivity analysis for unmeasured confounding begins from the observation that conditional exchangeability is not testable in observational source studies.  The marginal sensitivity model of \citet{Tan2006} bounds the departure of a latent propensity score from the observed propensity.  \citet{ZhaoSmallBhattacharya2019}, \citet{DornGuo2023}, and \citet{DornGuoKallus2024} develop increasingly sharp and robust inference for average or linear estimands under such models, and \citet{BonviniKennedy2022} study a different sensitivity parameter based on the proportion of confounded units.  Flexible and generalized sensitivity formulations are developed by \citet{FranksDAmourFeller2020} and \citet{FrauenMelnychukFeuerriegel2023}.  The closest distributional identification result is \citet{MastenPoirierRen2025}: they define a broad class of relaxations of unconfoundedness, show the marginal sensitivity model is a special case, and derive sharp bounds for QTEs and DTEs in a single observational population.  Their sharpness theorem for marginal relaxations also clarifies why arm-specific marginal potential-outcome laws can be varied without imposing a cross-arm copula restriction.  Our paper uses this CDF-envelope insight as the internal-validity layer, but the target problem is different: the target sample contributes only covariates, a second likelihood-ratio layer allows source-to-target potential-outcome distribution shift, the estimand is the transported CDF/quantile-bound process, and inference must propagate both sensitivity layers through generalized inverse maps and a two-dimensional frontier.

Sensitivity analysis for external validity is more recent.  Some work parameterizes violations through bias functions or omitted effect modifiers, including \citet{NguyenEtAl2018}, \citet{NieImbensWager2021}, \citet{DahabrehEtAl2023}, and \citet{Huang2024}; another line constrains density ratios between source and target components.  The closest paper is \citet{AsiaeePalBeckHuling2026}, who impose a conditional likelihood-ratio bound between trial and target outcome distributions, derive sharp target ATE bounds, and show a threshold/greedy structure for the least-favorable tilt.  Their paper is explicitly a first-moment generalization analysis from a randomized trial.  It does not consider source hidden confounding, CDF or QTE bounds, nonregular inverse-CDF inference, or a two-dimensional frontier over internal- and external-validity violations.  In contrast, our primitive object is a nested distributional envelope, not a mean endpoint.

Recent partial-identification and data-fusion work is also close in spirit.  \citet{LannersRudinVolfovskyParikh2025} study partial identification when unmeasured confounding and external-validity violations may both occur, and develop doubly robust estimators and breakdown-frontier analysis for mean-type causal effects.  That paper reinforces the value of joint sensitivity parameters, but it does not solve the nonlinear chain
\[
        \text{conditional CDF envelope}\;\longrightarrow\;
        \text{target CDF process}\;\longrightarrow\;
        \text{generalized inverse}\;\longrightarrow\;
        \text{QTE frontier}.
\]
This chain is the main gap addressed here.  It also connects our inferential arguments to the literature on nonsmooth and partially identified parameters, including partial identification in the sense of \citet{Manski2003}, breakdown frontiers \citep{MastenPoirier2020}, directionally differentiable functionals \citep{FangSantos2019}, quantile delta methods \citep{vdV2000}, and empirical-process/subsampling methods \citep{vdVW1996,PolitisRomanoWolf1999}.

\section{Notation and statistical model}\label{sec:notation}

\subsection{Observed data}

Let $O=(R,X,RA,RY)$ be observed, where $R\in\{0,1\}$ indicates data source.  Units with $R=1$ belong to the source study and have observed covariates $X\in\calX$, treatment $A\in\calA=\{0,1\}$ and outcome $Y\in\calY\subset\Rr$.  Units with $R=0$ belong to the target covariate sample and have only $X$ observed.  We regard $A$ and $Y$ as arbitrary placeholders when $R=0$.  The observed data are i.i.d. draws $O_1,\ldots,O_n$ from a law $P$, and $n_r=\sum_{i=1}^n\ind(R_i=r)$ for $r\in\{0,1\}$.

Write
\[
        \pi_r=P(R=r),\qquad P_r(\cdot)=P(\cdot\mid R=r),\qquad r\in\{0,1\}.
\]
Expectations under $P_r$ are denoted by $E_r$.  When the argument is a function of $X$ only, $E_r$ means integration with respect to $P^X_r$.
The conditional covariate laws are $P^X_r$.  When $P^X_0\ll P^X_1$, define the covariate density ratio
\[
        \omega(x)=\frac{dP^X_0}{dP^X_1}(x).
\]
For $a\in\{0,1\}$, define the source treatment propensity and observed source-arm CDF
\[
        e_a(x)=P(A=a\mid R=1,X=x),
        \qquad
        p_a(y,x)=P(Y\le y\mid R=1,A=a,X=x).
\]
The target potential-outcome distribution under treatment $a$ is denoted by $F^T_a(\cdot)$, and its conditional version given $X=x$ by $F^T_a(\cdot\mid x)$.  The target counterfactual CDF and quantile are
\[
        F^T_a(y)=P_0(Y^a\le y),
        \qquad
        Q^T_a(\tau)=\inf\{y:F^T_a(y)\ge \tau\}.
\]
We write $s=(\Gam,\Lam)$ for a pair of sensitivity parameters and let
\[
        \calS=[1,\bar\Gam]\times[1,\bar\Lam]
\]
be a compact sensitivity rectangle used for uniform statements, where $1\le\bar\Gam,\bar\Lam<\infty$.  The symbol $s$ is never used as a source-population indicator; throughout, $R=1$ denotes source and $R=0$ denotes target.  Superscripts $S$ and $T$ refer to source and target potential-outcome laws, respectively.  The quantile index set is
\[
        \calT=[\tau_L,\tau_U]\subset(0,1).
\]
For a covariate-indexed function $f(X)$, $\|f\|_2$ denotes the $L_2(P^X_1)$ norm unless another measure is explicitly stated.  Thus $\|\widehat p_a(y,\cdot)-p_a(y,\cdot)\|_2$, $\|\widehat e_a-e_a\|_2$, and $\|\widehat\omega-\omega\|_2$ are all taken over the source covariate law $P^X_1$.
For each $(s,\tau)$, let $\mathfrak I_{\Delta}(s,\tau)$ denote the exact set of attainable scalar QTE values under the joint sensitivity model.  Its sharp interval hull is denoted by $\operatorname{hull}\mathfrak I_{\Delta}(s,\tau)=[\Delta^-_s(\tau),\Delta^+_s(\tau)]$.  Unless explicitly stated otherwise, frontier statements below concern this sharp interval hull, not exact scalar set membership.

\subsection{Primitive sensitivity maps}

For $e\in(0,1)$ and $\Gam\ge1$ define
\[
        \ell_\Gam(e)=e+(1-e)/\Gam,
        \qquad
        u_\Gam(e)=e+\Gam(1-e).
\]
These are bounds on the inverse treatment-selection likelihood ratio $e/g$, where $g$ is a latent treatment probability.  For $p\in[0,1]$ define
\begin{align*}
        C^-_\Gam(p,e)&=\max\{\ell_\Gam(e)p,\;1-u_\Gam(e)(1-p)\},\\
        C^+_\Gam(p,e)&=\min\{u_\Gam(e)p,\;1-
        \ell_\Gam(e)(1-p)\}.
\end{align*}
For $q\in[0,1]$ and $\Lam\ge1$ define
\begin{align*}
        T^-_\Lam(q)&=\max\{\Lam^{-1}q,\;1-\Lam(1-q)\},\\
        T^+_\Lam(q)&=\min\{\Lam q,\;1-\Lam^{-1}(1-q)\}.
\end{align*}
The nested endpoint maps are
\[
        G^-_s(p,e)=T^-_\Lam\{C^-_\Gam(p,e)\},
        \qquad
        G^+_s(p,e)=T^+_\Lam\{C^+_\Gam(p,e)\}.
\]
The conditional target CDF-bound functions and marginal CDF-bound functionals are
\begin{align*}
        b^-_{a,s}(y,x)&=G^-_s(p_a(y,x),e_a(x)),
        &\psi^-_{a,s}(y)&=E\{b^-_{a,s}(y,X)\mid R=0\},\\
        b^+_{a,s}(y,x)&=G^+_s(p_a(y,x),e_a(x)),
        &\psi^+_{a,s}(y)&=E\{b^+_{a,s}(y,X)\mid R=0\}.
\end{align*}

\subsection{Sensitivity model}

\begin{assumption}[Sampling, consistency, measurability and overlap]\label{ass:basic}
The observations are i.i.d. and $\calX$ is a standard Borel space.  The outcome support is a compact interval $\calY=[\underline y,\bar y]$.  The source sample satisfies consistency, $Y=Y^A$ almost surely conditional on $R=1$.  Regular conditional distributions are taken in jointly Borel versions; in particular $(y,x)\mapsto p_a(y,x)$ is nondecreasing and right-continuous in $y$ and measurable in $(y,x)$.  There exists $\eta_0>0$ such that $\pi_r\in[\eta_0,1-\eta_0]$, $e_a(X)\in[\eta_0,1-\eta_0]$ almost surely conditional on $R=1$, $P^X_0\ll P^X_1$, and $\eta_0\le \omega(X)\le \eta_0^{-1}$ almost surely conditional on $R=1$.  The symbol $\eta_0$ is reserved for overlap constants; nuisance functions are denoted by $\nu$ below.
\end{assumption}

The next restriction is the armwise version of the marginal sensitivity model.  The orientation is chosen so that each treatment arm is analyzed after recoding $A_a=\ind(A=a)$, matching the marginal relaxation logic used for distributional and QTE bounds under unconfoundedness sensitivity \citep{Tan2006,MastenPoirierRen2025}.

\begin{assumption}[Source marginal treatment-confounding sensitivity]\label{ass:gamma}
Fix $\Gam\ge1$.  For each $a\in\{0,1\}$ and $P^X_1$-almost every $x$, there exists a source potential-outcome conditional distribution $F^S_a(\cdot\mid x)$ and a latent arm-$a$ treatment probability
\[
        g_a(y,x)=P(A=a\mid Y^a=y,X=x,R=1)
\]
such that
\[
        \Gam^{-1}\le
        \frac{e_a(x)/(1-e_a(x))}{g_a(y,x)/(1-g_a(y,x))}
        \le \Gam
\]
for $F^S_a(\cdot\mid x)$-almost every $y$, and the observed source-arm law satisfies
\[
        dP_1(Y\in dy\mid A=a,X=x)
        =\frac{g_a(y,x)}{e_a(x)}\,dF^S_a(y\mid x).
\]
The restriction is armwise and marginal: it is imposed on each $Y^a$ separately after recoding $A_a=\ind(A=a)$.  It does not impose a joint sensitivity restriction on $(Y^0,Y^1)$.
\end{assumption}

The second restriction is an external-validity sensitivity model.  It uses a pointwise likelihood-ratio bound on the target-versus-source potential-outcome law, paralleling recent sharp mean-bound work for trial generalization under outcome distribution shift \citep{AsiaeePalBeckHuling2026}.

\begin{assumption}[Conditional source-to-target outcome-shift sensitivity]\label{ass:lambda}
Fix $\Lam\ge1$.  For each $a\in\{0,1\}$ and $P^X_0$-almost every $x$, the target potential-outcome conditional distribution $F^T_a(\cdot\mid x)$ is absolutely continuous with respect to a source potential-outcome law $F^S_a(\cdot\mid x)$ satisfying \Cref{ass:gamma}, and
\[
        \Lam^{-1}\le \frac{dF^T_a(\cdot\mid x)}{dF^S_a(\cdot\mid x)}(y)
        \le \Lam
\]
for $F^S_a(\cdot\mid x)$-almost every $y$.  Because \Cref{ass:basic} imposes $P^X_0\ll P^X_1$, the source conditional laws needed on the target covariate support are defined after choosing $P^X_1$-a.e. versions.
\end{assumption}

\begin{definition}[Joint sensitivity identified set]\label{def:idset}
For $s=(\Gam,\Lam)$, the joint sensitivity identified set for the target arm-$a$ CDF consists of all functions
\[
        y\mapsto \int F^T_a(y\mid x)\,dP^X_0(x)
\]
that can be generated by some collection of conditional distributions and latent probabilities satisfying \Cref{ass:basic,ass:gamma,ass:lambda} and the observed law $P$.
\end{definition}

\begin{lemma}[Armwise odds-ratio orientation and inverse-selection tilts]\label{lem:orientation}
Fix an arm $a$ and covariate value $x$, and write $P^{\mathrm{obs}}_{a,x}$ for the observed conditional law of $Y$ given $(R,A,X)=(1,a,x)$.  Under \Cref{ass:gamma}, the source potential-outcome law can be written as
\[
        dF^S_a(y\mid x)=h_a(y,x)\,dP^{\mathrm{obs}}_{a,x}(y),
        \qquad h_a(y,x)=\frac{e_a(x)}{g_a(y,x)}.
\]
The odds-ratio restriction is equivalent to
\[
        \ell_\Gam\{e_a(x)\}\le h_a(y,x)\le u_\Gam\{e_a(x)\},
        \qquad \int h_a(y,x)\,dP^{\mathrm{obs}}_{a,x}(y)=1.
\]
Conversely, any measurable tilt $h$ satisfying these two displayed conditions defines an admissible source potential-outcome law by $dF^S=h\,dP^{\mathrm{obs}}_{a,x}$ and an admissible latent arm probability by $g=e_a(x)/h$.  Thus the arm-$0$ formulation is the same marginal sensitivity model applied to the recoded treatment $A_0=\ind(A=0)$.

\end{lemma}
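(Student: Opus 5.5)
The plan is to start from the density relation in \Cref{ass:gamma} and invert it, then translate the odds-ratio bound into a bound on the inverse tilt $h_a=e_a/g_a$ by elementary monotone algebra. Fix $(a,x)$ and write $e=e_a(x)$, $g(y)=g_a(y,x)$.

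\textbf{Inverting the tilt.} \Cref{ass:gamma} gives $dP^{\mathrm{obs}}_{a,x}=(g/e)\,dF^S_a$. The odds-ratio bound forces $g$'s odds to lie within a factor $\Gam$ of $e/(1-e)\in(0,\infty)$, because $e\in[\eta_0,1-\eta_0]$ by \Cref{ass:basic}. Hence $g\in(0,1)$ for $F^S_a$-a.e.\ $y$, and $g/e$ is strictly positive and bounded. So $F^S_a\ll P^{\mathrm{obs}}_{a,x}$ and we may write $dF^S_a=(e/g)\,dP^{\mathrm{obs}}_{a,x}$. Integrating gives $\int h\,dP^{\mathrm{obs}}=F^S_a(\calY\mid x)=1$. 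The null sets coincide because the two measures are mutually absolutely continuous, so ``$F^S$-a.e.'' and ``$P^{\mathrm{obs}}$-a.e.'' can be used interchangeably.

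\textbf{The algebraic equivalence.} Let $o=e/(1-e)$ and $\rho=o\,(1-g)/g$, so the assumption reads $\rho\in[\Gam^{-1},\Gam]$. Compute
\[
h=\frac{e}{g}=e\Bigl(1+\frac{1-g}{g}\Bigr)=e+e\cdot\frac{\rho}{o}=e+(1-e)\rho .
\]
This map is strictly increasing and affine in $\rho$. Therefore $\rho\in[\Gam^{-1},\Gam]$ if and only if
\[
h\in\bigl[e+(1-e)/\Gam,\;e+\Gam(1-e)\bigr]=[\ell_\Gam(e),u_\Gam(e)].
\]

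\textbf{Converse.} Given a measurable $h$ with $\ell_\Gam(e)\le h\le u_\Gam(e)$ and $\int h\,dP^{\mathrm{obs}}=1$:
\begin{itemize}
\item Define $dF^S=h\,dP^{\mathrm{obs}}$. It is a probability measure since $h\ge \ell_\Gam(e)>0$ and $h$ integrates to one.
\item Define $g=e/h$. Since $h\ge\ell_\Gam(e)>e$ whenever $\Gam<\infty$ and $e<1$, we get $g\in(0,1)$. (In the boundary case $\Gam=1$, $h\equiv1$ and $g=e$.)
\item By construction $(g/e)\,dF^S=dP^{\mathrm{obs}}$, which recovers the observed arm law.
\item Reversing the affine computation shows the odds ratio lies in $[\Gam^{-1},\Gam]$.
\end{itemize}

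The remaining point is that $g$ is a legitimate conditional probability $P(A=a\mid Y^a=y,X=x,R=1)$. To see this, build the joint law of $(A_a,Y^a)$ given $X=x$ by taking $Y^a\sim F^S$ and $A_a\mid Y^a=y\sim\mathrm{Bernoulli}(g(y))$. Then:
\begin{itemize}
\item $P(A_a=1)=\int g\,dF^S=\int e\,dP^{\mathrm{obs}}=e$, matching the observed propensity;
\item the law of $Y^a$ given $A_a=1$ is $P^{\mathrm{obs}}$, matching the observed outcome distribution.
\end{itemize}
Measurability in $x$ follows from the jointly Borel versions in \Cref{ass:basic}. Because the construction touches only $(A_a,Y^a)$, arm $0$ is handled identically after recoding $A_0=\ind(A=0)$.

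\textbf{Main obstacle.} I expect the only delicate step to be the absolute-continuity and null-set bookkeeping in the first step, which needs $g>0$ a.e.\ and hence the overlap bound on $e$. The rest is the one-line identity $h=e+(1-e)\rho$.
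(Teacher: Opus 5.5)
Your proof is correct and follows essentially the same route as the paper. You invert the density relation in \Cref{ass:gamma} to get $h_a=e_a/g_a$ with unit integral, and you solve the odds-ratio inequalities for $h$; your affine identity $h=e+(1-e)\rho$ is exactly the paper's ``solving'' step made explicit. For the converse you set $g=e/h$, use $h\ge\ell_\Gam(e)>e$ to get $g\in(0,1)$, and check $(g/e)\,dF^S=dP^{\mathrm{obs}}_{a,x}$, as the paper does. Your mutual-absolute-continuity bookkeeping and the explicit Bernoulli construction of $(A_a,Y^a)$ are somewhat more careful than the paper's version, which leaves the full joint construction to \Cref{lem:armembed}.
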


\begin{lemma}[Armwise marginal embedding]\label{lem:armembed}
Fix a covariate value $x$ and write $P^{\mathrm{obs}}_{a,x}$ for the observed source-arm law of $Y$ given $(R,A,X)=(1,a,x)$.  For each arm $a\in\{0,1\}$, let $h_a(\cdot,x)$ be a measurable inverse-selection tilt satisfying
\[
        \ell_\Gam\{e_a(x)\}\le h_a(y,x)\le u_\Gam\{e_a(x)\},
        \qquad \int h_a(y,x)\,dP^{\mathrm{obs}}_{a,x}(y)=1,
\]
and define $F^S_a(\cdot\mid x)$ by $dF^S_a=h_a\,dP^{\mathrm{obs}}_{a,x}$.  Then there exists a conditional source law of $(Y^0,Y^1,A)$ given $X=x$ that reproduces the observed source law of $(A,Y)$ given $X=x$, has marginal potential-outcome laws $F^S_0(\cdot\mid x)$ and $F^S_1(\cdot\mid x)$, and satisfies the armwise marginal sensitivity restriction in \Cref{ass:gamma}.  The construction can be made with arbitrary conditional copulas for $(Y^0,Y^1)$ within the strata $A=0$ and $A=1$.

If, in addition, $F^T_0(\cdot\mid x)$ and $F^T_1(\cdot\mid x)$ are any arm-specific target kernels satisfying \Cref{ass:lambda} relative to $F^S_0$ and $F^S_1$, then they can be embedded in a common target potential-outcome law conditional on $X=x$ with an arbitrary copula.  If the input kernels and tilts are measurable in $x$, the embedded source and target laws can be chosen as Markov kernels in $x$.
\end{lemma}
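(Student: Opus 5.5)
The plan is a direct construction. We first build the joint source law of $(A,Y^0,Y^1)$ given $X=x$ by specifying its distribution in each treatment stratum, and then check that the three required properties hold. Fix $x$ and write $e_a=e_a(x)$ and $h_a=h_a(\cdot,x)$. In stratum $A=a$, the factual outcome $Y^a$ must have the observed law $P^{\mathrm{obs}}_{a,x}$. The counterfactual $Y^{1-a}$ in that stratum must have the law that makes the arm-$(1-a)$ marginal come out right. Its mass is $dF^S_{1-a}$ minus the stratum-$(1-a)$ contribution $e_{1-a}\,dP^{\mathrm{obs}}_{1-a,x}$, renormalized by $P(A=a\mid X=x)=e_a$. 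Explicitly,
\[
        dK_{1-a}(y)=\frac{dF^S_{1-a}(y)-e_{1-a}\,dP^{\mathrm{obs}}_{1-a,x}(y)}{e_a}
        =\frac{h_{1-a}(y)-e_{1-a}}{1-e_{1-a}}\,dP^{\mathrm{obs}}_{1-a,x}(y),
\]
using $e_a=1-e_{1-a}$. The joint law of $(Y^0,Y^1)$ within stratum $A=a$ is then any coupling of $P^{\mathrm{obs}}_{a,x}$ and $K_{1-a}$, for example their product or any copula. This is where the freedom to choose arbitrary conditional copulas comes from.

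The key step is to check that $K_{1-a}$ is a probability measure.

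1. Nonnegativity follows from the lower bound $h_{1-a}\ge\ell_\Gam(e_{1-a})=e_{1-a}+(1-e_{1-a})/\Gam\ge e_{1-a}$.
2. Total mass is $(1-e_{1-a})/(1-e_{1-a})=1$, by the normalization $\int h_{1-a}\,dP^{\mathrm{obs}}=1$.

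By construction, the law of $(A,Y)$ with $Y=Y^A$ reproduces the observed source law, since $P(A=a\mid x)=e_a$ and $Y^a\mid A=a\sim P^{\mathrm{obs}}_{a,x}$. The marginal of $Y^a$ is $e_a\,dP^{\mathrm{obs}}_{a,x}+(1-e_a)\,dK_a=h_a\,dP^{\mathrm{obs}}_{a,x}=dF^S_a$. The latent probability $P(A=a\mid Y^a=y,X=x)$ is the Radon–Nikodym derivative $e_a\,dP^{\mathrm{obs}}_{a,x}/dF^S_a=e_a/h_a=g_a$. By \Cref{lem:orientation}, the bounds $\ell_\Gam\le h_a\le u_\Gam$ are equivalent to the odds-ratio restriction of \Cref{ass:gamma}, holding $F^S_a$-a.e.

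One subtlety needs care: $h_a$ is only defined $P^{\mathrm{obs}}$-a.e. We must make sure $F^S_a\ll P^{\mathrm{obs}}_{a,x}$, which holds by definition, so the Radon–Nikodym identities hold $F^S_a$-a.e.

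For the target layer, given $F^T_0(\cdot\mid x)$ and $F^T_1(\cdot\mid x)$ satisfying \Cref{ass:lambda}, we take any coupling, such as the product measure or one given by an arbitrary copula $C_x$ via Sklar's theorem, as the joint target law of $(Y^0,Y^1)$ given $X=x$. The restrictions in \Cref{def:idset} and \Cref{ass:lambda} concern only the arm-specific marginals, so every coupling is admissible.

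Measurability in $x$ is the remaining and main technical obstacle. We would handle it by choosing product couplings, or copula couplings with a measurable copula family, so that everything is an explicit composition of Markov kernels. The kernel $K_{1-a}$ is $x$-measurable because it is defined by a density that is jointly measurable in $(y,x)$ against the jointly measurable kernel $P^{\mathrm{obs}}_{a,x}$; \Cref{ass:basic} ensures the latter is jointly Borel. The product of Markov kernels is again a Markov kernel, and sets where the construction fails (for example $h$ outside its bounds) are $x$-measurable and null, so we can redefine the kernels there. Standard Borel structure of $\calX$ and $\calY$ guarantees that all the required regular versions exist.
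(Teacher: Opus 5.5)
Your proposal is correct and is essentially the paper's own construction. It uses the counter-arm stratum law $\{(h_{1-a}-e_{1-a})/(1-e_{1-a})\}\,dP^{\mathrm{obs}}_{1-a,x}$ (the paper's $\widetilde F^{a}_{1-a}$) with nonnegativity from $h_{1-a}\ge\ell_\Gam(e_{1-a})\ge e_{1-a}$, the mixture identity for the marginal of $Y^a$, Bayes' rule giving $g_a=e_a/h_a$ checked via \Cref{lem:orientation}, arbitrary couplings within strata and in the target, and the same kernel-composition argument for measurability (which the paper phrases via the Ionescu--Tulcea theorem).
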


\section{Theory}\label{sec:theory}

This section contains the main identification and inference theory.  The first subsection proves that the joint sensitivity model has a closed-form sharp CDF envelope and that the envelope is sharp as an entire distribution process.  The second subsection derives the semiparametric score theory for estimating that process, separating the smooth active-set case from the nonsmooth case.  The third subsection turns the CDF bands into quantile and QTE inference and then treats the two-dimensional sensitivity frontier as a level-set object.

\subsection{Sharp identification: nested envelopes and process-level sharpness}\label{sec:identification}

The identification argument starts with a binary-event calculation.  This is the algebraic core behind both sensitivity layers.  The event is $\{Y\le y\}$; the tilt is either an inverse treatment-selection tilt or a source-to-target outcome-shift likelihood ratio.  Because the event is binary, the infinite-dimensional conditional optimization collapses to two affine inequalities, a simplification that underlies both marginal sensitivity CDF bounds and likelihood-ratio outcome-shift bounds \citep{MastenPoirierRen2025,AsiaeePalBeckHuling2026}.

\begin{lemma}[Binary-event likelihood-ratio calculus]\label{lem:eventlr}
Let $Z\in\{0,1\}$ with $P(Z=1)=p$.  For constants $0<\ell\le1\le u<\infty$, consider all nonnegative tilts $h$ such that $\ell\le h\le u$ and $E(h)=1$.  Then
\[
        \inf_h E(hZ)=\max\{\ell p,1-u(1-p)\},
        \qquad
        \sup_h E(hZ)=\min\{up,1-\ell(1-p)\}.
\]
Both extrema are attained.  Moreover, as functions of $p$, the lower and upper envelopes are distribution functions whenever $p$ is a distribution function.
\end{lemma}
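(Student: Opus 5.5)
The plan is to reduce the optimization over tilts to a two-variable linear program. For any admissible $h$, write $\alpha=E(hZ)$ and $\beta=E\{h(1-Z)\}$. The constraint $E(h)=1$ gives $\alpha+\beta=1$, and the pointwise bounds $\ell\le h\le u$ give
\[
\ell p\le\alpha\le up,\qquad \ell(1-p)\le\beta\le u(1-p).
\]
Substituting $\beta=1-\alpha$ yields $\alpha\ge\max\{\ell p,1-u(1-p)\}$ and $\alpha\le\min\{up,1-\ell(1-p)\}$, which are the claimed inequalities. The interval is nonempty: $\ell p\le up$, $1-u(1-p)\le 1-\ell(1-p)$, and $\ell p\le 1-\ell(1-p)$ because $\ell\le1$, while $1-u(1-p)\le up$ because $u\ge1$.

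For attainment, I restrict to tilts that are constant on $\{Z=1\}$ and on $\{Z=0\}$: $h=c_1$ on $Z=1$ and $h=c_0$ on $Z=0$, with $c_1p+c_0(1-p)=1$ and $c_0,c_1\in[\ell,u]$. For the infimum I take $c_1=\ell$ when $\ell p\ge 1-u(1-p)$, with $c_0=(1-\ell p)/(1-p)$. The case inequality makes $c_0\le u$, and $\ell\le1$ makes $c_0\ge1\ge\ell$. Otherwise I take $c_0=u$ and $c_1=(1-u(1-p))/p$, which lies in $[\ell,1]$ by the reversed case inequality and $u\ge1$. The supremum is symmetric: set $c_1=u$ or $c_0=\ell$. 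The degenerate cases $p\in\{0,1\}$ are handled directly with $h\equiv1$, where both formulas reduce to $p$.

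For the distribution-function claim, set $C^-(p)=\max\{\ell p,1-u(1-p)\}$ and $C^+(p)=\min\{up,1-\ell(1-p)\}$. Each is a max or min of two nondecreasing affine maps, hence nondecreasing and continuous on $[0,1]$. Both satisfy $C(0)=0$ and $C(1)=1$, since $\max\{0,1-u\}=0$, $\min\{0,1-\ell\}=0$, $\max\{\ell,1\}=1$ and $\min\{u,1\}=1$. Values stay in $[0,1]$ by monotonicity. Composing with a CDF $p(y)$ then preserves monotonicity, right-continuity (by continuity of $C^\pm$), and the limits $0$ and $1$ at $\mp\infty$.

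I expect the main obstacle to be only bookkeeping: checking in each case of the attainment step that the constructed $c_0,c_1$ lie in $[\ell,u]$.
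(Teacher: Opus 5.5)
Your proposal is correct and follows essentially the same route as the paper: it bounds $E(hZ)$ and $E\{h(1-Z)\}=1-E(hZ)$ using $\ell\le h\le u$, attains the endpoints with a tilt that is constant on $\{Z=1\}$ and on $\{Z=0\}$, and gets the CDF property from monotonicity and continuity of a max/min of nondecreasing affine maps. The only difference is cosmetic: the paper takes the two constants to be $m/p$ and $(1-m)/(1-p)$ for any $m$ in the feasible interval $[\ell p,up]\cap[1-u(1-p),1-\ell(1-p)]$, so they lie in $[\ell,u]$ automatically and your case analysis is not needed.
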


For QTEs, pointwise sharpness at each threshold is not enough.  The next lemma records the stronger pathwise fact used below: the same least-favorable tilt attains the whole lower or upper CDF envelope at once.  This is the distributional analogue of the threshold solutions used for mean bounds in likelihood-ratio sensitivity problems.

\begin{lemma}[Simultaneous CDF-path attainment under bounded tilts]\label{lem:pathlr}
Let $F$ be a CDF on $\Rr$, and let $0<\ell\le1\le u<\infty$.  Define
\[
        L_{\ell,u}(y)=\max\{\ell F(y),1-u(1-F(y))\},
        \qquad
        U_{\ell,u}(y)=\min\{uF(y),1-\ell(1-F(y))\}.
\]
Then $L_{\ell,u}$ and $U_{\ell,u}$ are CDFs.  There exist probability measures $F_L$ and $F_U$ such that $F_L=L_{\ell,u}$ and $F_U=U_{\ell,u}$, with Radon--Nikodym derivatives $dF_L/dF$ and $dF_U/dF$ lying in $[\ell,u]$ and integrating to one.  The derivatives can be chosen as threshold tilts, with at most one intermediate value on a threshold atom.
\end{lemma}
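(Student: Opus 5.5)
We construct the lower path explicitly as a threshold tilt and obtain the upper path by a reflection. For $L_{\ell,u}$, the least-favorable tilt puts weight $\ell$ on small outcomes and $u$ on large ones. Set $c=(u-1)/(u-\ell)\in[0,1]$ when $u>\ell$; the degenerate case $\ell=u=1$ is trivial, since then $L=U=F$. This $c$ is the mass level at which $\ell c+u(1-c)=1$, and it is the crossing point of the two affine pieces: $\ell F=1-u(1-F)$ exactly when $F=c$. Introduce an auxiliary uniform $V$ and the quantile representation $Y=F^{-1}(V)$. Define the tilt $\tilde h(v)=\ell\ind(v\le c)+u\ind(v>c)$ on $[0,1]$, and take $h(y)=E\{\tilde h(V)\mid Y=y\}$. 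This $h$ equals $\ell$ for $y<F^{-1}(c)$ and $u$ for $y>F^{-1}(c)$. On a possible atom at $y_c=F^{-1}(c)$ it takes the single averaged intermediate value, which lies in $[\ell,u]$. Measurability and the bounds $\ell\le h\le u$ are immediate, and $E_F h=\int_0^1\tilde h=\ell c+u(1-c)=1$. So $dF_L=h\,dF$ defines a probability measure.

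Next we verify $F_L(y)=L_{\ell,u}(y)$ for every $y$. Since $\{Y\le y\}=\{V\le F(y)\}$ almost surely, we get $F_L(y)=\int_0^{F(y)}\tilde h(v)\,dv$. If $F(y)\le c$ this equals $\ell F(y)$. If $F(y)>c$ it equals $\ell c+u(F(y)-c)=1-u(1-F(y))$, using $\ell c+u(1-c)=1$. It remains to check that this coincides with the maximum of the two pieces in each regime, which follows from the crossing at $c$ and the fact that the slopes satisfy $\ell\le u$. The upper path is symmetric: use $\tilde h(v)=u\ind(v\le c')+\ell\ind(v>c')$ with $c'=(1-\ell)/(u-\ell)$. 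The same integral identity gives $U_{\ell,u}(y)=\min\{uF(y),1-\ell(1-F(y))\}$. That $L$ and $U$ are CDFs then follows for free, because they are the CDFs of the probability measures $F_L$ and $F_U$. Alternatively, it can be checked directly: each is a max or min of nondecreasing right-continuous functions with the correct limits at $\pm\infty$. This is also consistent with \Cref{lem:eventlr}, since each threshold attains the pointwise extremum simultaneously.

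The main obstacle is the atom at the threshold. With a raw threshold in $y$, the tilt cannot split an atom of $F$ at $F^{-1}(c)$, and the identity $F_L(y)=L_{\ell,u}(y)$ at $y=y_c$ requires exactly the right intermediate value there. The uniform-randomization device resolves this. On the atom, the averaged tilt $h(y_c)$ is well defined as
\[
\bigl\{\ell(c-F(y_c-))+u(F(y_c)-c)\bigr\}/\{F(y_c)-F(y_c-)\}.
\]
The identity holds at all $y$ because each event $\{Y\le y\}$ is a union of full level sets of $Y$, so conditioning on $Y$ does not change its integral. This yields the claimed form of the tilt, with at most one intermediate value, located on the threshold atom.
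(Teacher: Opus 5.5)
Your proposal is correct and matches the paper's proof: the same threshold level $c=(u-1)/(u-\ell)$ (and $c'=(1-\ell)/(u-\ell)$ for the upper path), the same two-valued tilt, and the same intermediate value on the threshold atom. Using $\ell c+u(1-c)=1$, your atom value equals the paper's $\{1-\ell F(y_c-)-u(1-F(y_c))\}/\{F(y_c)-F(y_c-)\}$. The one difference is the verification device: the paper checks $\int_{(-\infty,y]}h\,dF=L_{\ell,u}(y)$ separately for $y<t_L$, $y=t_L$ and $y>t_L$ by direct calculation. Your quantile coupling $Y=F^{-1}(V)$ instead collapses this into the single identity $F_L(y)=\int_0^{F(y)}\tilde h(v)\,dv$, and it gives the CDF property of $L_{\ell,u}$ and $U_{\ell,u}$ for free.
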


\begin{lemma}[Measurable least-favorable kernels]\label{lem:measkernel}
Let $x\mapsto F_x$ be a Markov kernel on the compact interval $\calY$, and let $\ell(x),u(x)$ be measurable functions satisfying $0<\ell(x)\le1\le u(x)<\infty$.  Define
\[
        L_x(y)=\max\{\ell(x)F_x(y),1-u(x)(1-F_x(y))\},
        \qquad
        U_x(y)=\min\{u(x)F_x(y),1-\ell(x)(1-F_x(y))\}.
\]
Then $x\mapsto L_x$ and $x\mapsto U_x$ are Markov kernels.  Moreover, they can be generated from $F_x$ by Radon--Nikodym derivatives in $[\ell(x),u(x)]$ that are jointly measurable in $(x,y)$ after the standard threshold-atom convention.
\end{lemma}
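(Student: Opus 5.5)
The plan is to reduce everything to the explicit threshold construction of \Cref{lem:pathlr} and then check measurability in $x$ for each ingredient of that construction. Two properties make $x\mapsto L_x$ and $x\mapsto U_x$ Markov kernels.

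\emph{First}, for each fixed $x$, $L_x$ and $U_x$ are CDFs on $\calY$ by \Cref{lem:pathlr}.

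\emph{Second}, for each fixed $y$, the maps $x\mapsto L_x(y)$ and $x\mapsto U_x(y)$ are measurable. This holds because they are max/min of affine combinations of the measurable functions $\ell(x)$, $u(x)$ and $F_x(y)$.

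A family of CDFs that is measurable in $x$ for each $y$ determines measures $B\mapsto L_x(B)$ that are measurable in $x$. This follows by a $\pi$--$\lambda$ argument starting from half-lines $(-\infty,y]$, which form a $\pi$-system generating the Borel sets of $\calY$. So both maps are Markov kernels.

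For the density statement, I will write down the lower tilt explicitly. The lower envelope puts weight $u$ on low outcomes and $\ell$ on high outcomes. Define the switch level
\[
t^*(x)=\frac{u(x)-1}{u(x)-\ell(x)}\in[0,1],
\]
with the convention $t^*(x)=0$ when $u(x)=\ell(x)=1$. This is the level at which $\ell F=1-u(1-F)$, i.e.\ $F=(u-1)/(u-\ell)$. Let $y^*(x)=\inf\{y:F_x(y)\ge t^*(x)\}$ be the corresponding quantile. Since $F_x(y)$ is jointly measurable, the set $\{(x,y):F_x(y)\ge t^*(x)\}$ is measurable. Right-continuity in $y$ then shows $\{x:y^*(x)\le c\}=\{x:F_x(c)\ge t^*(x)\}$, so $y^*$ is measurable.

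The lower tilt is
\[
h_L(x,y)=u(x)\ind\{y<y^*(x)\}+\ell(x)\ind\{y>y^*(x)\}+\theta(x)\ind\{y=y^*(x)\}.
\]
The atom weight $\theta(x)\in[\ell(x),u(x)]$ is pinned down by the normalization $\int h_L\,dF_x=1$:
\[
\theta(x)=\frac{1-u(x)F_x(y^*(x)-)-\ell(x)\{1-F_x(y^*(x))\}}{F_x(\{y^*(x)\})}
\quad\text{when } F_x(\{y^*(x)\})>0.
\]
Set $\theta(x)=1$ otherwise; this is the ``standard threshold-atom convention.'' The upper tilt $h_U$ is the mirror image: weight $\ell$ below a threshold and $u$ above it, with switch level $t_*(x)=(1-\ell(x))/(u(x)-\ell(x))$.

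The remaining check is that these are measurable and give the right CDFs. Measurability of $x\mapsto F_x(y^*(x)-)$ and $x\mapsto F_x(\{y^*(x)\})$ follows by composing the jointly measurable map $(x,y)\mapsto F_x(y)$ with the measurable $y^*(x)$. For the left limit, write it as $\lim_k F_x(y^*(x)-1/k)$. Hence $\theta$, $h_L$ and $h_U$ are jointly measurable. That $\theta\in[\ell,u]$ and that $h_L\,dF_x$ has CDF $L_x$ is exactly the per-$x$ verification in \Cref{lem:pathlr}, which I can cite.

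The main obstacle is this last measurability step: the composition at a measurable random threshold and the handling of the atom. Joint measurability of $(x,y)\mapsto F_x(y)$ for a kernel on a compact interval has to be justified. I would obtain it from right-continuity in $y$ by writing $F_x(y)$ as a limit of functions that are piecewise constant over rational grids in $y$.
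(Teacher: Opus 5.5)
You follow the paper's route: per-$x$ CDFs from \Cref{lem:pathlr}, per-$y$ measurability in $x$, and a monotone-class extension give the kernel property, and then the threshold tilt of \Cref{lem:pathlr} with a measurable threshold and atom weight gives the densities. There is one genuine error. The explicit tilts you write down are oriented backwards, so the per-$x$ verification you cite from \Cref{lem:pathlr} does not apply to them.

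To make $\int_{(-\infty,y]}h\,dF_x$ as small as possible, the small weight $\ell$ must sit on low outcomes and $u$ on high ones. Weight $u$ below and $\ell$ above is the shape of the \emph{upper} tilt. Your $h_L$ pairs the lower switch level $t^*(x)=(u-1)/(u-\ell)$, which is correct for $L_x$, with the upper shape. Your $\theta$ is exactly the paper's upper-atom value $h_U(t_U)$. Symmetrically, your $h_U$ pairs $(1-\ell)/(u-\ell)$ with the lower shape.

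Such a tilt is not normalized in general. Take $F_x$ uniform on $[0,1]$, $\ell=1/2$, $u=2$. Then $t^*=2/3$, and your $h_L$ is $2$ on $[0,2/3)$ and $1/2$ on $(2/3,1]$. Its total mass is $ut^*+\ell(1-t^*)=u+\ell-1=3/2$. When $F_x$ has an atom at $y^*(x)$, a short computation shows your $\theta$ lies in $[\ell,u]$ if and only if $F_x(y^*-)\le(1-\ell)/(u-\ell)\le F_x(y^*)$. That is, the atom must straddle the \emph{other} switch level. When $u+\ell=2$ with $u>\ell$, the two levels coincide and your $h_L$ generates $U_x$ rather than $L_x$.

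The repair is to swap the weights. Take $h_L=\ell(x)$ on $\{y<y^*(x)\}$ and $u(x)$ on $\{y>y^*(x)\}$, with atom value $\{1-\ell F_x(y^*-)-u(1-F_x(y^*))\}/F_x(\{y^*\})$. Take $h_U=u(x)$ below $y_*(x)=\inf\{y:F_x(y)\ge(1-\ell)/(u-\ell)\}$ and $\ell(x)$ above, with atom value $\{1-uF_x(y_*-)-\ell(1-F_x(y_*))\}/F_x(\{y_*\})$. These are the tilts of \Cref{lem:pathlr}, and with them the rest of your argument goes through unchanged.

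Your measurability step is in fact more careful than the paper's, which simply asserts that $x\mapsto F_x(t(x))$ and $x\mapsto F_x(t(x)-)$ are measurable. Your three ingredients supply that justification:
\begin{itemize}
\item the identity $\{x:y^*(x)\le c\}=\{x:F_x(c)\ge t^*(x)\}$;
\item joint measurability of $(x,y)\mapsto F_x(y)$ via right-continuity and rational grids;
\item left limits written as $\lim_k F_x(y^*-1/k)$.
\end{itemize}
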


The following elementary coherence lemma is used to keep the notation honest.  It confirms that the nested maps are ordered and CDF-preserving, so that integration over the target covariate law produces genuine CDF envelopes rather than merely pointwise numerical bounds.

\begin{lemma}[Coherence of the nested CDF maps]\label{lem:coherence}
For every $s=(\Gam,\Lam)$ and $e\in(0,1)$, the maps $p\mapsto C^-_\Gam(p,e)$, $p\mapsto C^+_\Gam(p,e)$, $q\mapsto T^-_\Lam(q)$ and $q\mapsto T^+_\Lam(q)$ are nondecreasing continuous maps from $[0,1]$ to $[0,1]$, with $C^-_\Gam(0,e)=C^+_\Gam(0,e)=T^-_\Lam(0)=T^+_\Lam(0)=0$ and $C^-_\Gam(1,e)=C^+_\Gam(1,e)=T^-_\Lam(1)=T^+_\Lam(1)=1$.  Moreover
\[
        C^-_\Gam(p,e)\le C^+_\Gam(p,e),
        \qquad T^-_\Lam(q)\le T^+_\Lam(q),
\]
for all $p,q\in[0,1]$.  Consequently, if $y\mapsto p(y)$ is a CDF, then $T^-_\Lam\{C^-_\Gam(p(y),e)\}$ and $T^+_\Lam\{C^+_\Gam(p(y),e)\}$ are CDFs and are ordered.
\end{lemma}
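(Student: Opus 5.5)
The argument is elementary. We reduce everything to a single generic map, read off the required properties from its affine pieces, and then pass to compositions.

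\emph{Generic map.} For constants $0<\ell\le1\le u$, define
\[
L(p)=\max\{\ell p,\,1-u(1-p)\},\qquad U(p)=\min\{up,\,1-\ell(1-p)\}.
\]
We recover the four maps as special cases:
\begin{itemize}
\item $C^\mp_\Gam(\cdot,e)$ corresponds to $(\ell,u)=(\ell_\Gam(e),u_\Gam(e))$;
\item $T^\mp_\Lam$ corresponds to $(\ell,u)=(\Lam^{-1},\Lam)$.
\end{itemize}
For this to apply, we must check $0<\ell_\Gam(e)\le1\le u_\Gam(e)$. Here $\ell_\Gam(e)$ is a convex combination of $1$ and $\Gam^{-1}\in(0,1]$, and $u_\Gam(e)$ is a convex combination of $1$ and $\Gam\ge1$, both with weight $e\in(0,1)$.

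\emph{Monotonicity, continuity, range, endpoints.} Each of the four affine pieces has a nonnegative slope ($\ell$ or $u$). A max or min of continuous nondecreasing functions is again continuous and nondecreasing.

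At the endpoints:
\[
L(0)=\max\{0,1-u\}=0 \text{ since } u\ge1,\qquad L(1)=\max\{\ell,1\}=1 \text{ since } \ell\le1.
\]
Similarly
\[
U(0)=\min\{0,1-\ell\}=0,\qquad U(1)=\min\{u,1\}=1.
\]
Monotonicity then confines both maps to $[0,1]$.

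\emph{Ordering.} We show $L\le U$ by checking that each piece of $L$ lies below each piece of $U$.
\begin{itemize}
\item $\ell p\le up$, since $\ell\le u$ and $p\ge0$.
\item $\ell p\le 1-\ell(1-p)$, which reduces to $0\le1-\ell$.
\item $1-u(1-p)\le up$, which reduces to $1\le u$.
\item $1-u(1-p)\le 1-\ell(1-p)$, since $u\ge\ell$ and $1-p\ge0$.
\end{itemize}
Hence $\max\le\min$, i.e.\ $L\le U$.

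\emph{Consequence for CDFs.} Let $y\mapsto p(y)$ be a CDF. Composing it with a continuous nondecreasing map $\phi:[0,1]\to[0,1]$ satisfying $\phi(0)=0$ and $\phi(1)=1$ gives another CDF:
\begin{itemize}
\item monotonicity is preserved;
\item right-continuity follows from the continuity of $\phi$;
\item the limits at $\mp\infty$ become $\phi(0)=0$ and $\phi(1)=1$. For compactly supported $\calY$ the same holds with $p(\underline y^-)=0$ and $p(\bar y)=1$.
\end{itemize}
Compositions of such maps are again of this type. So $T^\mp_\Lam\circ C^\mp_\Gam(p(\cdot),e)$ are CDFs.

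For the ordering of the composed maps, use $C^-\le C^+$ together with the monotonicity of $T^-$:
\[
T^-(C^-)\le T^-(C^+).
\]
Then apply $T^-\le T^+$ pointwise:
\[
T^-(C^+)\le T^+(C^+).
\]
Chaining the two inequalities gives $T^-(C^-)\le T^+(C^+)$.

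No step is a real obstacle. The only point needing care is the pairing of the correct piece inequality with each of the conditions $\ell\le1$ and $u\ge1$ in the ordering and endpoint checks.
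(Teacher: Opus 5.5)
Your proof is correct and follows essentially the same route as the paper: monotonicity and continuity come from taking a max or min of nondecreasing affine pieces, the endpoints come from substitution, and the CDF claim comes from composing with continuous nondecreasing maps that fix $0$ and $1$. The only differences are cosmetic. The paper gets $C^-_\Gam\le C^+_\Gam$ and $T^-_\Lam\le T^+_\Lam$ by citing nonemptiness of the binary-event feasible interval in \Cref{lem:eventlr}, which amounts to your four piecewise inequalities, and your explicit chain $T^-_\Lam(C^-_\Gam)\le T^-_\Lam(C^+_\Gam)\le T^+_\Lam(C^+_\Gam)$ spells out the ordering of the composed maps that the paper states only in passing.
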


The next lemma applies the binary-event calculation to the source hidden-confounding layer.  It is the distributional analogue of the marginal sensitivity calculations used for average effects in \citet{Tan2006,DornGuo2023}, but here the object is a conditional CDF path.

\begin{lemma}[Sharp source CDF envelope under marginal sensitivity]\label{lem:sourceenv}
Under \Cref{ass:basic,ass:gamma}, for each $a$, $x$ and $y$,
\[
        F^S_a(y\mid x)\in
        \big[C^-_\Gam\{p_a(y,x),e_a(x)\},\;
             C^+_\Gam\{p_a(y,x),e_a(x)\}\big].
\]
The lower and upper functions of $y$ are CDFs and are attainable as entire conditional CDFs by admissible latent treatment propensities.
\end{lemma}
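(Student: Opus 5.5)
The plan is to reduce the statement to the earlier lemmas by way of the tilt representation in \Cref{lem:orientation}. Fix $a$ and $x$, and write $P^{\mathrm{obs}}_{a,x}$ for the observed arm-$a$ law, so that $p_a(y,x)=P^{\mathrm{obs}}_{a,x}(Y\le y)$. By \Cref{lem:orientation}, every source law compatible with \Cref{ass:gamma} has the form $dF^S_a(\cdot\mid x)=h\,dP^{\mathrm{obs}}_{a,x}$. The tilt $h$ is measurable and takes values in $[\ell_\Gam\{e_a(x)\},u_\Gam\{e_a(x)\}]$, and it integrates to one. The converse also holds: any such tilt yields an admissible latent probability $g=e_a(x)/h$.

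Before invoking \Cref{lem:eventlr} I will check its hypotheses, $0<\ell\le1\le u<\infty$. Because $e\in(0,1)$ by \Cref{ass:basic} and $\Gam\ge1$, we have $\ell_\Gam(e)=e+(1-e)/\Gam\in(0,1]$ and $u_\Gam(e)=e+\Gam(1-e)\in[1,\infty)$.

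For the pointwise bound, fix $y$ and set $Z=\ind(Y\le y)$ under $P^{\mathrm{obs}}_{a,x}$. Then
\[
F^S_a(y\mid x)=E^{\mathrm{obs}}_{a,x}(hZ).
\]
\Cref{lem:eventlr} gives this quantity the range $[\max\{\ell p,1-u(1-p)\},\min\{up,1-\ell(1-p)\}]$ with $p=p_a(y,x)$, and these endpoints are exactly $C^\mp_\Gam\{p_a(y,x),e_a(x)\}$. The inclusion in the statement follows.

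For process-level attainment, I apply \Cref{lem:pathlr} with $F=p_a(\cdot,x)$. This $F$ is a CDF on the compact interval $\calY$ by \Cref{ass:basic}, and I take $\ell=\ell_\Gam\{e_a(x)\}$ and $u=u_\Gam\{e_a(x)\}$. The lemma produces threshold tilts $h_L,h_U$ with values in $[\ell,u]$ and unit integral whose induced CDFs are, for every $y$, $L_{\ell,u}=C^-_\Gam\{p_a(\cdot,x),e_a(x)\}$ and $U_{\ell,u}=C^+_\Gam\{p_a(\cdot,x),e_a(x)\}$. That the two envelopes are CDFs in $y$ follows either from \Cref{lem:pathlr} or from \Cref{lem:coherence}. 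Applying the converse direction of \Cref{lem:orientation} then gives the latent propensities $g_L=e_a(x)/h_L$ and $g_U=e_a(x)/h_U$. These are well defined and lie in $(0,1]$: since $h\ge\ell_\Gam(e)\ge e$, we get $g\le1$. They satisfy the odds-ratio restriction of \Cref{ass:gamma}.

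The step I expect to be the main obstacle is making the attainment hold simultaneously over $x$ with measurable choices, and consistently across both arms. For measurability in $x$, I will invoke \Cref{lem:measkernel} with the Markov kernel $x\mapsto P^{\mathrm{obs}}_{a,x}$ and the measurable bounds $\ell_\Gam\{e_a(x)\}$ and $u_\Gam\{e_a(x)\}$. This yields jointly measurable tilts, and hence latent propensities $g_a(y,x)$ that are jointly measurable in $(y,x)$. For the two arms, \Cref{lem:armembed} ensures that the choices for $a=0$ and $a=1$ can be combined into a single conditional source law of $(Y^0,Y^1,A)$ that reproduces the observed law. This works because the restriction is marginal and imposes no cross-arm copula constraint.

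One case needs separate care: atoms of $p_a(\cdot,x)$, where the threshold tilt takes an intermediate value on the threshold atom. There I will rely on the convention in \Cref{lem:pathlr}, under which the tilt still lies in $[\ell,u]$, so $g$ remains admissible.
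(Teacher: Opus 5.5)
Your proposal is correct and follows the paper's proof essentially step for step. You use the tilt representation from \Cref{lem:orientation}, the pointwise bounds from \Cref{lem:eventlr} applied to $\{Y\le y\}$, whole-path attainment with measurable threshold tilts from \Cref{lem:pathlr,lem:measkernel}, and the converse of \Cref{lem:orientation} to turn those tilts into admissible latent propensities. Your appeal to \Cref{lem:armembed} is not needed for this armwise statement; the paper uses it only for the cross-arm QTE corollary. One small sharpening: $h\ge\ell_\Gam(e)>e$ strictly when $e<1$, so $g<1$ and the odds in \Cref{ass:gamma} are finite.
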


The following lemma is the external-validity analogue.  It mirrors the threshold likelihood-ratio structure in \citet{AsiaeePalBeckHuling2026}, but the target is a binary threshold event rather than a conditional mean.

\begin{lemma}[Sharp target CDF envelope under outcome shift]\label{lem:targetenv}
Let $F$ be any source conditional distribution and let $H$ range over all target conditional distributions satisfying $dH/dF\in[\Lam^{-1},\Lam]$.  Then for every $y$,
\[
        H(y)\in[T^-_\Lam\{F(y)\},T^+_\Lam\{F(y)\}].
\]
The two bounding functions $T^-_\Lam\circ F$ and $T^+_\Lam\circ F$ are CDFs and are attainable by threshold likelihood-ratio tilts with respect to $F$.
\end{lemma}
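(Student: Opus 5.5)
The plan is to reduce \Cref{lem:targetenv} entirely to the two earlier bounded-tilt lemmas, with $(\ell,u)=(\Lam^{-1},\Lam)$. Since $\Lam\ge1$, we have $0<\Lam^{-1}\le1\le\Lam<\infty$, so the hypotheses of \Cref{lem:eventlr} and \Cref{lem:pathlr} hold. With this choice the maps $T^-_\Lam$ and $T^+_\Lam$ coincide with the envelope maps $p\mapsto\max\{\ell p,1-u(1-p)\}$ and $p\mapsto\min\{up,1-\ell(1-p)\}$ of those lemmas.

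\emph{Step 1 (pointwise bound).} Fix $y$ and any admissible $H$, and write $h=dH/dF$. By assumption $h\in[\Lam^{-1},\Lam]$ $F$-a.e., and $\int h\,dF=H(\Rr)=1$. Set $Z=\ind(Y\le y)$ under $Y\sim F$, so that $P(Z=1)=F(y)$ and $H(y)=E_F(hZ)$. Then $h$ is one of the tilts over which \Cref{lem:eventlr} optimizes. Hence
\[
T^-_\Lam\{F(y)\}=\inf_{h'}E(h'Z)\le H(y)\le\sup_{h'}E(h'Z)=T^+_\Lam\{F(y)\}.
\]
One small point needs care. \Cref{lem:eventlr} is phrased for tilts on the law of $Z$, whereas here $h$ is a tilt on the law of $Y$. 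To bridge this, I would either note that the lemma's proof only uses $E(h)=1$ and $\ell\le h\le u$ on the underlying space, or condition on $Z$: the function $\tilde h(z)=E_F(h\mid Z=z)$ inherits the bounds and the normalization, and satisfies $E(\tilde hZ)=E(hZ)$.

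\emph{Step 2 (CDF property and attainment).} \Cref{lem:coherence}, with $F$ a CDF, shows that $T^\pm_\Lam\circ F$ are CDFs. Alternatively, this follows directly from \Cref{lem:pathlr}, since $T^\pm_\Lam\circ F=L_{\ell,u},U_{\ell,u}$. For attainment, \Cref{lem:pathlr} supplies probability measures $F_L,F_U$ whose derivatives $dF_L/dF$ and $dF_U/dF$ lie in $[\Lam^{-1},\Lam]$ and have threshold form. These measures are therefore admissible choices of $H$, and they realize the two bounding CDFs at every $y$ simultaneously. This gives sharpness as entire paths, not merely pointwise.

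\emph{Main obstacle.} The only delicate point is the attainment claim when $F$ has atoms at the threshold. There the tilt must take an intermediate value on a single atom, so that normalization holds exactly. This is already handled inside \Cref{lem:pathlr}. I would therefore only verify that the threshold tilt it produces indeed lies in $[\Lam^{-1},\Lam]$. Explicitly, for the lower envelope the tilt takes the value $\Lam^{-1}$ below a quantile cut and $\Lam$ above it, with the cut chosen so that $\int h\,dF=1$. Everything else is a direct specialization of the earlier lemmas.
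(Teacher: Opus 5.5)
Your proposal is correct and follows essentially the same route as the paper: specialize \Cref{lem:eventlr} to $(\ell,u)=(\Lam^{-1},\Lam)$ for the pointwise bounds, and use \Cref{lem:pathlr} for whole-path attainment by threshold tilts (your conditioning argument, which bridges tilts on $Y$ and tilts on $Z$, makes explicit a step the paper leaves implicit). The paper's proof also invokes \Cref{lem:measkernel} for measurability in $x$, and uses monotonicity of $T^\pm_\Lam$ to compose with the source envelope; neither step is needed for the lemma as stated for a single $F$.
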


The two previous lemmas can be composed because the transport map is monotone in the source potential-outcome CDF.  The proposition also records a useful non-collapse fact: the joint model is not equivalent, in general, to a single product likelihood-ratio relaxation relative to the observed source arm.

\begin{proposition}[Nested closed form and non-collapse]\label{prop:noncollapse}
For fixed $(a,x,y)$, the sharp lower and upper bounds on $F^T_a(y\mid x)$ under the joint sensitivity model are
\[
        b^-_{a,s}(y,x)=T^-_\Lam\!\big(C^-_\Gam(p_a(y,x),e_a(x))\big),
        \qquad
        b^+_{a,s}(y,x)=T^+_\Lam\!\big(C^+_\Gam(p_a(y,x),e_a(x))\big).
\]
Let the single product relaxation replace the two normalized tilts by one normalized tilt $k$ relative to the observed source-arm law with bounds
\[
        \ell_\Gam(e_a(x))/\Lam\le k\le u_\Gam(e_a(x))\Lam,
        \qquad E\{k(Y)\mid R=1,A=a,X=x\}=1.
\]
Its CDF bounds are
\[
        B^-_{prod}=\max\{\ell_\Gam(e)p/\Lam,1-u_\Gam(e)\Lam(1-p)\},
        \quad
        B^+_{prod}=\min\{u_\Gam(e)\Lam p,1-\ell_\Gam(e)(1-p)/\Lam\}.
\]
The nested model is never looser: $b^-_{a,s}(y,x)\ge B^-_{prod}$ and $b^+_{a,s}(y,x)\le B^+_{prod}$.  The inequalities can be strict.  For example, with $e_a(x)=0.1$, $\Gam=2$, $\Lam=1.5$ and $p_a(y,x)=0.7$, the nested lower bound is $0.286\overline 6$, while the single product-tilt lower bound is $0.256\overline 6$.
\end{proposition}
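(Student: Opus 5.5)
The plan splits the proposition into four parts: validity of the nested bounds, their sharpness, the closed form of the product relaxation, and the comparison between the two. I argue only the lower endpoint; the upper endpoint is symmetric, with $C^+_\Gam$, $T^+_\Lam$ and suprema in place of $C^-_\Gam$, $T^-_\Lam$ and infima.

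\emph{Validity.} Fix $(a,x)$ and any collection satisfying \Cref{ass:basic,ass:gamma,ass:lambda}.
\begin{enumerate}
\item By \Cref{lem:sourceenv}, the source law satisfies $F^S_a(y\mid x)\ge C^-_\Gam\{p_a(y,x),e_a(x)\}$ for every $y$.
\item Apply \Cref{lem:targetenv} with $F=F^S_a(\cdot\mid x)$ and $H=F^T_a(\cdot\mid x)$. This gives $F^T_a(y\mid x)\ge T^-_\Lam\{F^S_a(y\mid x)\}$.
\item By \Cref{lem:coherence}, $T^-_\Lam$ is nondecreasing. Combining with the first step yields $F^T_a(y\mid x)\ge T^-_\Lam\{C^-_\Gam(p_a,e_a)\}=b^-_{a,s}(y,x)$.
\end{enumerate}
Monotonicity of the outer map is exactly what lets the inner and outer bounds compose without interaction.

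\emph{Sharpness.} I expect this to be the main obstacle, because the inner and outer optimizations must be attained jointly by one admissible law. The plan is a two-stage construction.
\begin{enumerate}
\item Use \Cref{lem:sourceenv}, with \Cref{lem:pathlr} for the threshold form, to choose one inverse-selection tilt $h^-$. Its induced source law $F^{S,-}$ has CDF equal to $C^-_\Gam\{p_a(\cdot,x),e_a(x)\}$ simultaneously for all $y$. By \Cref{lem:orientation}, the latent probability $g=e_a/h^-$ is admissible.
\item Apply \Cref{lem:targetenv} (again via \Cref{lem:pathlr}) relative to $F^{S,-}$. This gives a likelihood ratio $r^-\in[\Lam^{-1},\Lam]$ whose target CDF is $T^-_\Lam\circ F^{S,-}=b^-_{a,s}(\cdot,x)$.
\item Pathwise attainment in the first step matters: because $F^{S,-}(y)$ equals $C^-_\Gam$ at every threshold, the second stage hits the bound at every $y$ at once. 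A pointwise-attaining source law would not suffice.
\item \Cref{lem:armembed} embeds the arm-specific kernels into joint source and target laws with an arbitrary copula. \Cref{lem:measkernel} makes these Markov kernels in $x$, with the usual convention for threshold atoms.
\end{enumerate}

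\emph{Product relaxation.} Apply \Cref{lem:eventlr} with $Z=\ind(Y\le y)$ under the observed arm law, $\ell=\ell_\Gam(e)/\Lam$ and $u=u_\Gam(e)\Lam$. The lemma's hypothesis $\ell\le1\le u$ holds because $\ell_\Gam\le1\le u_\Gam$ and $\Lam\ge1$. This yields the displayed formulas for $B^-_{prod}$ and $B^+_{prod}$ directly.

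\emph{Comparison.} I will avoid comparing the max/min formulas algebraically and instead use a feasible-set inclusion. Any nested pair $(h,r)$ gives $dF^T_a=r\,h\,dP^{\mathrm{obs}}_{a,x}$. The composite $k=rh$ lies in $[\ell_\Gam/\Lam,\;u_\Gam\Lam]$, and $E\{k(Y)\mid R=1,A=a,X=x\}=1$ because $F^T_a$ is a probability measure. So every nested target law is feasible for the product relaxation. Taking infima and suprema of $F^T_a(y\mid x)$ over the two feasible sets gives $b^-\ge B^-_{prod}$ and $b^+\le B^+_{prod}$.

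\emph{Strictness.} Evaluate the example directly. With $e=0.1$ and $\Gam=2$, we get $\ell_\Gam=0.55$ and $u_\Gam=1.9$.
\begin{itemize}
\item Nested: at $p=0.7$, $C^-_\Gam=\max\{0.385,0.43\}=0.43$. Then $T^-_{1.5}(0.43)=\max\{0.28\overline6,\,0.145\}=0.28\overline6$.
\item Product: $B^-_{prod}=\max\{0.385/1.5,\;1-2.85\cdot0.3\}=\max\{0.25\overline6,\,0.145\}=0.25\overline6$.
\end{itemize}
The gap arises because the product relaxation drops the intermediate normalization $\int h\,dP^{\mathrm{obs}}_{a,x}=1$.
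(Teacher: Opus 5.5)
Your proposal is correct. The validity and sharpness parts follow the paper's proof essentially step for step: the source envelope from \Cref{lem:sourceenv}, the outcome-shift envelope from \Cref{lem:targetenv}, monotonicity of $T^\pm_\Lam$ to compose them, and attainment by chaining the source threshold tilt with the target threshold tilt. Your use of \Cref{lem:armembed}, and of \Cref{lem:eventlr} with $\ell=\ell_\Gam(e)/\Lam$ and $u=u_\Gam(e)\Lam$ to derive $B^\pm_{prod}$, fills in details that the paper's proof only asserts.

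The one genuinely different step is the non-collapse inequality. The paper argues algebraically. Since $C^-_\Gam(p,e)$ dominates each of its pieces $\ell_\Gam(e)p$ and $1-u_\Gam(e)(1-p)$, the piece $C^-_\Gam/\Lam$ dominates $\ell_\Gam(e)p/\Lam$. Likewise the piece $1-\Lam(1-C^-_\Gam)$ dominates $1-\Lam u_\Gam(e)(1-p)$. These are exactly the two pieces of $B^-_{prod}$.

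You argue instead by feasible-set inclusion through the composite tilt $k=rh$. That route explains the source of the slack, namely the dropped intermediate normalization $\int h\,dP^{\mathrm{obs}}_{a,x}=1$. It also carries over unchanged to any event or any functional of the target law. However, it leans on your sharpness step: you need $b^-_{a,s}(y,x)$ to be the infimum over the nested feasible set, and $B^-_{prod}$ to be a valid lower bound over the product set.

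The paper's comparison is a statement about the closed-form numbers alone, so it is logically independent of attainment. The algebraic route also shows where strictness comes from. Expanding the composition gives
$T^-_\Lam\{C^-_\Gam\}=\max\{\ell_\Gam p/\Lam,\;1-\Lam u_\Gam(1-p),\;\{1-u_\Gam(1-p)\}/\Lam,\;1-\Lam(1-\ell_\Gam p)\}$.
This is the product bound plus two cross pieces, and in the example the cross piece $\{1-u_\Gam(1-p)\}/\Lam=0.43/1.5$ is the active one. Your numerical evaluation of the example matches the paper's.
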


The preceding formulas can also be audited by a finite-support convex program.  This is useful because it checks that the closed form solves the actual two-layer constrained problem rather than a heuristic relaxation.

\begin{proposition}[Finite-support linear-program audit]\label{prop:lpcheck}
Fix $(a,x)$ and suppose the observed source-arm distribution has finite support $y_1,\ldots,y_K$ with probabilities $r_j>0$.  Let $E_y=\{j:y_j\le y\}$ and $p=\sum_{j\in E_y}r_j$.  Consider the linear programs
\[
\begin{aligned}
        \underline b(y)&=\min_{q,t} \sum_{j\in E_y}t_j,\qquad
        \overline b(y)=\max_{q,t} \sum_{j\in E_y}t_j,\\
        &\text{subject to } \sum_j q_j=\sum_jt_j=1,\quad
        \ell_\Gam(e)r_j\le q_j\le u_\Gam(e)r_j,\quad
        \Lam^{-1}q_j\le t_j\le \Lam q_j .
\end{aligned}
\]
Then
\[
        \underline b(y)=T^-_\Lam\{C^-_\Gam(p,e)\},
        \qquad
        \overline b(y)=T^+_\Lam\{C^+_\Gam(p,e)\}.
\]
Thus the closed form is exactly the value of the finite-dimensional convex problem induced by the two normalized sensitivity layers.
\end{proposition}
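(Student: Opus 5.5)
The plan is to prove the two inequalities $\underline b(y)\ge T^-_\Lam\{C^-_\Gam(p,e)\}$ and $\underline b(y)\le T^-_\Lam\{C^-_\Gam(p,e)\}$ separately, and then obtain the upper value $\overline b(y)$ by symmetry. The main device is a reduction to the single aggregated variable $Q=\sum_{j\in E_y}q_j$, which represents the source potential-outcome mass on $\{Y\le y\}$.

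\textbf{Lower bound.} Take any feasible $(q,t)$.
\begin{itemize}
\item The first layer is the binary-event calculus of \Cref{lem:eventlr} with $Z=\ind(Y\le y)$ under the law $r$ and tilt $h_j=q_j/r_j\in[\ell_\Gam(e),u_\Gam(e)]$ with $\sum_j r_jh_j=1$. It gives $Q\in[C^-_\Gam(p,e),C^+_\Gam(p,e)]$.
\item The second layer applies \Cref{lem:eventlr} again, now relative to the probability vector $q$. The tilt is $t_j/q_j\in[\Lam^{-1},\Lam]$ (set $t_j=0$ when $q_j=0$; the constraint forces this). Since $\sum_j t_j=1$, it gives $\sum_{j\in E_y}t_j\ge T^-_\Lam(Q)$.
\item By \Cref{lem:coherence}, $T^-_\Lam$ is nondecreasing. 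Hence $\sum_{j\in E_y}t_j\ge T^-_\Lam(Q)\ge T^-_\Lam\{C^-_\Gam(p,e)\}$.
\end{itemize}
Note that $\ell_\Gam(e)>0$ forces $q_j>0$ for all $j$, so the second tilt is well defined.

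\textbf{Attainment.} To show the bound is achieved, I will construct a feasible pair explicitly.
\begin{itemize}
\item By \Cref{lem:eventlr} there is $q^*$ attaining $Q=C^-_\Gam(p,e)$. Concretely, put the lowest admissible tilt on $E_y$ and the highest on its complement, or the reverse on the binding branch, scaling uniformly within each block and keeping $\sum_j q_j^*=1$.
\item Applying \Cref{lem:eventlr} to $q^*$ with bounds $[\Lam^{-1},\Lam]$ gives $t^*$ attaining $T^-_\Lam(Q)$.
\end{itemize}
Within each of the two blocks $E_y$ and $E_y^c$ the tilt can be taken constant. The attaining tilts in \Cref{lem:eventlr} are block-constant (a threshold structure with no atom splitting, since the event is binary), so the per-coordinate box constraints hold. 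This shows the LP value equals the closed form.

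\textbf{Upper value and main obstacle.} The argument for $\overline b(y)$ is identical, using the supremum in \Cref{lem:eventlr} together with monotonicity of $T^+_\Lam$. The step needing care is confirming that the extremizers in \Cref{lem:eventlr} can be realized coordinatewise on a finite support with each coordinate inside its box. I would verify this directly. On the branch $\ell p\ge 1-u(1-p)$, set $h=\ell$ on $E_y$ and $h=(1-\ell p)/(1-p)$ off it. This value lies in $[\ell,u]$ exactly because of the branch inequality and $\ell\le1$. The other branch is symmetric. The degenerate cases $p\in\{0,1\}$ are trivial.
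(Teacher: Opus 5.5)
Your proposal is correct and follows essentially the same route as the paper's proof. You apply \Cref{lem:eventlr} to the source layer to confine $Q=\sum_{j\in E_y}q_j$ to $[C^-_\Gam(p,e),C^+_\Gam(p,e)]$, apply it again relative to $q$ to bound $\sum_{j\in E_y}t_j$ by $T^\mp_\Lam(Q)$, and then use monotonicity of $T^\mp_\Lam$ to optimize over $Q$. Your explicit two-block construction, including the branch check $(1-\ell p)/(1-p)\le u \iff \ell p\ge 1-u(1-p)$, spells out the block-constant attainment that the paper leaves implicit in the proof of \Cref{lem:eventlr}.
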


The next theorem is the main identification result.  It strengthens pointwise sharpness to process-level sharpness, which is necessary before the CDF envelope can be inverted to produce sharp quantile bounds.

\begin{theorem}[Sharp target CDF-bound process]\label{thm:cdfsharp}
Suppose \Cref{ass:basic,ass:gamma,ass:lambda} hold for $s=(\Gam,\Lam)$.  Then every target arm-$a$ CDF in the joint sensitivity identified set satisfies
\[
        \psi^-_{a,s}(y)\le F^T_a(y)\le \psi^+_{a,s}(y)
        \qquad\text{for all }y\in\calY.
\]
The functions $\psi^-_{a,s}$ and $\psi^+_{a,s}$ are CDFs on $\calY$, and each is attainable as an entire target counterfactual CDF by a data-generating process satisfying the joint sensitivity model and agreeing with the observed law.  Thus the displayed envelope is process-level sharp.
\end{theorem}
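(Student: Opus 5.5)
The argument has two parts: a pointwise validity bound obtained conditionally on $X$ and then integrated, and an attainment construction that builds least-favorable kernels for the whole path and assembles them into a data-generating process.

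\textbf{Validity.} Fix an element of the identified set. By \Cref{def:idset} it is generated by kernels $F^S_a(\cdot\mid x)$ and $F^T_a(\cdot\mid x)$ satisfying \Cref{ass:gamma,ass:lambda}.

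For $P^X_1$-a.e. $x$, \Cref{lem:sourceenv} gives
\[
C^-_\Gam\{p_a(y,x),e_a(x)\}\le F^S_a(y\mid x)\le C^+_\Gam\{p_a(y,x),e_a(x)\}.
\]
Since $P^X_0\ll P^X_1$, this also holds $P^X_0$-a.e. Next, \Cref{lem:targetenv} applied with $F=F^S_a(\cdot\mid x)$ gives
\[
T^-_\Lam\{F^S_a(y\mid x)\}\le F^T_a(y\mid x)\le T^+_\Lam\{F^S_a(y\mid x)\}.
\]
By \Cref{lem:coherence}, $T^\pm_\Lam$ are nondecreasing, so we may substitute the source bounds into the target bounds. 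This yields $b^-_{a,s}(y,x)\le F^T_a(y\mid x)\le b^+_{a,s}(y,x)$ for every $y$ and $P^X_0$-a.e. $x$.

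Integrating against $P^X_0$ gives the displayed inequalities. To make the integral well defined, note that $b^\pm_{a,s}$ are continuous compositions of the jointly measurable $(p_a,e_a)$ from \Cref{ass:basic}, and they take values in $[0,1]$.

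\textbf{The envelopes are CDFs.} By \Cref{lem:coherence}, for each $x$ the map $y\mapsto b^\pm_{a,s}(y,x)$ is nondecreasing and right-continuous. Right-continuity holds because a continuous nondecreasing map composed with the right-continuous $p_a$ is right-continuous. Each map also equals $1$ at $\bar y$, since $p_a(\bar y,x)=1$ and the maps fix $1$.

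Monotone and dominated convergence then transfer monotonicity, right-continuity and the value $1$ at $\bar y$ to $\psi^\pm_{a,s}$. Hence $\psi^\pm_{a,s}$ are CDFs on $\calY$.

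\textbf{Attainment.} I treat the lower envelope; the upper one is symmetric.

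\emph{First layer.} Apply \Cref{lem:measkernel} with the kernel $x\mapsto P^{\mathrm{obs}}_{a,x}$ and bounds $\ell(x)=\ell_\Gam\{e_a(x)\}$, $u(x)=u_\Gam\{e_a(x)\}$. This yields a jointly measurable threshold tilt $h_a$ with values in $[\ell(x),u(x)]$ and unit mean, whose induced law $F^{S,*}_a(\cdot\mid x)$ has CDF $C^-_\Gam\{p_a(\cdot,x),e_a(x)\}$. By \Cref{lem:orientation}, this tilt corresponds to an admissible latent propensity $g_a=e_a/h_a$.

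\emph{Second layer.} Apply \Cref{lem:measkernel} again, now with the kernel $F^{S,*}_a(\cdot\mid x)$ and constant bounds $\Lam^{-1}$ and $\Lam$. This yields a target kernel $F^{T,*}_a(\cdot\mid x)$ with CDF $T^-_\Lam\circ C^-_\Gam$. Its density ratio lies in $[\Lam^{-1},\Lam]$, so \Cref{ass:lambda} holds.

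\emph{The other arm.} For arm $1-a$, pick any admissible tilt, for example $h\equiv1$ with target equal to source.

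\emph{Assembly.} \Cref{lem:armembed} embeds the two arms into joint source and target laws of $(Y^0,Y^1,A)$ given $X$. The embedding uses an arbitrary copula, is measurable in $x$, and reproduces the observed law of $(R,X,A,Y)$. Take the covariate laws $P^X_r$ and $\pi_r$ as observed. The resulting process satisfies \Cref{ass:basic,ass:gamma,ass:lambda}, and its target arm-$a$ CDF is $\int b^-_{a,s}(y,x)\,dP^X_0(x)=\psi^-_{a,s}(y)$ for all $y$ simultaneously.

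\textbf{Main obstacle.} The delicate step is the measurability and simultaneity of the two-layer construction. A single tilt must attain the envelope at every $y$ at once; this is the threshold-tilt content of \Cref{lem:pathlr}. The second layer must also be applied to a kernel that was itself produced by the first layer. I would handle this by invoking \Cref{lem:measkernel} twice, which requires checking that $x\mapsto F^{S,*}_a(\cdot\mid x)$ is a Markov kernel; the first application guarantees exactly that. Finally, the atom convention and the choice of $P^X_1$-a.e. versions must be compatible with $P^X_0$. This is ensured by $P^X_0\ll P^X_1$ together with the bounds on $\omega$ in \Cref{ass:basic}.
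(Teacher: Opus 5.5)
Your proposal is correct and follows the paper's own argument: conditional validity from \Cref{lem:sourceenv,lem:targetenv} combined through the monotonicity of $T^\pm_\Lam$ and integrated over $P^X_0$, the CDF property from \Cref{lem:coherence}, and process-level attainment by chaining the measurable threshold kernels of \Cref{lem:measkernel} through both sensitivity layers. You are more explicit than the paper about applying \Cref{lem:measkernel} twice, choosing an admissible tilt for the other arm, and assembling the full data-generating process via \Cref{lem:armembed}; for that last step only $P^X_0\ll P^X_1$ is needed, and the bounds on $\omega$ play no role.
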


The QTE contrast requires simultaneous armwise attainability.  The next corollary uses \Cref{lem:armembed}, which shows that the armwise marginal sensitivity model imposes restrictions on the two marginal potential-outcome laws but does not impose an additional cross-arm copula restriction.  A joint sensitivity model that constrained $P(A=1\mid Y^0,Y^1,X,R=1)$ would be a different, generally narrower, model; the present paper follows the marginal armwise convention used for QTE bounds under marginal relaxations of unconfoundedness.

The generalized inverse reverses stochastic order: the lower quantile bound is obtained by inverting the upper CDF envelope, and the upper quantile bound by inverting the lower CDF envelope.  The QTE endpoint then combines opposite arm-specific quantile endpoints.

\begin{corollary}[Sharp target quantile and QTE interval hulls]\label{cor:qte}
Under \Cref{ass:basic,ass:gamma,ass:lambda}, the sharp lower and upper endpoints for the target arm-$a$ quantile are
\[
        q^-_{a,s}(\tau)=\inf\{y:\psi^+_{a,s}(y)\ge\tau\},
        \qquad
        q^+_{a,s}(\tau)=\inf\{y:\psi^-_{a,s}(y)\ge\tau\}.
\]
The sharp interval hull of the target QTE identified set is
\[
        \operatorname{hull}\mathfrak I_{\Delta}(s,\tau)
        =\big[\Delta^-_s(\tau),\Delta^+_s(\tau)\big],
\]
where
\[
        \Delta^-_s(\tau)=q^-_{1,s}(\tau)-q^+_{0,s}(\tau),
        \qquad
        \Delta^+_s(\tau)=q^+_{1,s}(\tau)-q^-_{0,s}(\tau).
\]
The two endpoints are sharp: each is attained by an admissible joint law under the armwise model.  If, in addition, the attainable scalar QTE set is connected, then the displayed interval equals the exact scalar identified set.
\end{corollary}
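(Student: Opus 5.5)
The plan is to reduce everything to \Cref{thm:cdfsharp} and to the fact that the generalized inverse is antitone in the CDF argument. For validity, take any target arm-$a$ CDF $F^T_a$ in the joint identified set. By \Cref{thm:cdfsharp}, $\psi^-_{a,s}\le F^T_a\le\psi^+_{a,s}$ pointwise on $\calY$. If $F\le G$ pointwise, then $\{y:G(y)\ge\tau\}\supseteq\{y:F(y)\ge\tau\}$, so $\inf\{y:G(y)\ge\tau\}\le\inf\{y:F(y)\ge\tau\}$. Applying this twice gives $q^-_{a,s}(\tau)\le Q^T_a(\tau)\le q^+_{a,s}(\tau)$. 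Because $\calY$ is compact and each envelope is a genuine CDF on $\calY$ by \Cref{lem:coherence}, all infima are attained in $\calY$ and are finite for $\tau\in(0,1)$. Subtracting the bounds for the two arms then gives $\Delta^-_s(\tau)\le Q^T_1(\tau)-Q^T_0(\tau)\le\Delta^+_s(\tau)$ for every admissible joint law.

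For sharpness of the quantile endpoints, I use the process-level attainment in \Cref{thm:cdfsharp}. There is an admissible data-generating process whose entire target CDF equals $\psi^+_{a,s}$, so its quantile is exactly $q^-_{a,s}(\tau)$. Likewise, a process whose target CDF equals $\psi^-_{a,s}$ has quantile $q^+_{a,s}(\tau)$. Pointwise attainment at each $y$ would not suffice here: the quantile depends on the whole CDF path, and this is exactly why the process-level version is needed.

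For QTE sharpness I must realize, in a single joint law, arm $1$ at $\psi^-_{1,s}$ and arm $0$ at $\psi^+_{0,s}$ to attain $\Delta^+_s$, and the reverse pairing to attain $\Delta^-_s$. This is the main obstacle. The attaining constructions in \Cref{thm:cdfsharp} are built armwise, through the threshold tilts of \Cref{lem:sourceenv,lem:targetenv} made measurable in $x$ by \Cref{lem:measkernel}. For each $x$, I choose the arm-$1$ source tilt $h_1$ and target tilt producing the lower envelope, and the arm-$0$ tilts producing the upper envelope; these choices are independent across arms. \Cref{lem:armembed} then supplies a conditional source law of $(Y^0,Y^1,A)$ that reproduces the observed law of $(A,Y)$ given $X=x$, has these two marginals, and satisfies \Cref{ass:gamma}. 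It also supplies a target law with the chosen arm-specific target kernels under an arbitrary copula, jointly measurable in $x$. Integrating over $P^X_0$ gives target marginal CDFs $\psi^-_{1,s}$ and $\psi^+_{0,s}$, so the QTE equals $q^+_{1,s}(\tau)-q^-_{0,s}(\tau)=\Delta^+_s(\tau)$. The lower endpoint follows symmetrically. I need to check two things carefully: that the observed law is reproduced for both arms simultaneously, which holds because each arm's constraint involves only $P^{\mathrm{obs}}_{a,x}$ and $e_a(x)$, and that the QTE depends only on the two marginal target laws, which it does.

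Since both endpoints are attained and every attainable value lies between them, $\operatorname{hull}\mathfrak I_\Delta(s,\tau)=[\Delta^-_s(\tau),\Delta^+_s(\tau)]$. If $\mathfrak I_\Delta(s,\tau)$ is connected, it is an interval in $\Rr$ containing both of its endpoints, hence equal to its hull. This last step is immediate.
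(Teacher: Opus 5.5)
Your proposal is correct and follows essentially the same route as the paper. Both arguments use the antitonicity of the generalized inverse together with the envelope of \Cref{thm:cdfsharp}, then process-level attainment of each arm's endpoint CDF, and finally \Cref{lem:armembed} to couple the opposite-arm least-favorable kernels in one admissible joint law. You also make explicit two details the paper leaves implicit: the quantile endpoints are finite and attained on the compact support, and connectedness gives the exact set in one line.
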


\begin{remark}[Interval hull versus exact scalar set]\label{rem:intervalhull}
The distinction between a sharp interval hull and an exact scalar identified set matters when the outcome support has gaps or mass points.  The CDF envelopes in \Cref{thm:cdfsharp} are process-level sharp and their inverses give non-improvable scalar endpoints.  However, without additional connectedness or continuity conditions, the set of attainable quantile values between the endpoints need not contain every real number in the interval.  For example, suppose there are no covariates, $\Gam=1$, $\Lam=2$, arm 1 has source mass $1/2$ at $0$ and $1/2$ at $2$, and arm 0 is degenerate at $1$.  At $\tau=1/2$, the target mass at $0$ in arm 1 may range from $1/4$ to $3/4$, so the exact arm-1 median set is $\{0,2\}$ and the exact QTE set is $\{-1,1\}$, while the sharp interval hull is $[-1,1]$.  The inferential procedures below cover the sharp endpoint interval.  Statements about exact null inclusion require the connectedness condition stated separately below.
\end{remark}

\begin{assumption}[Optional connectedness for exact scalar QTE sets]\label{ass:connected}
For the sensitivity values and quantile indices under consideration, the attainable scalar QTE set $\mathfrak I_{\Delta}(s,\tau)$ is connected and contains its sharp lower and upper endpoints.  A sufficient primitive condition is that the least-favorable arm-specific CDFs can be connected by admissible mixture paths whose relevant $\tau$-quantiles are unique and vary continuously along the paths, and that the two treatment arms can be varied independently by \Cref{lem:armembed}.
\end{assumption}

\begin{proposition}[A primitive sufficient condition for connectedness]\label{prop:connectedprimitive}
Fix $(s,\tau)$.  For each arm $a$ and $P^X_0$-almost every $x$, suppose there exist two admissible dominated constructions that attain the lower and upper arm-specific quantile endpoints, with source tilts $h_{a,0}(\cdot,x),h_{a,1}(\cdot,x)$ relative to $P^{\mathrm{obs}}_{a,x}$ and transport tilts $r_{a,0}(\cdot,x),r_{a,1}(\cdot,x)$ relative to the corresponding source laws.  For $\rho\in[0,1]$, define
\[
        h_{a,\rho}=\rho h_{a,1}+(1-\rho)h_{a,0},
        \qquad
        r_{a,\rho}=\frac{\rho r_{a,1}h_{a,1}+(1-\rho)r_{a,0}h_{a,0}}
        {\rho h_{a,1}+(1-\rho)h_{a,0}}.
\]
The denominator is strictly positive because $h_{a,j}\ge \ell_\Gam(e_a)>0$.  Assume that the resulting target CDFs $F^T_{a,\rho}$ have unique $\tau$-quantiles and that $\rho\mapsto Q_{a,\rho}(\tau)$ is continuous for $a=0,1$.  Then \Cref{ass:connected} holds at $(s,\tau)$: the exact scalar QTE set is the sharp interval hull $[\Delta^-_s(\tau),\Delta^+_s(\tau)]$.
\end{proposition}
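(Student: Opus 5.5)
The plan is to show that every value between the two sharp endpoints is attained by some admissible joint law, using a continuous path of admissible data-generating processes joining the least-favorable constructions; the intermediate value theorem then does the rest.

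\emph{Step 1 (admissibility of the mixture path).} Fix arm $a$ and $x$. The map $h_{a,\rho}$ is a convex combination of two tilts, each in $[\ell_\Gam(e_a(x)),u_\Gam(e_a(x))]$ with $P^{\mathrm{obs}}_{a,x}$-integral one. Hence $h_{a,\rho}$ satisfies both displayed conditions of \Cref{lem:orientation} and defines an admissible source law $F^S_{a,\rho}$. The transport tilt $r_{a,\rho}$ is, pointwise in $y$, a weighted average of $r_{a,0}$ and $r_{a,1}$ with nonnegative weights $(1-\rho)h_{a,0}$ and $\rho h_{a,1}$, so it lies in $[\Lam^{-1},\Lam]$. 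Moreover
\[
r_{a,\rho}h_{a,\rho}=\rho r_{a,1}h_{a,1}+(1-\rho)r_{a,0}h_{a,0},
\]
so $dF^T_{a,\rho}=r_{a,\rho}\,dF^S_{a,\rho}=\rho\,dF^T_{a,1}+(1-\rho)\,dF^T_{a,0}$ integrates to one. This is the key algebraic point: the particular definition of $r_{a,\rho}$ makes the target law a genuine mixture while keeping both layers admissible, so \Cref{ass:gamma,ass:lambda} hold along the whole path. Measurability in $x$ is inherited from the endpoint constructions, invoking \Cref{lem:measkernel} if versions need adjusting. Integrating over $P^X_0$, the marginal target CDF is $F^T_{a,\rho}=\rho F^T_{a,1}+(1-\rho)F^T_{a,0}$. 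Its endpoints have $\tau$-quantiles equal to $q^{\mp}_{a,s}(\tau)$ by hypothesis, provided I index so that $\rho=0$ gives the lower endpoint.

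\emph{Step 2 (two-arm parametrization).} For $(\rho_0,\rho_1)\in[0,1]^2$, use \Cref{lem:armembed} to embed the arm-specific source and target kernels into one joint law with an arbitrary copula. This gives an admissible joint data-generating process with QTE
\[
D(\rho_0,\rho_1)=Q_{1,\rho_1}(\tau)-Q_{0,\rho_0}(\tau).
\]
By the assumed continuity of $\rho\mapsto Q_{a,\rho}(\tau)$ (uniqueness of the quantile makes it well defined), $D$ is continuous on the connected set $[0,1]^2$. It attains $\Delta^-_s(\tau)$ at $(\rho_0,\rho_1)=(1,0)$ and $\Delta^+_s(\tau)$ at $(0,1)$, by \Cref{cor:qte} and the endpoint-attainment hypothesis.

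\emph{Step 3 (conclusion).} By the intermediate value theorem applied along the segment from $(1,0)$ to $(0,1)$, every value in $[\Delta^-_s(\tau),\Delta^+_s(\tau)]$ is attained. So $\mathfrak I_\Delta(s,\tau)$ contains the interval. \Cref{cor:qte} gives the reverse inclusion into the hull, so the exact set equals the hull, and \Cref{ass:connected} holds.

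The main obstacle is Step 1. I need to verify carefully that the nonlinear-looking $r_{a,\rho}$ stays in $[\Lam^{-1},\Lam]$, which holds by the convex-weight argument since both $h_{a,j}>0$. I also need the identification of the target CDF along the path with the linear mixture, and joint measurability of the mixed kernels in $(x,y)$. If the endpoint constructions are only defined $P^X_0$-a.e., I would fix jointly measurable versions first.
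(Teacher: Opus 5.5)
Your proposal is correct and follows the paper's proof essentially step for step. Both use convexity of the source constraint for $h_{a,\rho}$, the weighted-average argument placing $r_{a,\rho}$ in $[\Lam^{-1},\Lam]$, and the normalization of $r_{a,\rho}h_{a,\rho}$, which you sharpen to the observation that the target law along the path is the linear mixture $\rho F^T_{a,1}+(1-\rho)F^T_{a,0}$. Both then vary the two arms independently via \Cref{lem:armembed} and take the reverse inclusion from \Cref{cor:qte}. The only cosmetic difference is at the end: you apply the intermediate value theorem along the segment from $(1,0)$ to $(0,1)$, whereas the paper argues that the continuous image of the connected square $[0,1]^2$ is an interval containing both endpoints.
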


\begin{corollary}[Exact QTE set under connectedness]\label{cor:exactconnected}
Under \Cref{ass:connected}, $\mathfrak I_{\Delta}(s,\tau)=[\Delta^-_s(\tau),\Delta^+_s(\tau)]$.  Consequently exact scalar non-refutation of the null value zero is equivalent to interval-hull non-refutation, $0\in\mathfrak I_{\Delta}(s,\tau)$ if and only if $\Delta^-_s(\tau)\le0\le\Delta^+_s(\tau)$.
\end{corollary}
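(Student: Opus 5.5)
The plan is to read \Cref{ass:connected} literally, verify its two clauses, and then finish with elementary interval reasoning.

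First, the endpoints. By \Cref{cor:qte}, each endpoint $\Delta^-_s(\tau)$ and $\Delta^+_s(\tau)$ is attained by an admissible joint law. So both endpoints belong to $\mathfrak I_{\Delta}(s,\tau)$, which is the second clause of \Cref{ass:connected}. The same corollary gives the hull identity $\operatorname{hull}\mathfrak I_{\Delta}(s,\tau)=[\Delta^-_s(\tau),\Delta^+_s(\tau)]$. Hence every attainable value lies in this interval:
\[
        \{\Delta^-_s(\tau),\Delta^+_s(\tau)\}\subseteq\mathfrak I_{\Delta}(s,\tau)\subseteq[\Delta^-_s(\tau),\Delta^+_s(\tau)].
\]

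Second, connectedness, which is the first clause of \Cref{ass:connected}. A connected subset of $\Rr$ is an interval. An interval that contains both endpoints of its own hull equals that hull. This gives $\mathfrak I_{\Delta}(s,\tau)=[\Delta^-_s(\tau),\Delta^+_s(\tau)]$.

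The only point needing care is the role of the sufficient condition in \Cref{prop:connectedprimitive}. The corollary is stated under the bare assumption, so connectedness is simply invoked and no mixture-path argument is needed here. I will note this explicitly, because the assumption as worded does include endpoint containment. This makes the endpoint step from \Cref{cor:qte} redundant but harmless, and I will keep it for robustness.

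For the equivalence, since the set equals the closed interval, $0\in\mathfrak I_{\Delta}(s,\tau)$ if and only if $\Delta^-_s(\tau)\le 0\le\Delta^+_s(\tau)$. This is exactly interval-hull non-refutation. I will also record its restatement through the quantity defined in the introduction, namely $\kappa^{\mathrm{hull}}_s(\tau)\ge0$.

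I expect no real obstacle. The substantive content was already discharged in \Cref{cor:qte} and, for primitive conditions, in \Cref{prop:connectedprimitive}.
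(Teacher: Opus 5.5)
Your proposal is correct and follows the paper's own proof essentially verbatim. Like the paper, you use \Cref{cor:qte} to sandwich $\mathfrak I_{\Delta}(s,\tau)$ between its two attained endpoints and the closed interval $[\Delta^-_s(\tau),\Delta^+_s(\tau)]$, then use that a connected subset of $\Rr$ containing two points contains the whole interval between them, after which the equivalence for zero is immediate.
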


\begin{corollary}[Reductions]\label{cor:reductions}
If $\Gam=1$, then $C^-_\Gam(p,e)=C^+_\Gam(p,e)=p$, and the bounds reduce to pure outcome-shift transport bounds.  If $\Lam=1$, then $T^-_\Lam(q)=T^+_\Lam(q)=q$, and the bounds reduce to source hidden-confounding CDF bounds standardized to the target covariate distribution.  If $(\Gam,\Lam)=(1,1)$, then $\psi^-_{a,s}=\psi^+_{a,s}=E\{p_a(y,X)\mid R=0\}$, the usual point-identified transported counterfactual CDF.
\end{corollary}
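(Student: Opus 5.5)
The plan is to verify each of the three claims in \Cref{cor:reductions} directly from the primitive formulas, and then to pass them through the definitions of $b^\pm_{a,s}$ and $\psi^\pm_{a,s}$.

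\emph{The case $\Gam=1$.} Here $\ell_1(e)=e+(1-e)=1$ and $u_1(e)=e+(1-e)=1$. So
\[
C^-_1(p,e)=\max\{p,\,1-(1-p)\}=p,
\qquad
C^+_1(p,e)=\min\{p,\,1-(1-p)\}=p .
\]
Consequently $b^\pm_{a,s}(y,x)=T^\pm_\Lam\{p_a(y,x)\}$. By \Cref{lem:targetenv}, this is exactly the sharp outcome-shift envelope around the observed arm CDF $p_a$. In this case the source potential-outcome law coincides with the observed source-arm law, by \Cref{lem:orientation} with $h\equiv1$. The marginal envelope $\psi^\pm_{a,s}$ is then the pure-transport envelope standardized over $P^X_0$.

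\emph{The case $\Lam=1$.} Here
\[
T^-_1(q)=\max\{q,\,1-(1-q)\}=q,
\qquad
T^+_1(q)=q .
\]
So $b^\pm_{a,s}=C^\pm_\Gam(p_a,e_a)$, which is the source hidden-confounding envelope of \Cref{lem:sourceenv}. Integrating over $P^X_0$ then gives the standardized version. I would also note the model-level interpretation: $\Lam=1$ forces $dF^T_a/dF^S_a=1$, i.e.\ conditional transportability. Hence the reduced envelope is consistent with the sharpness in \Cref{thm:cdfsharp}.

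\emph{The case $(\Gam,\Lam)=(1,1)$.} Combining the two cases gives $b^-_{a,s}=b^+_{a,s}=p_a(y,x)$. Therefore
\[
\psi^\pm_{a,s}(y)=E\{p_a(y,X)\mid R=0\},
\]
which is the usual standardization formula for the transported counterfactual CDF.

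There is no genuine obstacle; the whole argument is arithmetic on the primitive maps. The only point needing care is interpretive. The statement that the bounds ``reduce to'' the single-layer bounds should be read at the level of the closed forms, with sharpness inherited from \Cref{lem:sourceenv,lem:targetenv} respectively, rather than re-proving sharpness.
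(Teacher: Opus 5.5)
Your proposal is correct and follows the paper's proof, which is the same substitution: $\Gam=1$ gives $\ell_\Gam(e)=u_\Gam(e)=1$ and hence $C^\pm_\Gam(p,e)=p$, $\Lam=1$ gives $T^\pm_\Lam(q)=q$, and combining the two identities yields all three claims. Your extra remarks are accurate but not needed for the proof: they link the reduced forms to \Cref{lem:sourceenv,lem:targetenv} and note that the tilts are forced to be $h\equiv1$ and $dF^T_a/dF^S_a\equiv1$.
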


\subsection{Semiparametric estimation of the CDF-bound process}\label{sec:semiparametric}

The sharp envelopes above are functions of the observed conditional CDF $p_a(y,x)$, the source treatment propensity $e_a(x)$, and the target covariate distribution.  This subsection derives the canonical gradient for fixed active sets and then builds a cross-fitted one-step estimator, following the standard semiparametric influence-function calculus \citep{Tsiatis2006,vdV2000}.  The distinction between a randomized source and an observational source enters only through whether $e_a$ is known by design or must be estimated from the source data.

\subsubsection{Active-set notation}

The envelope maps are piecewise affine.  Efficient influence functions are ordinary derivatives only away from ties between active affine pieces.  The following notation records which affine piece is active in each layer.

Let $g^-_{1}(p,e)=\ell_\Gam(e)p$, $g^-_{2}(p,e)=1-u_\Gam(e)(1-p)$, $g^+_{1}(p,e)=u_\Gam(e)p$ and $g^+_{2}(p,e)=1-\ell_\Gam(e)(1-p)$.  Let $t^-_1(q)=\Lam^{-1}q$, $t^-_2(q)=1-\Lam(1-q)$, $t^+_1(q)=\Lam q$ and $t^+_2(q)=1-\Lam^{-1}(1-q)$.  Define active index sets
\begin{align*}
        I^-_C(p,e)&=\argmax_{j\in\{1,2\}}g^-_j(p,e),
        &I^+_C(p,e)&=\argmin_{j\in\{1,2\}}g^+_j(p,e),\\
        I^-_T(q)&=\argmax_{j\in\{1,2\}}t^-_j(q),
        &I^+_T(q)&=\argmin_{j\in\{1,2\}}t^+_j(q).
\end{align*}
At points where both active sets are singletons, $G^-_s$ and $G^+_s$ are differentiable.  We denote their derivatives by
\[
        G^{\sigma}_{p,s}(p,e)=\frac{\partial G^\sigma_s(p,e)}{\partial p},
        \qquad
        G^{\sigma}_{e,s}(p,e)=\frac{\partial G^\sigma_s(p,e)}{\partial e},
        \qquad \sigma\in\{-,+\}.
\]
The derivatives of the primitive pieces are
\begin{align*}
        \ell'_\Gam(e)&=1-\Gam^{-1},        &u'_\Gam(e)&=1-\Gam,\\
        \partial_p g^-_1&=\ell_\Gam(e),     &\partial_e g^-_1&=\ell'_\Gam(e)p,\\
        \partial_p g^-_2&=u_\Gam(e),        &\partial_e g^-_2&=-u'_\Gam(e)(1-p),\\
        \partial_p g^+_1&=u_\Gam(e),        &\partial_e g^+_1&=u'_\Gam(e)p,\\
        \partial_p g^+_2&=\ell_\Gam(e),     &\partial_e g^+_2&=-\ell'_\Gam(e)(1-p),\\
        (t^-_1)'&=\Lam^{-1}, &(t^-_2)'&=\Lam,\\
        (t^+_1)'&=\Lam, &(t^+_2)'&=\Lam^{-1}.
\end{align*}

\begin{assumption}[Active-set regularity on a restricted index set]\label{ass:active}
Let $\mathcal I_{\mathrm{reg}}$ be the index set used for a Gaussian approximation.  For every $(a,\sigma,y,s)\in\mathcal I_{\mathrm{reg}}$, the active sets $I_C^\sigma\{p_a(y,x),e_a(x)\}$ and $I_T^\sigma\{C^\sigma_\Gam(p_a(y,x),e_a(x))\}$ are singletons for $P^X_0$-almost every $x$.  Their active margins are bounded away from zero uniformly over $(a,\sigma,y,s,x)$ on this set.  Equivalently, the following switch surfaces are avoided uniformly:
\[
        p_a(y,x)\ne \frac{\Gam}{\Gam+1}\quad(C^-_\Gam),
        \qquad
        p_a(y,x)\ne \frac{1}{\Gam+1}\quad(C^+_\Gam),
\]
\[
        C^-_\Gam\{p_a(y,x),e_a(x)\}\ne \frac{\Lam}{\Lam+1}\quad(T^-_\Lam),
        \qquad
        C^+_\Gam\{p_a(y,x),e_a(x)\}\ne \frac{1}{\Lam+1}\quad(T^+_\Lam).
\]
Thus active-set Gaussian theory is a regular-index result.  In particular, points with $\Gam=1$ or $\Lam=1$ have tied affine pieces and are handled by the nonsmooth theory unless the maps are simplified before differentiation.  For full CDF-index inference, where a continuous CDF will typically cross one of these surfaces, inference is based on the directional/subsampling route in \Cref{thm:hdd,cor:subsampling}.
\end{assumption}

\subsubsection{Efficient influence functions}

The canonical gradient has two sources of variation: the target empirical distribution of $X$ and the source conditional law used to estimate $p_a$ and, in observational studies, $e_a$.  The source conditional contribution is multiplied by the covariate density ratio so that it is averaged over the target covariate law.

Let $\chi=1$ when $e_a$ is an unknown source propensity and $\chi=0$ when the treatment probability is known by design.  For $\sigma\in\{-,+\}$ define
\begin{align*}
        \zeta^{\sigma}_{a,s,y}(O)
        & =G^{\sigma}_{p,s}\{p_a(y,X),e_a(X)\}
          \frac{\ind(A=a)}{e_a(X)}\{\ind(Y\le y)-p_a(y,X)\}\\
        &\quad +\chi\,
          G^{\sigma}_{e,s}\{p_a(y,X),e_a(X)\}
          \{\ind(A=a)-e_a(X)\}.
\end{align*}
This source contribution is defined only on $R=1$ observations; its conditional mean given $(R=1,X)$ is zero.

\begin{assumption}[Treatment mechanism model]\label{ass:treatmech}
One of the following two models is used.  In the known-design source model, $e_a(x)$ is known by design and $\chi=0$.  In the observational-source model, $e_a(x)$ is an unknown nuisance function estimated from the source sample and $\chi=1$.  No other part of the score changes between the two models.
\end{assumption}

\begin{lemma}[Mixture-sampling tangent decomposition]\label{lem:tangent}
Consider a regular parametric path $P_t$ through $P$ in the observed-data model, with score $S(O)$.  At an active-set regular law, the derivative of $\psi^\sigma_{a,s}(y)$ along the path decomposes into target-covariate, source-outcome, and source-treatment components as
\[
\begin{aligned}
\left.\frac{d}{dt}\psi^\sigma_{a,s,t}(y)\right|_{t=0}
&=E\!\left[\frac{\ind(R=0)}{\pi_0}\{b^\sigma_{a,s}(y,X)-\psi^\sigma_{a,s}(y)\}S(O)\right]\\
&\quad+E\!\left[\frac{\ind(R=1)}{\pi_1}\omega(X)\zeta^\sigma_{a,s,y}(O)S(O)\right].
\end{aligned}
\]
Scores that perturb only the source covariate marginal law or only the sampling probability $\pi_1$ contribute zero because the source residual $\zeta^\sigma_{a,s,y}$ has conditional mean zero given $(R,X)=(1,x)$ and the target term has mean zero given $R=0$.
\end{lemma}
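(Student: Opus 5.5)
We write $\psi^\sigma_{a,s}(y)=\int b^\sigma_{a,s}(y,x)\,dP^X_0(x)$ and differentiate along a regular path $P_t$ with score $S$. The observed-data likelihood factors as
\[
p(R)\,p(X\mid R)\,\{p(A\mid X,R=1)\,p(Y\mid A,X,R=1)\}^{R},
\]
so the score splits orthogonally as $S=S_R+S_{X\mid R}+R\,S_{A\mid X}+R\,S_{Y\mid A,X}$. Each component has conditional mean zero given its conditioning variables. By the product rule,
\[
\frac{d}{dt}\psi^\sigma_{a,s,t}(y)=\int b^\sigma_{a,s}(y,x)\,\frac{d}{dt}dP^X_{0,t}(x)+\int \frac{d}{dt}b^\sigma_{a,s,t}(y,x)\,dP^X_0(x).
\]
I will identify each term with the corresponding part of the claimed representation.

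The first term equals $E_0[\{b^\sigma_{a,s}(y,X)-\psi^\sigma_{a,s}(y)\}S_{X\mid R}(O)]$. Centering is allowed because $S_{X\mid R}$ has mean zero given $R=0$. Converting $E_0$ to $E[\ind(R=0)\pi_0^{-1}\cdots]$ gives the first display. We may replace $S_{X\mid R}$ by the full score $S$, because the centered factor is orthogonal to the other components:
\begin{itemize}
\item it is orthogonal to $S_R$ since it has mean zero given $R=0$;
\item it is orthogonal to the source components because it is supported on $R=0$.
\end{itemize}
This also gives the claim that perturbing $\pi_1$ contributes nothing.

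For the second term, the active-set regularity in \Cref{ass:active} makes $G^\sigma_s$ locally affine in $(p,e)$ with fixed active pieces. Its chain-rule derivatives $G^\sigma_{p,s}$ and $G^\sigma_{e,s}$ come from the displayed primitive derivatives. The uniform margin plus bounded derivatives, and the overlap in \Cref{ass:basic}, justify differentiating under the integral by dominated convergence. Because the active piece is constant for small $t$ except on a vanishing set, we get
\[
\frac{d}{dt}b^\sigma_t=G^\sigma_{p,s}\,\dot p_a(y,x)+\chi\,G^\sigma_{e,s}\,\dot e_a(x).
\]
The usual conditional-score identities are
\[
\dot p_a(y,x)=E_1[\{\ind(Y\le y)-p_a(y,X)\}S_{Y\mid A,X}\mid A=a,X=x]=E_1\!\left[\frac{\ind(A=a)}{e_a(X)}\{\ind(Y\le y)-p_a\}S\,\Big|\,X=x\right],
\]
\[
\dot e_a(x)=E_1[\{\ind(A=a)-e_a(X)\}S\mid X=x].
\]
In both, replacing the component score by $S$ is legitimate by the same mean-zero orthogonality (the residuals have conditional mean zero given $(R=1,X)$ and given $(R=1,A,X)$ respectively). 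Integrating against $dP^X_0=\omega\,dP^X_1$ turns $\int(\cdot)\,dP^X_0$ into $E_1[\omega(X)\zeta^\sigma_{a,s,y}S]=E[\ind(R=1)\pi_1^{-1}\omega\,\zeta^\sigma S]$, which is the second display. When $e_a$ is known by design, $S_{A\mid X}$ is restricted to zero in the model; this justifies dropping the propensity term, matching $\chi=0$ under \Cref{ass:treatmech}. The source-covariate-marginal scores contribute zero because $E_1[\omega\zeta^\sigma\mid X]=0$.

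The main obstacle is interchanging $d/dt$ with the integral in the presence of the kinks of $G^\sigma_s$. I would handle it with the uniform active-margin bound in \Cref{ass:active}. Quadratic-mean differentiability gives $\sup_x|p_{a,t}-p_a|+|e_{a,t}-e_a|\to0$ in $L_2(P^X_1)$, so the set where the active piece switches has vanishing $P^X_0$-mass. On that set $G^\sigma_s$ is Lipschitz, so its contribution is $o(t)$; off it, the piece is affine and the expansion is exact. Bounded $\omega$ from \Cref{ass:basic} transfers these statements between $P^X_1$ and $P^X_0$.
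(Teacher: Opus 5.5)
Your proposal is correct and follows essentially the same route as the paper: factor the mixture likelihood, differentiate the target covariate law with centering, apply the chain rule on the active branch using the conditional-score identities for $\dot p_a$ and $\dot e_a$, and convert $\int\cdot\,dP^X_0$ into $E_1\{\omega\,\cdot\}$; your justification of the derivative--integral interchange is extra detail the paper leaves implicit. One small correction to that added step: on a fixed active branch $G^\sigma_s$ is bilinear rather than affine in $(p,e)$ (e.g.\ $\ell_\Gam(e)p=p/\Gam+(1-\Gam^{-1})pe$), so the branchwise expansion is not exact, but the cross term is bounded by $\|p_{a,t}(y,\cdot)-p_a(y,\cdot)\|_2\,\|e_{a,t}-e_a\|_2=O(t^2)$, and this bound, like your switching-set bound, requires these $L_2$ norms to be $O(t)$ rather than merely $o(1)$.
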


\begin{theorem}[Canonical gradient for regular CDF endpoints]\label{thm:eif}
Suppose \Cref{ass:basic,ass:active,ass:treatmech} holds.  In the nonparametric observed-data model with unknown source conditional outcome law, unknown source propensity when $\chi=1$, and unknown target covariate law, the canonical gradient of $\psi^\sigma_{a,s}(y)$ is
\[
        \phi^{\sigma}_{a,s,y}(O)
        =\frac{\ind(R=0)}{\pi_0}\{b^\sigma_{a,s}(y,X)-\psi^\sigma_{a,s}(y)\}
        +\frac{\ind(R=1)}{\pi_1}\omega(X)\zeta^{\sigma}_{a,s,y}(O).
\]
Consequently the semiparametric efficiency bound for estimating $\psi^\sigma_{a,s}(y)$ is
\[
        \Var\{\phi^{\sigma}_{a,s,y}(O)\}.
\]
\end{theorem}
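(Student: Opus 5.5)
The proof follows the usual pathwise-derivative route. We exhibit $\phi^\sigma_{a,s,y}$ as a gradient: for every regular parametric submodel $P_t$ with score $S$, we show $\frac{d}{dt}\psi^\sigma_{a,s,t}(y)|_{t=0}=E\{\phi^\sigma_{a,s,y}(O)S(O)\}$. We then check that $\phi^\sigma_{a,s,y}$ lies in the tangent space, which makes it the canonical gradient. The efficiency bound is then its variance by the convolution theorem \citep{vdV2000}.

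\textbf{Gradient property.} First factor the observed-data density as
\[
p(O)=\pi_R\cdot p^X_R(X)\cdot\{e_A(X)\,p(Y\mid A,X,R=1)\}^{R}.
\]
Under this factorization every score splits orthogonally as
\[
S=S_R+S_{X\mid R}+R\,S_{A\mid X}+R\,S_{Y\mid A,X}.
\]
Write $\psi^\sigma_{a,s}(y)=\int G^\sigma_s\{p_a(y,x),e_a(x)\}\,dP^X_0(x)$ and differentiate along $t$. There are three contributions.
\begin{enumerate}
\item Perturbing $P^X_0$ gives $E_0[\{b^\sigma_{a,s}(y,X)-\psi^\sigma_{a,s}(y)\}S_{X\mid R}]$. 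This is the target term $E[\ind(R=0)\pi_0^{-1}\{b-\psi\}S]$.
\item Perturbing the source outcome law gives $\int G^\sigma_{p,s}\,\dot p_{a,t}(y,x)\,dP^X_0(x)$. We use $\dot p_{a}(y,x)=E[\{\ind(Y\le y)-p_a\}S_{Y\mid A,X}\mid A=a,X=x,R=1]$. Changing measure from $P^X_0$ to $P^X_1$ with $\omega$, and rewriting the conditioning on $A=a$ with the factor $\ind(A=a)/e_a$, this becomes $E[\ind(R=1)\pi_1^{-1}\omega\, G_p\,\ind(A=a)e_a^{-1}\{\ind(Y\le y)-p_a\}S]$.
\item When $\chi=1$, perturbing the propensity gives $\int G^\sigma_{e,s}\dot e_a\,dP^X_0$. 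We use $\dot e_a=E[\{\ind(A=a)-e_a\}S_{A\mid X}\mid X,R=1]$, and the same change of measure applies.
\end{enumerate}
This is precisely \Cref{lem:tangent}, which we may invoke directly. Because the source residual has conditional mean zero, we can replace the component scores by the full score $S$.

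\textbf{Interchange and chain rule.} Differentiating under the integral requires two facts.
\begin{itemize}
\item Chain rule. \Cref{ass:active} gives singleton active sets with margins bounded away from zero, uniformly in $x$. So for small $t$ the perturbed $(p_{a,t},e_{a,t})$ stay on the same affine pieces, at least when the paths have bounded scores, which suffice since such paths are dense. On those pieces $G^\sigma_s$ is a product of affine maps, so the chain rule applies with the listed derivatives $G_p$ and $G_e$.
\item Dominated convergence. The derivatives are bounded by $\Gam\Lam$ and by $(\Gam-1)\Lam$ times constants. Together with overlap ($e_a\ge\eta_0$ and $\omega\le\eta_0^{-1}$), this justifies dominated convergence.
\end{itemize}
The main obstacle is this step: showing that the active pieces are locally constant along paths uniformly in $x$. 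I would handle it with the uniform margin and the bound $|p_{a,t}-p_a|+|e_{a,t}-e_a|=O(t)$ in sup norm, which holds for bounded-score paths.

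\textbf{Tangent-space membership and conclusion.} The model is nonparametric apart from the known-design restriction. When $\chi=1$ the tangent space is all of $L^0_2(P)$, so the gradient is unique and hence canonical.

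When $\chi=0$ the tangent space excludes $R\,S_{A\mid X}$ directions. In that case $\phi$ has no treatment component, and each of its remaining terms has the required conditional-mean-zero structure:
\begin{itemize}
\item the target term has mean zero given $R=0$;
\item the outcome residual has mean zero given $(R,A,X)$.
\end{itemize}
So $\phi$ lies in the restricted tangent space and is its projection. In both cases $E\phi=0$ and $\phi$ has finite variance by the boundedness noted above. Hence $\Var\{\phi^\sigma_{a,s,y}(O)\}$ is the efficiency bound.
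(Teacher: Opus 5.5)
Your proposal is correct and follows essentially the paper's route. You compute the pathwise derivative component by component (target covariate law, source outcome law, and source propensity when $\chi=1$), which is exactly \Cref{lem:tangent}, and you then conclude that $\phi^\sigma_{a,s,y}$ is canonical because it lies in the tangent space. Your chain-rule justification via uniform active margins along bounded-score paths, and your explicit tangent-space check for the known-design case $\chi=0$, spell out steps the paper only asserts through its remark that the component tangent spaces are orthogonal.
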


\begin{proposition}[Neyman-orthogonal score]\label{prop:orthogonal}
Let $\nu$ collect the nuisance functions $(p_a,e_a,\omega)$ and let $\nu^\star$ denote their true values.  The one-step score
\[
        m^{\sigma}_{a,s,y}(O;\psi,\nu)
        =\frac{\ind(R=0)}{\pi_0}\{b^\sigma_{a,s,\nu}(y,X)-\psi\}
        +\frac{\ind(R=1)}{\pi_1}\omega_\nu(X)
          \zeta^{\sigma}_{a,s,y,\nu}(O)
\]
satisfies
\[
        \left.\frac{d}{dt}E\{m^{\sigma}_{a,s,y}(O;\psi^\sigma_{a,s}(y),\nu_t)\}\right|_{t=0}=0
\]
for every regular nuisance path $\nu_t$ through $\nu^\star$, whenever \Cref{ass:active} holds and $\nu_t$ remains in the same active region at $t=0$.
\end{proposition}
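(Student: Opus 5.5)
We prove Neyman orthogonality by direct computation of the Gateaux derivative of $\nu\mapsto E\{m^\sigma_{a,s,y}(O;\psi^\sigma_{a,s}(y),\nu)\}$ at $\nu^\star$. The derivative splits into the perturbations of $\omega$, of $p_a$ and of $e_a$, and each piece will be shown to vanish. We write $\psi^\star=\psi^\sigma_{a,s}(y)$ and parametrize the path by $\nu_t=(p_t,e_t,\omega_t)$ with derivatives $(\dot p,\dot e,\dot\omega)$ at $t=0$. The path stays in the same active region at $t=0$, and \Cref{ass:active} gives margins bounded away from zero. Hence the map $(p,e)\mapsto G^\sigma_s(p,e)$ is a single affine-in-$p$ composite along the path, with $e$ entering smoothly through $\ell_\Gam,u_\Gam$. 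Therefore $b^\sigma_{a,s,\nu_t}(y,X)=G^\sigma_s\{p_t(y,X),e_t(X)\}$ is differentiable in $t$ with derivative $G^\sigma_{p,s}\dot p+G^\sigma_{e,s}\dot e$. Since the overlap bounds in \Cref{ass:basic} keep $e_t$ away from zero, dominated convergence will justify exchanging $d/dt$ and $E$.

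\emph{Step 1 (the $\omega$ direction).} The weight $\omega_t$ multiplies only $\zeta^\sigma_{a,s,y,\nu^\star}$ at $t=0$. By construction, $E\{\zeta^\sigma_{a,s,y,\nu^\star}(O)\mid R=1,X\}=0$. This follows because $E\{\ind(A=a)(\ind(Y\le y)-p_a)\mid R=1,X\}=0$ and $E\{\ind(A=a)-e_a\mid R=1,X\}=0$. Hence the contribution $E[\ind(R=1)\pi_1^{-1}\dot\omega(X)\zeta(O)]$ vanishes by iterated expectations.

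\emph{Step 2 (the $p$ and $e$ directions in the target term).} Differentiating the first term gives
\[
E\Big[\frac{\ind(R=0)}{\pi_0}\{G^\sigma_{p,s}\dot p(y,X)+G^\sigma_{e,s}\dot e(X)\}\Big]
=E_1\big[\omega(X)\{G^\sigma_{p,s}\dot p(y,X)+G^\sigma_{e,s}\dot e(X)\}\big].
\]
The equality changes measure from $P^X_0$ to $P^X_1$ via $\omega=dP^X_0/dP^X_1$, which is legitimate by \Cref{ass:basic}.

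\emph{Step 3 (the $p$ and $e$ directions in the source residual).} Here we differentiate $\zeta_{\nu_t}$ at $t=0$ with $\omega$ held at $\omega^\star$. Terms in which the derivative falls on $G^\sigma_{p,s}$, $G^\sigma_{e,s}$ or the factor $1/e_t$ multiply the centered residuals $\ind(A=a)\{\ind(Y\le y)-p_a\}$ or $\ind(A=a)-e_a$. These terms have conditional mean zero given $(R=1,X)$ and therefore vanish. What remains are the derivatives of the residuals themselves:
\[
-G^\sigma_{p,s}\frac{\ind(A=a)}{e_a}\dot p
\quad\text{and}\quad
-\chi\,G^\sigma_{e,s}\dot e .
\]
Their conditional means given $(R=1,X)$ are $-G^\sigma_{p,s}\dot p$ and $-\chi G^\sigma_{e,s}\dot e$, using $E\{\ind(A=a)\mid R=1,X\}=e_a$. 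Multiplying by $\omega$ and integrating over $P_1$ exactly cancels Step 2. The $p$-part cancels always. The $e$-part cancels when $\chi=1$, and when $\chi=0$ we have $\dot e=0$ because $e_a$ is known by design (\Cref{ass:treatmech}). Summing Steps 1–3 gives zero.

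The main obstacle is the differentiability bookkeeping rather than the algebra. We must verify that $G^\sigma_{p,s}$ and $G^\sigma_{e,s}$ evaluated at $(p_t,e_t)$ are differentiable in $t$, or at least that $t\mapsto G^\sigma_s(p_t,e_t)$ is, uniformly in $x$ enough for dominated convergence. This is where the uniform active margin of \Cref{ass:active} and the same-active-region hypothesis on $\nu_t$ are used. Locally, $G^\sigma_s$ is a fixed composition $t^\sigma_j\circ g^\sigma_k$, which is polynomial in $(p,e)$ with bounded coefficients on $e\in[\eta_0,1-\eta_0]$. This gives bounded derivatives and a dominating envelope. Without the margin condition, the composite would only be directionally differentiable, and the cancellation above would fail at tie points.
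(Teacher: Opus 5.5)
Your proposal is correct and follows essentially the same route as the paper. The $\omega$-direction drops out because the source residual is conditionally centered, and derivatives landing on $G_p$, $G_e$ or $1/e_t$ multiply centered residuals. The surviving source term $-E_1\{\omega(G_p\dot p+\chi G_e\dot e)\}$ then cancels the target plug-in derivative after the change of measure $dP^X_0=\omega\,dP^X_1$, with $\dot e=0$ in the known-design case. The only cosmetic difference is the order of operations: the paper first conditions on $(R=1,X)$, writing the augmentation as $E_1[\omega_t\{G_p\frac{e}{e_t}(p-p_t)+\chi G_e(e-e_t)\}]$, and then differentiates, whereas you differentiate first and condition afterward.
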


\subsubsection{Cross-fitted one-step estimator}

The estimator below is written in one-step form.  Cross-fitting allows flexible nuisance estimation while preserving the orthogonality calculation in \Cref{prop:orthogonal}, as in double/debiased machine learning \citep{ChernozhukovChetverikovDemirerDufloHansenNeweyRobins2018}.  The high-level first-stage assumption is stated in terms of uniform rates and empirical-process control, because the nuisance functions may be estimated by different machine-learning or series methods in applications.  To avoid making the inference theorem depend only on conclusion-like assumptions, \Cref{prop:firststageprimitive} gives a primitive Donsker sufficient condition with all pairwise second-order products made explicit; the high-level version can be read as the machine-learning analogue.

Let the sample be split into $K$ folds.  For an observation $i$ in fold $k$, let $\widehat p^{(-k)}_a$, $\widehat e^{(-k)}_a$ and $\widehat\omega^{(-k)}$ be nuisance estimators trained outside fold $k$.  Let $\widehat G_p^{(-k)}$ and $\widehat G_e^{(-k)}$ be active derivatives evaluated at the estimated nuisances.  Define
\begin{align*}
        \widehat b^{\sigma,(-k)}_{a,s}(y,x)
        &=G^\sigma_s\{\widehat p^{(-k)}_a(y,x),\widehat e^{(-k)}_a(x)\},\\
        \widehat\zeta^{\sigma,(-k)}_{a,s,y}(O_i)
        &=\widehat G^{\sigma,(-k)}_{p,s}(y,X_i)
          \frac{\ind(A_i=a)}{\widehat e^{(-k)}_a(X_i)}
          \{\ind(Y_i\le y)-\widehat p^{(-k)}_a(y,X_i)\}\\
        &\quad +\chi\,
          \widehat G^{\sigma,(-k)}_{e,s}(y,X_i)
          \{\ind(A_i=a)-\widehat e^{(-k)}_a(X_i)\}.
\end{align*}
The estimator is
\begin{align*}
        \widehat\psi^\sigma_{a,s}(y)
        &=\frac{1}{n_0}\sum_{i:R_i=0}
          \widehat b^{\sigma,(-k_i)}_{a,s}(y,X_i)\\
        &\quad +\frac{1}{n_1}\sum_{i:R_i=1}
          \widehat\omega^{(-k_i)}(X_i)
          \widehat\zeta^{\sigma,(-k_i)}_{a,s,y}(O_i),
\end{align*}
where $n_r=\sum_i\ind(R_i=r)$.

\begin{assumption}[First-stage and empirical-process conditions]\label{ass:firststage}
All statements are uniform over the index set used in the theorem.  The nuisance estimators are cross-fitted and uniformly bounded away from the boundary values required by \Cref{ass:basic}.  Write
\[
        \delta_{p,n}=\sup_{a,y}\|\widehat p_a(y,\cdot)-p_a(y,\cdot)\|_2,
        \quad
        \delta_{e,n}=\sup_a\|\widehat e_a-e_a\|_2,
        \quad
        \delta_{\omega,n}=\|\widehat\omega-\omega\|_2,
\]
with $\delta_{e,n}=0$ in the known-design source model.  The second-order drift in the orthogonal expansion is $o_p(n^{-1/2})$; a sufficient pairwise product-rate condition is
\[
        \delta_{p,n}\delta_{\omega,n}
        +\chi\,\delta_{e,n}\delta_{\omega,n}
        +\chi\,\delta_{p,n}\delta_{e,n}
        =o_p(n^{-1/2}).
\]
This condition is deliberately written with the $e$--$\omega$ product: in observational source studies, an error in the source treatment propensity multiplies an error in the covariate density ratio through the source augmentation term.  The cross-fitted empirical-process term generated by replacing $\nu^\star$ by $\widehat\nu$ in the influence function is $o_p(1)$ after multiplication by $\sqrt n$.  On a regular index set, the estimated active sets agree with the true active sets with probability tending to one.  Finally, the influence-function class indexed by $(a,\sigma,y,s)$ is pre-Gaussian and satisfies a Donsker condition or another valid high-dimensional Gaussian approximation.
\end{assumption}

\begin{assumption}[Primitive Donsker sufficient conditions]\label{ass:donsker}
On $\mathcal I_{\mathrm{reg}}$, the nuisance estimators are cross-fitted and take values, with probability tending to one, in fixed uniformly bounded function classes whose induced score class
\[
        \{m^\sigma_{a,s,y}(\cdot;\psi^\sigma_{a,s}(y),\nu): (a,\sigma,y,s)\in\mathcal I_{\mathrm{reg}},
        \nu\in\mathcal N\}
\]
is $P$-Donsker with a square-integrable envelope and is $L_2(P)$-continuous at $\nu^\star$.  The nuisance estimators satisfy
\[
        \sup_{a,y}\|\widehat p_a(y,\cdot)-p_a(y,\cdot)\|_2=o_p(1),\quad
        \sup_a\|\widehat e_a-e_a\|_2=o_p(1),\quad
        \|\widehat\omega-\omega\|_2=o_p(1),
\]
with $\widehat e_a=e_a$ in the known-design model.  Let $\delta_{p,n}$, $\delta_{e,n}$ and $\delta_{\omega,n}$ be as in \Cref{ass:firststage}.  The pairwise product-rate condition
\[
        \delta_{p,n}\delta_{\omega,n}
        +\chi\,\delta_{e,n}\delta_{\omega,n}
        +\chi\,\delta_{p,n}\delta_{e,n}
        =o_p(n^{-1/2})
\]
holds.  The estimated active sets agree with the true active sets on $\mathcal I_{\mathrm{reg}}$ with probability tending to one.
\end{assumption}

\begin{proposition}[Primitive sufficient conditions for the high-level expansion]\label{prop:firststageprimitive}
If \Cref{ass:active,ass:donsker} hold, then \Cref{ass:firststage} holds on $\mathcal I_{\mathrm{reg}}$.
\end{proposition}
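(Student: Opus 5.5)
The plan is to verify each clause of \Cref{ass:firststage} in turn from the primitive conditions in \Cref{ass:donsker}, working fold by fold and conditioning on the training data. Uniform boundedness away from the boundary, the pairwise product-rate condition, and agreement of the estimated active sets are assumed verbatim in \Cref{ass:donsker}, so nothing needs to be shown for them. Three items do require argument: the second-order drift being $o_p(n^{-1/2})$, the cross-fitted empirical-process term being $o_p(n^{-1/2})$, and the pre-Gaussian/Donsker property of the influence-function class.

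\textbf{Donsker and pre-Gaussian property.} At $\nu=\nu^\star$ the score class contains $\{\phi^\sigma_{a,s,y}\}$, since $m^\sigma_{a,s,y}(\cdot;\psi^\sigma_{a,s}(y),\nu^\star)=\phi^\sigma_{a,s,y}$. A subclass of a $P$-Donsker class is Donsker, hence pre-Gaussian, so this item follows immediately.

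\textbf{Cross-fitted empirical-process term.} For fold $k$, condition on the out-of-fold data. The term is $\mathbb G_{n,k}\{m(\cdot;\widehat\nu^{(-k)})-m(\cdot;\nu^\star)\}$, uniformly in the index. On the event, of probability tending to one, that $\widehat\nu^{(-k)}\in\mathcal N$ and the active sets agree, this is a Donsker-class increment with random but conditionally fixed $\nu$.
\begin{itemize}
\item By $L_2(P)$-continuity at $\nu^\star$ and the $o_p(1)$ consistency rates, $\sup_{\mathrm{index}}\|m(\cdot;\widehat\nu)-m(\cdot;\nu^\star)\|_{P,2}\to0$ in probability.
\item Asymptotic equicontinuity of the Donsker class then gives that the increment is $o_p(1)$, which is $o_p(n^{-1/2})$ after the $\sqrt n$ scaling convention.
\item Independence across folds makes the conditioning legitimate. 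Summing over the finitely many $K$ folds, and handling the random ratio $n_r/n\to\pi_r$ by Slutsky, completes the step.
\end{itemize}

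\textbf{Second-order drift (main obstacle).} I need to bound $\sup|E\{m(O;\psi,\widehat\nu)\}-E\{m(O;\psi,\nu^\star)\}|$, with $\widehat\nu$ held fixed, by the displayed pairwise products. I would write the mean explicitly. Since $\pi_0 E_0 f = \pi_1 E_1(\omega f)$, the target term contributes $E_1[\omega(\widehat b-b)]$. The source term contributes
\[
E_1\big[\widehat\omega\{\widehat G_p\tfrac{e}{\widehat e}(p-\widehat p)+\chi\widehat G_e(e-\widehat e)\}\big].
\]

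On the common active region, $G^\sigma_s$ is piecewise bilinear in $(p,e)$ with a fixed piece, so a Taylor expansion is exact up to the cross term. This gives
\[
\widehat b-b=\widehat G_p(\widehat p-p)+\widehat G_e(\widehat e-e)+O\{|\widehat p-p||\widehat e-e|\},
\]
using the expansion at $\widehat\nu$. Adding the pieces, the first-order terms cancel and leave three remainders:
\begin{itemize}
\item $(\omega-\widehat\omega)\widehat G_p(\widehat p-p)$;
\item $\widehat\omega\widehat G_p(\widehat p-p)(\widehat e-e)/\widehat e$;
\item $\chi(\omega-\widehat\omega)\widehat G_e(\widehat e-e)$, together with the bilinear cross term.
\end{itemize}
Cauchy--Schwarz, bounded derivatives (which hold since $\Gam,\Lam$ lie in a compact set and $e$ is bounded away from $0,1$), and bounded $\omega$ give the bound $\lesssim\delta_{p,n}\delta_{\omega,n}+\chi\delta_{e,n}\delta_{\omega,n}+\chi\delta_{p,n}\delta_{e,n}$, uniformly. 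This is $o_p(n^{-1/2})$ by assumption. When $\chi=0$, $\widehat e=e$ and the last two products vanish.

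The delicate point is ensuring that a single affine piece is used for both $\widehat b$ and $b$ at each $x$. Active-set agreement in probability gives this $P^X_0$-a.e., and the margin in \Cref{ass:active} makes it robust to small perturbations. If agreement holds only on a set of probability tending to one, the complement event contributes $o_p(\cdot)$ trivially.
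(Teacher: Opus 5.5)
Your proposal is correct and follows essentially the same route as the paper. Both arguments use stochastic equicontinuity conditional on the training folds, an exact bilinear expansion of $G^\sigma_s$ on the fixed active branch that reduces the drift to the three pairwise products by Cauchy--Schwarz, and pre-Gaussianity inherited from the Donsker score class; your three-remainder grouping is just a rearrangement of the paper's split into a true-$\omega$ line and a $(\widehat\omega-\omega)$ line.

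Two small corrections. First, the change-of-measure identity should read $E_0 f=E_1(\omega f)$, with no $\pi_0,\pi_1$ factors; this is what you actually use. Second, the requirement that $\widehat e_a$ stay bounded away from $0$ and $1$ is not literally stated in \Cref{ass:donsker}, contrary to your "verbatim" claim. As the paper implicitly does, you must read it into the condition that $\mathcal N$ is a fixed uniformly bounded class on which $e/\widehat e$ is bounded.
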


\begin{theorem}[Uniform asymptotic linearity and Gaussian approximation]\label{thm:ual}
Suppose \Cref{ass:basic,ass:active,ass:firststage} hold on a compact regular index set
\[
        \mathcal I_{\mathrm{reg}}\subseteq\{(a,\sigma,y,s):a\in\{0,1\},\sigma\in\{-,+\},y\in\calY,s\in\calS\}.
\]
Then, in $\ell^\infty(\mathcal I_{\mathrm{reg}})$,
we have
\[
        \sqrt n\{\widehat\psi^\sigma_{a,s}(y)-\psi^\sigma_{a,s}(y)\}
        =\mathbb G_n\phi^\sigma_{a,s,y}+o_p(1),
\]
where $\mathbb G_n$ is the empirical process.  The process $\mathbb G_n\phi^\sigma_{a,s,y}$ converges weakly to a tight mean-zero Gaussian process $\mathbb Z_\psi$ with covariance function
\[
        \Cov\{\mathbb Z_\psi(i),\mathbb Z_\psi(j)\}
        =E\{\phi_i(O)\phi_j(O)\},
\]
where $i,j\in\mathcal I_{\mathrm{reg}}$ and $\phi_i$ abbreviates the corresponding canonical gradient.
\end{theorem}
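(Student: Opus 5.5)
The plan is the usual double/debiased machine-learning decomposition, applied uniformly over $\mathcal I_{\mathrm{reg}}$. For each fold $k$ and index $i=(a,\sigma,y,s)$, write the estimator as a fold-wise empirical average of the score $m^\sigma_{a,s,y}(O;\cdot,\widehat\nu^{(-k)})$ from \Cref{prop:orthogonal}, solved for $\psi$. Two preliminary reductions are needed.

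\emph{Sample sizes.} The estimator normalizes by $n_0,n_1$ rather than $n\pi_0,n\pi_1$. Since $n_r/n=\pi_r+O_p(n^{-1/2})$ and both the target and source terms are uniformly bounded, replacing $n_r$ by $n\pi_r$ changes the statistic by $O_p(n^{-1/2})$ times a centered average. That average is $O_p(n^{-1/2})$ for the source term, because $\zeta$ has conditional mean zero. For the target term, the randomness of $n_0$ is exactly absorbed by centering at $\psi$ inside the $R=0$ component. This gives the $\ind(R=0)/\pi_0$ form of $\phi$.

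\emph{Three-term expansion.} Next decompose
\[
\sqrt n(\widehat\psi-\psi)=\mathbb G_n\phi+\sqrt n(\mathbb P_n^{k}-P)\{m(\cdot;\widehat\nu)-m(\cdot;\nu^\star)\}+\sqrt n\,P\{m(\cdot;\psi,\widehat\nu)-m(\cdot;\psi,\nu^\star)\},
\]
summed over folds. The middle empirical-process term is $o_p(1)$ uniformly by the cross-fitting clause of \Cref{ass:firststage}: conditional on the training folds, the summands are i.i.d., and their $L_2$ size goes to zero.

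\emph{Bias term.} This is the main step. On the event, of probability tending to one, that the estimated active sets agree with the true ones, $G^\sigma_s$ is affine in $p$ given $e$ on each piece, and piecewise polynomial in $e$ with constant derivatives $\ell'_\Gam,u'_\Gam$. So I will compute $P\{m(\cdot;\psi,\widehat\nu)\}-\psi$ exactly rather than by a Taylor bound. Conditioning on $(R=1,X)$, the source augmentation term has expectation
\[
\widehat\omega\bigl[\widehat G_p\,(e/\widehat e)(p-\widehat p)+\chi\,\widehat G_e(e-\widehat e)\bigr].
\]
Averaging over $P^X_1$ and comparing with $E_0\{G(\widehat p,\widehat e)-G(p,e)\}=E_1\{\omega(\cdot)\}$ leaves three kinds of remainder:
\begin{itemize}
\item terms with $(\widehat\omega-\omega)$ multiplied by $(\widehat p-p)$ or $(\widehat e-e)$;
\item terms with $(e/\widehat e-1)(p-\widehat p)$, which are of order $\delta_e\delta_p$;
\item second-order terms in $(\widehat p-p,\widehat e-e)$ from the bilinear $pe$ structure of $g^\sigma_j$ and the composition with the affine $t^\sigma_j$.
\end{itemize}
The composition contributes only products $(\widehat p-p)(\widehat e-e)$, because $G$ is a product of affine functions in $p$ and $e$ on a fixed active piece. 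Cauchy--Schwarz together with overlap (\Cref{ass:basic}) and the boundedness of nuisances away from the boundary bounds all of these by $\delta_p\delta_\omega+\chi\delta_e\delta_\omega+\chi\delta_p\delta_e$, which is $o_p(n^{-1/2})$ uniformly in the index. The uniformity relies on the derivatives being bounded by constants depending only on $\bar\Gam,\bar\Lam$.

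\emph{Where the difficulty lies.} I expect the main obstacle to be exactly this uniform bias bound near switch surfaces. If active-set agreement fails on an $x$-set of small probability, the affine identity breaks. I would handle this by using the margin condition in \Cref{ass:active}: on the disagreement set, $|\widehat p-p|$ or $|\widehat e-e|$ must exceed the margin $c$, so Markov's inequality gives a contribution bounded by $c^{-1}$ times a squared error. If that turns out not to be small enough, I would fall back on the high-level agreement clause.

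\emph{Weak convergence.} Finally, $\{\phi_i:i\in\mathcal I_{\mathrm{reg}}\}$ is pre-Gaussian and Donsker by the last clause of \Cref{ass:firststage}. Hence $\mathbb G_n\phi\rightsquigarrow\mathbb Z_\psi$ in $\ell^\infty(\mathcal I_{\mathrm{reg}})$, with covariance $E\{\phi_i\phi_j\}$ because each $\phi_i$ is mean zero. Combining this with the $o_p(1)$ remainders, uniform over the compact index set, completes the argument.
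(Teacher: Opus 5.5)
Your proposal is correct and follows the paper's proof. Both use the same decomposition into $\mathbb G_n\phi$ plus a cross-fitted empirical-process remainder and an orthogonal drift term, each made uniformly negligible by \Cref{ass:firststage}; the same ratio expansion replacing $n_r$ by $n\pi_r$; and weak convergence from the assumed Donsker/pre-Gaussian clause. Your explicit bilinear bias computation reproduces the paper's proof of \Cref{prop:firststageprimitive} and is not strictly needed here, because \Cref{ass:firststage} posits the $o_p(n^{-1/2})$ drift and active-set agreement directly. Falling back on that agreement clause is the right call: your margin/Markov bound would leave squared own-rates such as $\delta_{p,n}^2$, which the product-rate condition does not control.
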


\subsubsection{Nonsmooth endpoint process}

Active-set separation is convenient but not fundamental.  At ties, the max/min envelope is not linearly differentiable, yet it is Hadamard directionally differentiable.  This is the correct calculus for nonregular CDF-bound processes and follows the general perspective of \citet{FangSantos2019}; the subsampling implementation below uses the classical $m$-out-of-$n$ logic of \citet{PolitisRomanoWolf1999}.

For directions $(h_p,h_e)$ define the directional derivatives of the source envelope maps by
\begin{align*}
        \dot C^-_{\Gam,p,e}(h_p,h_e)
        &=\max_{j\in I^-_C(p,e)}
          \{\partial_p g^-_j(p,e)h_p+\partial_e g^-_j(p,e)h_e\},\\
        \dot C^+_{\Gam,p,e}(h_p,h_e)
        &=\min_{j\in I^+_C(p,e)}
          \{\partial_p g^+_j(p,e)h_p+\partial_e g^+_j(p,e)h_e\}.
\end{align*}
For a direction $h_q$ define
\begin{align*}
        \dot T^-_{\Lam,q}(h_q)&=\max_{j\in I^-_T(q)} (t^-_j)'(q)h_q,\\
        \dot T^+_{\Lam,q}(h_q)&=\min_{j\in I^+_T(q)} (t^+_j)'(q)h_q.
\end{align*}
The nested directional derivatives are
\[
        \dot G^-_{s,p,e}(h_p,h_e)
        =\dot T^-_{\Lam,C^-_\Gam(p,e)}\{\dot C^-_{\Gam,p,e}(h_p,h_e)\},
\]
\[
        \dot G^+_{s,p,e}(h_p,h_e)
        =\dot T^+_{\Lam,C^+_\Gam(p,e)}\{\dot C^+_{\Gam,p,e}(h_p,h_e)\}.
\]
For the nonsmooth results, let
\[
        \mathcal I=\{(a,\sigma,y,s):a\in\{0,1\},\sigma\in\{-,+\},y\in\calY,s\in\calS\}.
\]

\begin{assumption}[Primitive full-index directional limit]\label{ass:directional}
The estimators of the primitive components $(P^X_0,p_a,e_a)$ admit a tight weak limit in the product space containing $\ell^\infty(\calY\times\calX\times\{0,1\})$ for $p_a$, $\ell^\infty(\calX\times\{0,1\})$ for $e_a$, and bounded signed-measure directions for $P^X_0$.  The primitive paths remain in the domain determined by the overlap and monotonicity restrictions.  For subsampling, the same limit law is consistently reproduced by recomputing the primitive estimators on subsamples.
\end{assumption}

\begin{proposition}[Primitive sufficient condition for full-threshold directional inference]\label{prop:directionalfinite}
Suppose the covariate support is finite, $P^X_r(x)$ is bounded away from zero on this support for $r=0,1$, and $e_a(x)$ is bounded away from zero and one.  Estimate $P^X_0$, $P^X_1$, $e_a(x)$ and $p_a(y,x)$ by the corresponding empirical cell proportions and empirical source-arm CDFs.  Then the primitive process
\[
        \big(\widehat P^X_0-P^X_0,\widehat P^X_1-P^X_1,
        \widehat e_a-e_a,
        \widehat p_a(\cdot,x)-p_a(\cdot,x):a\in\{0,1\},x\in\calX\big)
\]
converges weakly in the product of Euclidean spaces and $\ell^\infty(\calY)$ spaces.  Consequently \Cref{ass:directional} holds on the full index set $\mathcal I$ with $\calS$ compact.  The recomputed $m$-out-of-$n$ subsampling distribution in \Cref{cor:subsampling} is valid on $\mathcal I$ whenever $m\to\infty$, $m/n\to0$, and the limiting sup-norm distribution is continuous at its critical value.  The finite-grid case is an immediate special case.
\end{proposition}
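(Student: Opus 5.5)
The proof has two parts: joint weak convergence of the finite collection of primitive estimators, and then the transfer of that convergence to the full-index statements in \Cref{ass:directional} and the subsampling claim.

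\textbf{Step 1: reduce to a finite collection of empirical processes.} Because $\calX=\{x_1,\ldots,x_J\}$ is finite, every primitive estimator is a smooth function of finitely many cell counts and of within-cell empirical CDFs. The ingredients are:
\begin{itemize}
\item $\widehat P^X_r(x)=n^{-1}\sum_i\ind(R_i=r,X_i=x)\big/\,n^{-1}\sum_i\ind(R_i=r)$;
\item $\widehat e_a(x)=n^{-1}\sum_i\ind(R_i=1,X_i=x,A_i=a)\big/\,n^{-1}\sum_i\ind(R_i=1,X_i=x)$;
\item $\widehat p_a(y,x)=n^{-1}\sum_i\ind(R_i=1,X_i=x,A_i=a,Y_i\le y)\big/\,n^{-1}\sum_i\ind(R_i=1,X_i=x,A_i=a)$.
\end{itemize}
Let $\mathcal F$ consist of the finitely many indicators $\ind(R=r)$, $\ind(R=r,X=x)$ and $\ind(R=1,X=x,A=a)$, together with the classes $\{\ind(R=1,X=x,A=a,Y\le y):y\in\calY\}$ for each $(x,a)$. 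Each of these classes is VC, as a product of a fixed indicator with half-line indicators. Hence $\mathcal F$ is $P$-Donsker, and $\sqrt n(\mathbb P_n-P)\rightsquigarrow\mathbb G$ in $\ell^\infty(\mathcal F)$.

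\textbf{Step 2: pass to the primitive estimators by the functional delta method.} The map taking $(\mathbb P f)_{f\in\mathcal F}$ to the vector of ratios above is Hadamard differentiable at $P$ tangentially to $\ell^\infty(\mathcal F)$, because division by a scalar bounded away from zero is Hadamard differentiable. The denominators are bounded away from zero by assumption: $P^X_r(x)$, $\pi_r$ and $e_a(x)$ are all bounded away from the boundary, so $P(R=1,X=x,A=a)>0$. The functional delta method \citep{vdV2000,vdVW1996} then gives joint weak convergence of the displayed primitive process in $\Rr^{\,\cdot}\times\prod_{a,x}\ell^\infty(\calY)$ to a tight Gaussian limit. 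Since $\calX$ is finite, $\ell^\infty(\calY\times\calX\times\{0,1\})$ is a finite product of copies of $\ell^\infty(\calY)$, and signed-measure directions for $P^X_0$ are vectors in $\Rr^J$. Thus the limit lives in exactly the space required by \Cref{ass:directional}.

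With probability tending to one, every cell is nonempty and the following domain conditions hold:
\begin{itemize}
\item $\widehat e_a\in[\eta_0/2,1-\eta_0/2]$;
\item $\widehat p_a(\cdot,x)$ is a CDF;
\item $\widehat P^X_0\ll\widehat P^X_1$ with bounded ratio.
\end{itemize}
This verifies the domain clause of \Cref{ass:directional}, which therefore holds on the full index set $\mathcal I$. The bound functionals $\psi^\sigma_{a,s}(y)=\sum_x P^X_0(x)\,G^\sigma_s\{p_a(y,x),e_a(x)\}$ are finite sums here. The compactness of $\calS$ matters only later, for uniform Hadamard directional differentiability in $s$. There it suffices to note that the maps $\ell_\Gam,u_\Gam,\Lam^{\pm1}$ are Lipschitz in $s$ on $\calS$, so the piecewise-affine maps $G^\sigma_s$ are jointly Lipschitz in $(p,e,s)$.

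\textbf{Step 3: subsampling validity.} The subsampling claim is the step I expect to require the most care. Write $\theta$ for the primitive parameter and $\phi$ for the map sending it to the sup-norm statistic. The plan is as follows.
\begin{enumerate}
\item Invoke \Cref{thm:hdd}, taken as given: the endpoint process is Hadamard directionally differentiable at the primitive parameter. Composed with Step 2, this gives $\sqrt n\{\phi(\widehat\theta)-\phi(\theta)\}\rightsquigarrow\phi'_\theta(\mathbb G)$.
\item Show that the subsample statistic has the same limit. Recomputing the estimators on a subsample of size $m$ produces an empirical process on $m$ i.i.d.\ draws, which converges to the same $\mathbb G$. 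Hadamard directional differentiability gives convergence of $\sqrt m\{\phi(\widehat\theta_m)-\phi(\theta)\}$ to the same limit.
\item Handle centering. The subsample statistic is centered at $\phi(\widehat\theta_n)$ rather than at $\phi(\theta)$. The centering error is of order $\sqrt m\,O_p(n^{-1/2})=o_p(1)$ because $m/n\to0$.
\item Apply the standard argument of \citet{PolitisRomanoWolf1999}, using U-statistic concentration of the subsampling empirical distribution, to show that the subsampling distribution converges to the limit law at every continuity point.
\item Conclude that the subsampling quantile is consistent, using continuity of the limiting sup-norm law at the critical value.
\end{enumerate}
The finite-grid case follows immediately, because $\calY$ may be replaced by a finite set and every $\ell^\infty$ space then becomes Euclidean.

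The main subtlety is that the limit $\phi'_\theta(\mathbb G)$ is non-Gaussian and nonlinear in $\mathbb G$, so a bootstrap would not be valid. Subsampling avoids this because it requires only a common limit law, which Steps 1 and 2 supply directly.
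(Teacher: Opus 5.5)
Your proposal is correct and follows the paper's proof. The shared steps are:
\begin{itemize}
\item joint weak convergence of the cell proportions and the within-cell empirical CDFs;
\item reduction of the target expectation to a finite sum, combined with the directional derivative of \Cref{thm:hdd};
\item standard recomputed $m$-out-of-$n$ subsampling, using the $\sqrt{m/n}$ centering argument.
\end{itemize}
Your use of a single Donsker class of indicators with the ratio delta method handles the random cell sizes more cleanly than the paper's sketch does.

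You should drop the side claim that joint Lipschitz continuity of $G^\sigma_s$ in $(p,e,s)$ suffices for directional differentiability uniformly in $s$. Lipschitz continuity gives tightness of the sup-norm root, but it does not give uniform convergence of difference quotients near the switch surfaces. As in the paper, that uniformity rests entirely on \Cref{thm:hdd}.
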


\begin{theorem}[Hadamard directional differentiability]\label{thm:hdd}
Under \Cref{ass:basic,ass:directional}, the map from the primitive components $(P^X_0,p_a,e_a)$ to the CDF-bound process $(\psi^-_{a,s},\psi^+_{a,s})$ is Hadamard directionally differentiable as a map into $\ell^\infty(\mathcal I)$.  Its derivative in direction $(h_0,h_p,h_e)$ is
\[
        \dot\Psi^\sigma_{a,s,y}(h_0,h_p,h_e)
        =h_0\{b^\sigma_{a,s}(y,\cdot)\}
         +E_0\!\big[\dot G^\sigma_{s,p_a(y,X),e_a(X)}
             \{h_p(a,y,X),h_e(a,X)\}\big],
\]
where $h_0$ is a signed-measure direction for the target covariate law.  If a regular estimator of the primitive components has a tight weak limit, then the plug-in CDF-bound estimator has the directional delta-method limit obtained by applying $\dot\Psi$ to that primitive limit.  The limit may be non-Gaussian when active sets have ties.
\end{theorem}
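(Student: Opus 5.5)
The plan is to write the map as a composition of a pointwise nonlinear operator and a bilinear integration step, and to use the chain rule for Hadamard directionally differentiable maps \citep{FangSantos2019}. The map is
\[
(P^X_0,p_a,e_a)\mapsto \big(b^\sigma_{a,s}(y,\cdot)\big)_{(a,\sigma,y,s)}\mapsto \Big(\int b^\sigma_{a,s}(y,x)\,dP^X_0(x)\Big)_{(a,\sigma,y,s)\in\mathcal I}.
\]
In the first step, $b^\sigma_{a,s}(y,x)=G^\sigma_s(p_a(y,x),e_a(x))$ is a Nemytskii-type composition with a piecewise-affine outer function. The second step, integration against $P^X_0$, is bilinear and bounded on bounded functions. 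Its derivative at $(P^X_0,b)$ in direction $(h_0,h_b)$ is $h_0(b)+\int h_b\,dP^X_0$, plus a remainder of order $t\|h_{0,t}\|\,\|h_{b,t}\|_\infty$ that vanishes after division by $t$. Applying this with $h_b=\dot G^\sigma_s\{h_p,h_e\}$ gives exactly the displayed formula.

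For the first step I would prove a \emph{uniform} directional expansion of $G^\sigma_s$. Each of $C^\pm_\Gam$ and $T^\pm_\Lam$ is a max or min of two affine functions of $(p,e)$ (resp.\ $q$), and the coefficients are continuous in $(\Gam,\Lam)$ on the compact $\calS$. For a max of finitely many smooth functions $g_j$ one has the elementary bound
\[
\Big|t^{-1}\{\max_j g_j(z+th)-\max_j g_j(z)\}-\max_{j\in I(z)}\nabla g_j(z)\cdot h\Big|\le 2\max_j\frac{|g_j(z+th)-g_j(z)-t\nabla g_j(z)h|}{t}+R(z,t,h),
\]
where $R$ collects the contribution of inactive pieces. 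Here the $g_j$ are affine in $p$ and bilinear in $(p,e)$, so the first term is $O(t\|h\|^2)$. The inactive-piece term is the main obstacle, because the active set can be tied or nearly tied at some $(y,x,s)$ and the convergence is not uniform in general. The remedy is the standard lemma for maxima of affine maps. If piece $j$ is inactive with margin $\delta=\max_k g_k(z)-g_j(z)>0$, then it stays inactive whenever $t\|h\|\lesssim\delta$. If instead $\delta\le Ct\|h\|$, then replacing $I(z)$ by the enlarged set that includes $j$ changes the directional formula by at most $O(\delta/t)\cdot$Lipschitz constant. Either way the error in the scaled difference is $O(t\|h\|)$ or $O(\|h_t-h\|)$, uniformly in $(z,s)$. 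This is exactly the fact that $\max$ is Lipschitz and Hadamard directionally differentiable with a remainder bound of $\min\{\delta/t,\|h\|\}$-type that can be handled by $\varepsilon$-splitting.

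I would then compose the two layers. The inner derivative $\dot C$ is Lipschitz in $(h_p,h_e)$, and the outer $T$-layer satisfies the same uniform expansion in $q$. The chain rule for Hadamard directional derivatives therefore gives $\sup_{a,y,x,s}|t^{-1}\{G^\sigma_s(p+th_{p,t},e+th_{e,t})-G^\sigma_s(p,e)\}-\dot G^\sigma_{s,p,e}(h_p,h_e)|\to0$ whenever $h_{p,t}\to h_p$ and $h_{e,t}\to h_e$ in sup norm. Here the overlap conditions in \Cref{ass:basic} and \Cref{ass:directional} are used so that the perturbed $e$ stays in $[\eta_0/2,1-\eta_0/2]$. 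This keeps $\ell_\Gam,u_\Gam$ and their derivatives bounded and continuous uniformly over $\calS$. I would also check measurability: $x\mapsto\dot G\{h_p(a,y,x),h_e(a,x)\}$ is measurable because the active sets are Borel and $E_0$ of this function is well defined. The directional derivative is continuous in $(h_0,h_p,h_e)$, though it is not linear when ties occur.

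Finally, the delta-method statement follows from the directional functional delta method of \citet{FangSantos2019}. If $r_n\{(\widehat P^X_0,\widehat p_a,\widehat e_a)-(P^X_0,p_a,e_a)\}\rightsquigarrow\mathbb W$ tightly, then $r_n(\widehat\psi-\psi)\rightsquigarrow\dot\Psi(\mathbb W)$. The limit is non-Gaussian whenever $\dot G$ takes a genuine max or min over a tie set that has positive $P^X_0$-mass, because $\dot\Psi$ is then a nonlinear function of the Gaussian limit $\mathbb W$.
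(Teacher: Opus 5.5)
Your architecture matches the paper's proof: finite max/min of affine maps, the directional chain rule for $\dot G$, differentiation of the integral against $P^X_0$, then the Fang--Santos delta method. The gap is the step you use to make this uniform, which is false.

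\paragraph{The uniform expansion fails.} You claim $\sup_{a,y,x,s}|t^{-1}\{G^\sigma_s(p+th_{p,t},e+th_{e,t})-G^\sigma_s(p,e)\}-\dot G^\sigma_{s,p,e}(h_p,h_e)|\to0$. Your near-tie remedy does not deliver this. When the margin satisfies $0<\delta\le Ct\|h\|$, the error $O(\delta/t)$ is $O(\|h\|)$, not $o(1)$. Swapping between $I(z)$ and the enlarged active set does not help, because $\dot G$ is defined with the true active set. Concretely, $C^-_\Gam(\cdot,e)$ has its kink at $c_\Gam=\Gam/(\Gam+1)$. At $p=c_\Gam-t/2$ in direction $(h_p,h_e)=(1,0)$:
\begin{itemize}
\item the difference quotient is $\{\ell_\Gam(e)+u_\Gam(e)\}/2$;
\item the claimed derivative is $\dot C^-_{\Gam,p,e}(1,0)=\ell_\Gam(e)$;
\item the gap is $(1-e)(\Gam-\Gam^{-1})/2>0$ for every $t$.
\end{itemize}
Since $p_a(\cdot,x)$ is a CDF that crosses $c_\Gam$, such $(y,x)$ exist for every $t$. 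So the Nemytskii map $(p_a,e_a)\mapsto b^\sigma_{a,s}$ is not Hadamard directionally differentiable into $\ell^\infty$; this is the familiar failure of $f\mapsto\max\{f,0\}$ on $\ell^\infty$. Your bilinear step needs the $b$-direction to converge in sup norm, so it cannot be used as written, and $\varepsilon$-splitting cannot rescue a sup-norm claim.

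\paragraph{What can be salvaged.} Work with the integrated remainder instead. Pointwise in $x$, the quotient converges and is Lipschitz-bounded. The remainder is $O(t+\|h_t-h\|_\infty)$ except on the near-tie set $N_{t,i}$. Here $N_{t,i}$ consists of the $x$ at which $(p_a(y,x),e_a(x))$, or $C^\sigma_\Gam$ of it, lies at positive distance at most $Ct$ from a switch value in \Cref{ass:active}; on $N_{t,i}$ the remainder is $O(1)$. Hence the supremum over $\mathcal I$ of the integrated remainder is at most a constant times $t+\|h_t-h\|_\infty+\sup_{i\in\mathcal I}P^X_0(N_{t,i})$.
\begin{itemize}
\item For each fixed index, or on a finite grid of $(y,s)$, $P^X_0(N_{t,i})\downarrow0$, and the result follows by dominated convergence.
\item On the full index set you need an extra anti-concentration condition: $\sup_{i\in\mathcal I}P^X_0(N_{\varepsilon,i})\to0$ as $\varepsilon\downarrow0$.
\end{itemize}
Without that condition the full-index conclusion can fail. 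Suppose $P^X_0$ has an atom $x_0$, as in \Cref{prop:directionalfinite}. Take $h_p(a,\cdot,x_0)$ continuous and equal to one at the crossing point, and choose $y_t$ with $p_a(y_t,x_0)=c_\Gam-t/2$. Then the remainder of $\psi^-_{a,s}(y_t)$ stays above a positive multiple of $P^X_0(\{x_0\})$. The envelope is convex in $p$, so contributions from other atoms cannot cancel this.

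The paper's own proof asserts the same sup-norm differentiability of finite max/min maps in one line and appeals only to boundedness and overlap. So it does not supply this step either. Your proposal inherits the gap rather than creating it, but as written it does not prove the full-index statement.
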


\begin{corollary}[Subsampling validity for nonsmooth CDF bounds]\label{cor:subsampling}
Let $m=m_n\to\infty$ with $m/n\to0$.  Let $\widehat\psi_{m,a,s}^{\sigma,b}(y)$ denote the CDF-bound estimator recomputed on subsample $b$ of size $m$, with the same nuisance-estimation procedure applied inside the subsample.  Suppose the weak limit in \Cref{thm:hdd} has a continuous distribution at the relevant sup-norm critical value.  Then the conditional distribution, over uniformly drawn subsamples, of
\[
        \sqrt m\sup_{(a,\sigma,y,s)\in\mathcal I}
        |\widehat\psi_{m,a,s}^{\sigma,b}(y)-\widehat\psi_{a,s}^{\sigma}(y)|
\]
consistently estimates the distribution of
\[
        \sup_{(a,\sigma,y,s)\in\mathcal I}
        |\dot\Psi^\sigma_{a,s,y}(\mathbb Z_0,\mathbb Z_p,\mathbb Z_e)|.
\]
Thus subsampling critical values give asymptotically valid simultaneous CDF bands without active-set separation.
\end{corollary}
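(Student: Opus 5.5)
The argument is the standard subsampling route for directionally differentiable functionals: reduce the statistic to a continuous functional of a converging process, then apply the Politis--Romano--Wolf theorem. Set $\mathbb T_n=\sqrt n(\widehat\psi-\psi)\in\ell^\infty(\mathcal I)$ and $\mathbb T=\dot\Psi(\mathbb Z_0,\mathbb Z_p,\mathbb Z_e)$. By \Cref{ass:directional}, the primitive estimators satisfy $\sqrt n(\widehat\nu-\nu)\rightsquigarrow(\mathbb Z_0,\mathbb Z_p,\mathbb Z_e)$. \Cref{thm:hdd} supplies Hadamard directional differentiability of the primitive-to-envelope map, tangentially to the domain fixed by the overlap and monotonicity restrictions. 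The directional delta method of \citet{FangSantos2019} then gives $\mathbb T_n\rightsquigarrow\mathbb T$ in $\ell^\infty(\mathcal I)$. The sup-norm is continuous, so by the continuous mapping theorem $\|\mathbb T_n\|_\infty\rightsquigarrow\|\mathbb T\|_\infty$. The limit law is the same whatever the tie configuration, even though it may be non-Gaussian there.

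Next I treat the subsample statistic. Because the subsample estimator is recomputed with the same nuisance procedure, \Cref{ass:directional} applies at sample size $m$. It guarantees that $\sqrt m(\widehat\psi^b_m-\psi)$ has the same weak limit $\mathbb T$, where the subsample is drawn without replacement and has size $m$ from the i.i.d.\ sample. I then decompose
\[
\sqrt m(\widehat\psi^b_m-\widehat\psi)=\sqrt m(\widehat\psi^b_m-\psi)-\sqrt{m/n}\,\mathbb T_n .
\]
The last term is $o_p(1)$ uniformly over $\mathcal I$, since $m/n\to0$ and $\|\mathbb T_n\|_\infty=O_p(1)$. Consequently the centring at $\widehat\psi$ instead of $\psi$ is asymptotically negligible, and $\sup$-norms differ by at most this term.

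Finally I verify the subsampling consistency itself. Let $L_{n,m}(t)$ be the fraction of subsamples with statistic at most $t$. Its mean equals the distribution function of $\|\sqrt m(\widehat\psi_m-\psi)\|_\infty$ up to the negligible term above, and this converges to $J(t)=\Pp(\|\mathbb T\|_\infty\le t)$ at continuity points of $J$. Its variance tends to zero by the Hoeffding U-statistic bound, since $L_{n,m}$ is a U-statistic of order $m$ with bounded kernel and has variance $O(m/n)$ \citep{PolitisRomanoWolf1999}. Combining these gives $L_{n,m}(t)\to J(t)$ in probability at continuity points, and the quantile convergence follows from the continuity of $J$ at the critical value. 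The band coverage then follows from the first paragraph.

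The main obstacle is the step where \Cref{ass:directional} is required to give the same limit on subsamples when nuisances are re-estimated. This is a high-level assumption, and I would cite it, with \Cref{prop:directionalfinite} as the primitive instance. A second delicate point is that the directional derivative is nonlinear, so the bootstrap would fail. Subsampling avoids this problem because it never linearizes $\dot\Psi$: it only needs the limit of the statistic to exist and to be the same at rates $n$ and $m$.
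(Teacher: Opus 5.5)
Your proposal is correct and follows the same route as the paper's proof. Both apply the same Hadamard directionally differentiable map to primitive estimators with a common tight limit at rates $\sqrt n$ and $\sqrt m$, both treat the recentring at $\widehat\psi$ as negligible because $m/n\to0$, and both get quantile consistency from continuity of the sup-norm limit. You spell out the Politis--Romano--Wolf U-statistic step more explicitly than the paper does.

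There is one small imprecision. The Hoeffding variance bound applies to the subsample fraction centred at $\psi$, not at $\widehat\psi$, since $\widehat\psi$ depends on the whole sample. You pass to your $L_{n,m}(t)$ by the usual sandwich between the $\psi$-centred fractions at $t\pm\epsilon$, on the event $\sqrt m\,\|\widehat\psi-\psi\|_\infty\le\epsilon$, using continuity of $J$ at $t$.
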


\subsection{Quantile inversion and sensitivity-frontier inference}\label{sec:quantile}

The CDF-bound process is the primary regular object.  Quantile and QTE bounds are obtained by generalized inverse maps.  This subsection therefore uses CDF-band inversion as the default inference method, consistent with the treatment of counterfactual distribution and quantile processes in \citet{ChernozhukovFernandezValMelly2013}.  The optional Wald-type quantile influence functions are stated only under additional smoothness assumptions.

Let $c_{1-\alpha}$ be a critical value satisfying
\[
        \Pp\left(
        \sup_{(a,\sigma,y,s)\in\mathcal I}
        \sqrt n\,|\widehat\psi^\sigma_{a,s}(y)-\psi^\sigma_{a,s}(y)|
        \le c_{1-\alpha}
        \right)\to 1-\alpha.
\]
It may be obtained by multiplier bootstrap under \Cref{ass:active} or by subsampling under the weaker conditions of \Cref{cor:subsampling}.  Define raw truncated CDF bands
\[
        \widehat L^{\sigma,raw}_{a,s}(y)=\max\{0,\widehat\psi^\sigma_{a,s}(y)-c_{1-\alpha}/\sqrt n\},
        \qquad
        \widehat U^{\sigma,raw}_{a,s}(y)=\min\{1,\widehat\psi^\sigma_{a,s}(y)+c_{1-\alpha}/\sqrt n\}.
\]
Because a one-step estimator need not be monotone in $y$, the inverse bands use monotone outer envelopes
\[
        \widehat L^\sigma_{a,s}(y)=\sup_{z\le y}\widehat L^{\sigma,raw}_{a,s}(z),
        \qquad
        \widehat U^\sigma_{a,s}(y)=\inf_{z\ge y}\widehat U^{\sigma,raw}_{a,s}(z).
\]
On the event $\widehat L^{\sigma,raw}_{a,s}\le\psi^\sigma_{a,s}\le\widehat U^{\sigma,raw}_{a,s}$ for all $y$, these monotone envelopes still satisfy $\widehat L^\sigma_{a,s}\le\psi^\sigma_{a,s}\le\widehat U^\sigma_{a,s}$ for all $y$, because each $\psi^\sigma_{a,s}$ is nondecreasing.

For an arbitrary CDF band $L\le F\le U$, define the inverse band operators
\[
        \mathcal Q_L(\tau;L,U)=\inf\{y:U(y)\ge \tau\},
        \qquad
        \mathcal Q_U(\tau;L,U)=\inf\{y:L(y)\ge \tau\}.
\]
The confidence bands for the arm-specific quantile bounds are
\begin{align*}
        \widehat q^{-,lo}_{a,s}(\tau)&=\mathcal Q_L(\tau;\widehat L^+_{a,s},\widehat U^+_{a,s}),
        &\widehat q^{-,hi}_{a,s}(\tau)&=\mathcal Q_U(\tau;\widehat L^+_{a,s},\widehat U^+_{a,s}),\\
        \widehat q^{+,lo}_{a,s}(\tau)&=\mathcal Q_L(\tau;\widehat L^-_{a,s},\widehat U^-_{a,s}),
        &\widehat q^{+,hi}_{a,s}(\tau)&=\mathcal Q_U(\tau;\widehat L^-_{a,s},\widehat U^-_{a,s}).
\end{align*}

\begin{theorem}[Honest quantile-bound inference by CDF-band inversion]\label{thm:quantileband}
If the simultaneous CDF band has asymptotic coverage at least $1-\alpha$, then
\[
        \Pp\left(
        q^-_{a,s}(\tau)\in[\widehat q^{-,lo}_{a,s}(\tau),\widehat q^{-,hi}_{a,s}(\tau)],\;
        q^+_{a,s}(\tau)\in[\widehat q^{+,lo}_{a,s}(\tau),\widehat q^{+,hi}_{a,s}(\tau)]
        \text{ for all }(a,s,\tau)\right)
        \ge 1-\alpha+o(1).
\]
This statement does not require positive densities at the quantiles or uniqueness of the inverse.
\end{theorem}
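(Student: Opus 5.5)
The argument is deterministic: on the coverage event, the inverse operators are monotone, so the proposed intervals contain the true quantile endpoints. Let $E_n$ be the event that
\[
\sqrt n\,|\widehat\psi^\sigma_{a,s}(y)-\psi^\sigma_{a,s}(y)|\le c_{1-\alpha}
\quad\text{for all }(a,\sigma,y,s)\in\mathcal I .
\]
By hypothesis, $\Pp(E_n)\ge 1-\alpha+o(1)$. On $E_n$ the raw truncated bands satisfy $\widehat L^{\sigma,raw}_{a,s}\le\psi^\sigma_{a,s}\le\widehat U^{\sigma,raw}_{a,s}$. The truncation at $0$ and $1$ is harmless because $\psi^\sigma_{a,s}$ takes values in $[0,1]$ by \Cref{thm:cdfsharp}. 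By the remark preceding the theorem, which uses that $\psi^\sigma_{a,s}$ is nondecreasing (again \Cref{thm:cdfsharp}), the monotone envelopes also satisfy
\[
\widehat L^\sigma_{a,s}\le\psi^\sigma_{a,s}\le\widehat U^\sigma_{a,s}
\quad\text{for all }y .
\]
It therefore suffices to show that on $E_n$ every containment in the display holds simultaneously for all $(a,s,\tau)$.

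The key step is a monotonicity lemma for the generalized inverse. If $F_1\le F_2$ pointwise, then $\{y:F_1(y)\ge\tau\}\subseteq\{y:F_2(y)\ge\tau\}$, and so
\[
\inf\{y:F_2(y)\ge\tau\}\le\inf\{y:F_1(y)\ge\tau\}.
\]
Apply this with $\sigma=+$ and the sandwich $\widehat L^+_{a,s}\le\psi^+_{a,s}\le\widehat U^+_{a,s}$. Since $q^-_{a,s}(\tau)=\inf\{y:\psi^+_{a,s}(y)\ge\tau\}$ by \Cref{cor:qte}, we obtain
\[
\widehat q^{-,lo}_{a,s}(\tau)=\inf\{y:\widehat U^+_{a,s}(y)\ge\tau\}\le q^-_{a,s}(\tau)\le\inf\{y:\widehat L^+_{a,s}(y)\ge\tau\}=\widehat q^{-,hi}_{a,s}(\tau).
\]
The same argument with $\sigma=-$ and $q^+_{a,s}(\tau)=\inf\{y:\psi^-_{a,s}(y)\ge\tau\}$ gives the containment for $q^+_{a,s}(\tau)$. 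No density, continuity or uniqueness is used; the argument relies only on pointwise ordering of functions and set inclusion. This is why the theorem does not need positive densities at the quantiles or uniqueness of the inverse.

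One technical point needs care: the conventions at empty sets and at the boundary of $\calY$. If $\widehat L^+_{a,s}(y)<\tau$ for all $y\in\calY$, the infimum is over the empty set. I would adopt the convention $\inf\emptyset=+\infty$ or $\bar y$ and note that the inclusion argument still gives the right ordering: the lower CDF's quantile lies at or above every candidate. Since $\psi^\sigma_{a,s}(\bar y)=1\ge\tau$, the true endpoints are always finite. If the band is restricted to $y\in\calY$, then values below $\underline y$ can also be handled by extending all functions by $0$ to the left. These conventions are the only delicate part, and they are bookkeeping rather than a real obstacle.

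Finally, everything holds on the single event $E_n$, uniformly over $(a,s,\tau)$ with $\tau\in\calT$. Hence the probability of simultaneous coverage is at least $\Pp(E_n)\ge1-\alpha+o(1)$.
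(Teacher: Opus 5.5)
Your proposal is correct and follows essentially the same argument as the paper. On the single simultaneous band event, the raw and monotone-envelope bands sandwich each $\psi^\sigma_{a,s}$ because $\psi^\sigma_{a,s}$ is nondecreasing; taking infima over the nested sets $\{y:\widehat U(y)\ge\tau\}\supseteq\{y:\psi(y)\ge\tau\}\supseteq\{y:\widehat L(y)\ge\tau\}$ then gives the quantile containments with no density or uniqueness conditions, and your explicit empty-set and boundary conventions are a harmless addition the paper leaves implicit.
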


Define an outer confidence band for the QTE identified set by
\begin{align*}
        \widehat\Delta^{lo}_s(\tau)
        &=\widehat q^{-,lo}_{1,s}(\tau)-\widehat q^{+,hi}_{0,s}(\tau),\\
        \widehat\Delta^{hi}_s(\tau)
        &=\widehat q^{+,hi}_{1,s}(\tau)-\widehat q^{-,lo}_{0,s}(\tau).
\end{align*}

\begin{corollary}[Honest QTE interval-hull confidence band]\label{cor:qteband}
Under the conditions of \Cref{thm:quantileband}, with probability at least $1-\alpha+o(1)$,
\[
        [\Delta^-_s(\tau),\Delta^+_s(\tau)]
        \subseteq
        [\widehat\Delta^{lo}_s(\tau),\widehat\Delta^{hi}_s(\tau)]
\]
for all $(s,\tau)\in\calS\times\calT$.
\end{corollary}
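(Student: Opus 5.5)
The plan is to work on the event where the simultaneous CDF band covers, and to reduce everything to \Cref{thm:quantileband}. Let $E_n$ be the event that $\widehat L^{\sigma}_{a,s}(y)\le\psi^\sigma_{a,s}(y)\le\widehat U^{\sigma}_{a,s}(y)$ for all $(a,\sigma,y,s)\in\mathcal I$. By the monotone-envelope remark preceding the theorem, $E_n$ contains the raw coverage event, so $\Pp(E_n)\ge1-\alpha+o(1)$. The proof of \Cref{thm:quantileband} shows that on $E_n$ every arm-specific quantile bound is bracketed. Concretely, for all $(a,s,\tau)$,
\[
        \widehat q^{-,lo}_{a,s}(\tau)\le q^-_{a,s}(\tau)\le\widehat q^{-,hi}_{a,s}(\tau),
        \qquad
        \widehat q^{+,lo}_{a,s}(\tau)\le q^+_{a,s}(\tau)\le\widehat q^{+,hi}_{a,s}(\tau).
\]
I will restate this deterministic fact on $E_n$ briefly rather than only invoke the probability statement. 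For any nondecreasing $F$ with $L\le F\le U$, we have $\{y:L(y)\ge\tau\}\subseteq\{y:F(y)\ge\tau\}\subseteq\{y:U(y)\ge\tau\}$. Taking infima gives $\mathcal Q_L(\tau;L,U)\le\inf\{y:F(y)\ge\tau\}\le\mathcal Q_U(\tau;L,U)$. Applied with $F=\psi^+_{a,s}$ this yields the bracket for $q^-_{a,s}$, and with $F=\psi^-_{a,s}$ the bracket for $q^+_{a,s}$, both by \Cref{cor:qte}. If a set is empty the infimum is read in the extended reals, and the convention is the same on both sides, so the inclusions survive.

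Next I combine the brackets arithmetically. By \Cref{cor:qte}, $\Delta^-_s(\tau)=q^-_{1,s}(\tau)-q^+_{0,s}(\tau)$. On $E_n$ we have $q^-_{1,s}\ge\widehat q^{-,lo}_{1,s}$ and $q^+_{0,s}\le\widehat q^{+,hi}_{0,s}$, hence $\Delta^-_s(\tau)\ge\widehat\Delta^{lo}_s(\tau)$. Symmetrically, $\Delta^+_s(\tau)=q^+_{1,s}-q^-_{0,s}\le\widehat q^{+,hi}_{1,s}-\widehat q^{-,lo}_{0,s}=\widehat\Delta^{hi}_s(\tau)$. Since $\Delta^-_s\le\Delta^+_s$ (because $\psi^-\le\psi^+$ by \Cref{thm:cdfsharp}, so $q^-\le q^+$), the sharp hull is a genuine interval and is contained in $[\widehat\Delta^{lo}_s(\tau),\widehat\Delta^{hi}_s(\tau)]$. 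This holds simultaneously for all $(s,\tau)\in\calS\times\calT$ because a single event $E_n$ was used.

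The only delicate point, which I expect to be the main obstacle, is bookkeeping rather than analysis. Coverage must be simultaneous over both arms and both signs $\sigma$, so that one event controls all four quantile endpoints entering the QTE contrast; the critical value $c_{1-\alpha}$ is defined with the supremum over all of $\mathcal I$, which guarantees this. Finiteness of the quantiles for $\tau\in\calT\subset(0,1)$ also needs checking. The truth is finite because $\psi^\pm$ are CDFs on the compact $\calY$. The estimated endpoints may be infinite, but then the inclusion holds trivially in the extended reals. No density or uniqueness condition enters, which matches \Cref{thm:quantileband}.
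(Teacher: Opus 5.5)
Your proposal is correct and matches the paper's proof: on the simultaneous CDF-band event of \Cref{thm:quantileband}, you bracket the four arm-specific quantile endpoints and combine $q^-_{1,s}(\tau)\ge\widehat q^{-,lo}_{1,s}(\tau)$ with $q^+_{0,s}(\tau)\le\widehat q^{+,hi}_{0,s}(\tau)$, and symmetrically for the other pair, to get $\widehat\Delta^{lo}_s(\tau)\le\Delta^-_s(\tau)$ and $\Delta^+_s(\tau)\le\widehat\Delta^{hi}_s(\tau)$ uniformly in $(s,\tau)$. Your additional bookkeeping is sound but goes beyond what the paper spells out: it covers simultaneity over arms and signs through the supremum over $\mathcal I$, the ordering $\Delta^-_s\le\Delta^+_s$, and the extended-real convention for possibly infinite estimated endpoints.
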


\begin{assumption}[Smooth regular quantiles, optional]\label{ass:smoothq}
For each $(a,\sigma,s,\tau)$, the CDF $\psi^\sigma_{a,s}$ is continuously differentiable in a neighborhood of the relevant quantile and its derivative is bounded away from zero and infinity uniformly over the index set.
\end{assumption}

\begin{corollary}[Regular quantile influence functions]\label{cor:qif}
Under \Cref{ass:smoothq} and active-set regularity, the quantile endpoint maps are pathwise differentiable.  Let $f^\sigma_{a,s}(q)$ denote the derivative of $\psi^\sigma_{a,s}$ at $q$.  Then
\[
        IF\{q^-_{a,s}(\tau)\}
        =-\frac{\phi^+_{a,s,q^-_{a,s}(\tau)}(O)}{f^+_{a,s}(q^-_{a,s}(\tau))},
        \qquad
        IF\{q^+_{a,s}(\tau)\}
        =-\frac{\phi^-_{a,s,q^+_{a,s}(\tau)}(O)}{f^-_{a,s}(q^+_{a,s}(\tau))}.
\]
The corresponding QTE endpoint influence functions are obtained by subtracting the appropriate arm-specific quantile influence functions.
\end{corollary}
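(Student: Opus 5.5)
The plan is to apply the functional delta method for the generalized inverse at points where the CDF envelope is smooth with positive density, and then compose with the canonical gradients of \Cref{thm:eif}. Fix $(a,s,\tau)$ and take the lower quantile endpoint $q^-_{a,s}(\tau)=\inf\{y:\psi^+_{a,s}(y)\ge\tau\}$. Along a regular parametric path $P_t$ with score $S$, write $\psi^+_{a,s,t}$ for the perturbed upper envelope and $q_t=q^-_{a,s,t}(\tau)$. By \Cref{ass:smoothq}, $\psi^+_{a,s}$ is $C^1$ near $q_0=q^-_{a,s}(\tau)$ with derivative $f^+_{a,s}(q_0)>0$. Hence $q_0$ is the unique solution of $\psi^+_{a,s}(y)=\tau$ in a neighbourhood, and the generalized inverse coincides with the ordinary inverse there.

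The first step is to show that $t\mapsto\psi^+_{a,s,t}$ is differentiable uniformly in $y$ on a neighbourhood $N$ of $q_0$, with derivative $E\{\phi^+_{a,s,y}(O)S(O)\}$, and that this derivative is continuous in $y$. Pointwise differentiability is \Cref{lem:tangent} and \Cref{thm:eif}, which need active-set regularity on $N$; we assume $N\times\{s\}$ lies in $\mathcal I_{\mathrm{reg}}$. Uniformity follows because the remainder of the expansion is controlled by the $L_2$ continuity of the piecewise-affine map $G^+_s$ on a fixed active region and by boundedness of the scores. Continuity in $y$ of $y\mapsto E\{\phi^+_{a,s,y}S\}$ follows from right-continuity of $p_a(\cdot,x)$, dominated convergence, and continuity of $\psi^+_{a,s}$. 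If $p_a$ has atoms, the last point needs care; I would use the assumed $C^1$ property of $\psi^+$ together with local constancy of the active derivatives $G_p,G_e$.

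The second step is the standard inverse-map argument, as in the quantile delta method of \citet{vdV2000}, Lemma 21.3. Since $\psi^+_{a,s,t}\to\psi^+_{a,s}$ uniformly on $N$ and $f^+_{a,s}$ is bounded below, we get $q_t\to q_0$. Moreover, $\psi^+_{a,s,t}(q_t)=\tau+o(t)$, where the $o(t)$ absorbs possible jumps of the perturbed envelope; these are $o(t)$ by uniform differentiability, because the base envelope is continuous. Expanding the equation gives
\[
\psi^+_{a,s}(q_t)-\psi^+_{a,s}(q_0)+t\,E\{\phi^+_{a,s,q_0}S\}=o(t).
\]
Dividing by $t$ and using the mean value theorem yields
\[
\left.\frac{d}{dt}q_t\right|_{t=0}=-E\{\phi^+_{a,s,q_0}S\}/f^+_{a,s}(q_0).
\]
This shows pathwise differentiability with gradient $-\phi^+_{a,s,q_0}/f^+_{a,s}(q_0)$. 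This gradient is mean zero and lies in the closure of the tangent space, because $\phi^+$ does by \Cref{thm:eif}, so it is the canonical gradient. The upper endpoint is handled identically with $\psi^-$ in place of $\psi^+$. The QTE endpoints $\Delta^\mp_s(\tau)$ are differences of arm-specific quantile endpoints, so by linearity of pathwise derivatives their influence functions are the corresponding differences.

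The main obstacle is the first step. Uniform-in-$y$ differentiability near $q_0$ requires the active sets to stay fixed on a neighbourhood of $q_0$ and also along the path. Otherwise the derivative of $\psi^+_{a,s,t}$ jumps and the inverse argument fails. I would handle this by reading ``active-set regularity'' in the statement as requiring a neighbourhood of $q^\mp_{a,s}(\tau)$ inside $\mathcal I_{\mathrm{reg}}$ with margins bounded away from zero. Regular paths move $p_a$ and $e_a$ by $O(t)$, so the active sets cannot change for small $t$.
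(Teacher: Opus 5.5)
Your proposal is correct and follows the same route as the paper: the paper's proof invokes Hadamard differentiability of the generalized inverse at a CDF with positive continuous density (derivative $-h(q)/f(q)$), composes it with the CDF-level canonical gradients from \Cref{thm:eif}, and obtains the QTE endpoint formulas by linearity. Your path-by-path version of that delta-method argument supplies details the paper leaves implicit. These are the uniform-in-$y$ differentiability of the bound process near the quantile, continuity of the direction at $q^\mp_{a,s}(\tau)$, and the requirement that a neighbourhood of the quantile lie in the regular index set so that the active sets stay fixed.
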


\subsubsection{Breakdown frontier inference}\label{sec:frontier}

The final theoretical object is the sensitivity frontier.  It summarizes how much joint departure from internal validity and external validity is needed before the null value can no longer be excluded by the sharp QTE interval hull.  This hull language is deliberate.  As \Cref{rem:intervalhull} shows, exact scalar QTE sets may be nonconnected in discrete or gapped outcome distributions.

For $s\in\calS$ and $\tau\in\calT$, define the interval-hull endpoints
\[
        \Delta^-_s(\tau)=q^-_{1,s}(\tau)-q^+_{0,s}(\tau),
        \qquad
        \Delta^+_s(\tau)=q^+_{1,s}(\tau)-q^-_{0,s}(\tau),
\]
and the signed interval-hull non-refutation function
\[
        \kappa^{\mathrm{hull}}_s(\tau)=\min\{\Delta^+_s(\tau),-\Delta^-_s(\tau)\}.
\]
The null value zero belongs to the sharp QTE interval hull if and only if $\kappa^{\mathrm{hull}}_s(\tau)\ge0$.  The interval-hull non-refutation region and its breakdown frontier are
\[
        \mathcal N^{\mathrm{hull}}(\tau)=\{s\in\calS:\kappa^{\mathrm{hull}}_s(\tau)\ge0\},
        \qquad
        \mathcal F^{\mathrm{hull}}(\tau)=\{s\in\calS:\kappa^{\mathrm{hull}}_s(\tau)=0\}.
\]
Under \Cref{ass:connected}, these coincide with exact scalar non-refutation objects.

Let $\widehat\kappa^{\mathrm{hull}}_s(\tau)$ be the plug-in estimate obtained from the estimated QTE endpoints.  Suppose a critical value $d_{1-\alpha}$ satisfies
\[
        \Pp\left(
        \sup_{(s,\tau)\in\calS\times\calT}
        \sqrt n|\widehat\kappa^{\mathrm{hull}}_s(\tau)-\kappa^{\mathrm{hull}}_s(\tau)|
        \le d_{1-\alpha}
        \right)\to1-\alpha.
\]
Define inner and outer confidence sets
\[
        \widehat{\mathcal N}^{in}_{\alpha}(\tau)
        =\{s:\widehat\kappa^{\mathrm{hull}}_s(\tau)-d_{1-\alpha}/\sqrt n\ge0\},
        \qquad
        \widehat{\mathcal N}^{out}_{\alpha}(\tau)
        =\{s:\widehat\kappa^{\mathrm{hull}}_s(\tau)+d_{1-\alpha}/\sqrt n\ge0\}.
\]

\begin{assumption}[Regular interval-hull non-refutation process]\label{ass:kappa}
On the index set $\calS\times\calT$, \Cref{ass:active,ass:smoothq} hold for all quantile endpoints entering $\Delta^-_s(\tau)$ and $\Delta^+_s(\tau)$.  In addition, either the two arguments of the minimum defining $\kappa^{\mathrm{hull}}_s(\tau)$ are separated uniformly, or inference for $\kappa^{\mathrm{hull}}$ uses the directional/subsampling route described in \Cref{cor:subsampling}.
\end{assumption}

\begin{proposition}[Uniform expansion for the interval-hull non-refutation function]\label{prop:kappa}
Under the smooth separated case of \Cref{ass:kappa},
\[
        \sqrt n\{\widehat\kappa^{\mathrm{hull}}_s(\tau)-\kappa^{\mathrm{hull}}_s(\tau)\}
        =\mathbb G_n\varphi_{\kappa,s,\tau}+o_p(1)
        \quad\text{in }\ell^\infty(\calS\times\calT),
\]
where $\varphi_{\kappa,s,\tau}=\varphi^+_{s,\tau}$ on the region where $\Delta^+_s(\tau)<-\Delta^-_s(\tau)$ and $\varphi_{\kappa,s,\tau}=-\varphi^-_{s,\tau}$ on the region where $-\Delta^-_s(\tau)<\Delta^+_s(\tau)$.  Here
\[
        \varphi^-_{s,\tau}=IF\{q^-_{1,s}(\tau)\}-IF\{q^+_{0,s}(\tau)\},
        \qquad
        \varphi^+_{s,\tau}=IF\{q^+_{1,s}(\tau)\}-IF\{q^-_{0,s}(\tau)\}.
\]
A multiplier bootstrap applied to $\varphi_{\kappa,s,\tau}$ consistently estimates the distribution of $\sup_{s,\tau}|\mathbb G_n\varphi_{\kappa,s,\tau}|$ under the same empirical-process conditions as \Cref{thm:ual}.
\end{proposition}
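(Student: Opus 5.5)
The argument is a functional delta method built on \Cref{thm:ual} and \Cref{cor:qif}, followed by one more delta step through the minimum defining $\kappa^{\mathrm{hull}}$, which is linear on each separated region. I will write $\widehat q^\pm_{a,s}(\tau)$ for the plug-in inverses of the one-step CDF-bound estimators, with monotone rearrangement applied if needed; this does not change first-order behaviour under \Cref{ass:smoothq}.

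\textbf{Step 1: quantile endpoints.} Under \Cref{ass:kappa}, \Cref{thm:ual} gives
\[
\sqrt n(\widehat\psi^\sigma_{a,s}-\psi^\sigma_{a,s})=\mathbb G_n\phi^\sigma_{a,s,\cdot}+o_p(1)
\]
in $\ell^\infty$ over a neighbourhood of the relevant quantiles, uniformly in $s$. The inverse map $F\mapsto F^{-1}$ is Hadamard differentiable tangentially to continuous directions at CDFs whose derivative is bounded away from zero near the quantile (van der Vaart, Lemma 21.4). This differentiability holds uniformly over $\calT$ and over $s\in\calS$ by the uniform density bounds in \Cref{ass:smoothq}. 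Applying the functional delta method yields
\[
\sqrt n(\widehat q^\mp_{a,s}(\tau)-q^\mp_{a,s}(\tau))=\mathbb G_n IF\{q^\mp_{a,s}(\tau)\}+o_p(1)
\]
uniformly, with the influence functions of \Cref{cor:qif}. One point needs checking: the limit process $\mathbb G\phi$ must have continuous paths in $y$ at the relevant quantiles, so that the tangential condition applies. I expect this to follow from $L_2$-continuity of $y\mapsto\phi^\sigma_{a,s,y}$. That continuity in turn comes from continuity of $\psi$ and $p_a$ in $y$ near the quantiles, together with the fixed active sets given by \Cref{ass:active}.

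\textbf{Step 2: QTE endpoints.} Subtracting the arm-specific expansions gives
\[
\sqrt n(\widehat\Delta^\pm_s-\Delta^\pm_s)=\mathbb G_n\varphi^\pm_{s,\tau}+o_p(1)
\]
in $\ell^\infty(\calS\times\calT)$, because sums of uniformly linear processes remain uniformly linear.

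\textbf{Step 3: the minimum.} Write $\kappa=\min\{\Delta^+,-\Delta^-\}$. The smooth separated case assumes $|\Delta^+_s(\tau)+\Delta^-_s(\tau)|\ge c>0$ uniformly. By Step 2, $\sup_{s,\tau}|\widehat\Delta^\pm-\Delta^\pm|=O_p(n^{-1/2})$. Hence, with probability tending to one, the estimated active argument of the minimum equals the true one simultaneously for all $(s,\tau)$. On that event $\widehat\kappa-\kappa$ equals either $\widehat\Delta^+-\Delta^+$ or $-(\widehat\Delta^--\Delta^-)$ exactly. This gives the stated expansion with $\varphi_\kappa$ defined piecewise. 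The piecewise selection introduces no discontinuity problem, since the selection is deterministic and fixed.

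\textbf{Step 4: bootstrap.} The class $\{\varphi_{\kappa,s,\tau}\}$ is a finite union of classes of the form $\phi/f$, each restricted to a fixed region. Each such class is a composition of the Donsker class in \Cref{thm:ual} with a reparametrisation $(s,\tau)\mapsto q$ that is continuous and uniformly bounded, and with bounded multipliers $1/f$. It is therefore Donsker and pre-Gaussian. A multiplier bootstrap using estimated influence functions (estimated $\phi$, a kernel estimate of $f$, plug-in quantiles) is then consistent for the law of the supremum by the standard conditional multiplier CLT. The consistency additionally requires uniform $L_2$ consistency of the estimated influence functions, which I expect to follow from the first-stage conditions together with a consistent density estimator.

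\textbf{Main obstacle.} I expect the hardest part to be the uniformity of the inverse-map delta step over $(s,\tau)$ jointly. In particular, I need to show that composing $\mathbb G_n\phi^\sigma_{a,s,y}$ at random points $y=\widehat q$ versus the true $q$ costs only $o_p(1)$. I would handle this with asymptotic equicontinuity of the Donsker process from \Cref{thm:ual}, combined with $\sup|\widehat q-q|=o_p(1)$. That rate follows from the uniform lower bound on the derivative of $\psi$.
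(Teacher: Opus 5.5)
Your proposal is correct and follows the same route as the paper's proof: uniform quantile-endpoint expansions from \Cref{cor:qif} (which you re-derive via Hadamard differentiability of the inverse map), then linearity to pass to $\Delta^\pm_s(\tau)$, then local linearity of the minimum under the uniform separation margin (so the estimated active branch agrees with the true one with probability tending to one), and finally the conditional multiplier CLT for the bootstrap. You are more explicit than the paper about uniformity of the inverse-map step over $(s,\tau)$ and about needing consistent estimates of the densities $f^\sigma_{a,s}$ when the bootstrap uses estimated influence functions, but these are refinements of the same argument, not a different approach.
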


\begin{theorem}[Sensitivity-region and local frontier inference for the interval hull]
\label{thm:frontier}
If the preceding uniform band for $\kappa^{\mathrm{hull}}$ is valid, then
\[
        \Pp\left(
        \widehat{\mathcal N}^{in}_{\alpha}(\tau)
        \subseteq \mathcal N^{\mathrm{hull}}(\tau)
        \subseteq
        \widehat{\mathcal N}^{out}_{\alpha}(\tau)
        \text{ for all }\tau\in\calT
        \right)
        \ge1-\alpha+o(1).
\]
For the frontier-location statement, let
\[
        \mathcal F^{\mathrm{hull}}_{\mathrm{int}}(\tau)
        =\{s\in\operatorname{int}(\calS):
        \kappa^{\mathrm{hull}}_s(\tau)=0\}.
\]
Let $K\subset\operatorname{int}(\calS)$ be a known compact set.  Assume that $\mathcal F^{\mathrm{hull}}_{\mathrm{int}}(\tau)$ is nonempty and contained in $\operatorname{int}(K)$ uniformly in $\tau$, that $\kappa^{\mathrm{hull}}_s(\tau)$ is continuously differentiable in $s$ on an open neighborhood of $K$, and that
\[
        \inf_{s\in\mathcal F^{\mathrm{hull}}_{\mathrm{int}}(\tau),\tau\in\calT}
        \|\nabla_s\kappa^{\mathrm{hull}}_s(\tau)\|>0.
\]
Assume also that the frontier is isolated inside $K$: for every fixed $\rho>0$ small enough,
\[
        \inf_{\tau\in\calT}\inf\big\{|\kappa^{\mathrm{hull}}_s(\tau)|:
        s\in K,\ \operatorname{dist}(s,\mathcal F^{\mathrm{hull}}_{\mathrm{int}}(\tau))\ge \rho\big\}>0.
\]
Define the local outer zero-level set on $K$ by
\[
        \widehat{\mathcal F}^{out}_{\alpha,K}(\tau)
        =\Big\{s\in K:
        \widehat\kappa^{\mathrm{hull}}_s(\tau)+d_{1-\alpha}/\sqrt n=0\Big\}.
\]
This statement treats $\widehat\kappa^{\mathrm{hull}}$ as a continuous estimator on $K$. In grid implementations, the displayed set is understood after continuous interpolation, or equivalently up to an additional mesh error; the same rate holds when the grid mesh is $o(n^{-1/2})$.
Then
\[
        d_H\{\widehat{\mathcal F}^{out}_{\alpha,K}(\tau),
        \mathcal F^{\mathrm{hull}}_{\mathrm{int}}(\tau)\}=O_p(n^{-1/2})
\]
uniformly in $\tau$, where $d_H$ is Hausdorff distance.  This is a local level-set statement on an interior compact set.  The ordinary topological boundary of $\widehat{\mathcal N}^{out}_{\alpha}(\tau)$ inside the rectangle $\calS$ may contain artificial pieces of $\partial\calS$ and is not claimed to estimate the frontier.  Under \Cref{ass:connected}, the same local statement applies to exact scalar QTE non-refutation.
\end{theorem}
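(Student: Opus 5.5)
The proof splits into two parts: the coverage statement for the non-refutation region, and the Hausdorff rate for the local frontier.

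\emph{Coverage.} Let $E_n$ be the event
\[
\sup_{(s,\tau)\in\calS\times\calT}\sqrt n\,|\widehat\kappa^{\mathrm{hull}}_s(\tau)-\kappa^{\mathrm{hull}}_s(\tau)|\le d_{1-\alpha},
\]
which has probability $1-\alpha+o(1)$ by the assumed validity of the band. On $E_n$ we have, uniformly in $(s,\tau)$,
\[
\widehat\kappa^{\mathrm{hull}}_s(\tau)-d_{1-\alpha}/\sqrt n\le\kappa^{\mathrm{hull}}_s(\tau)\le\widehat\kappa^{\mathrm{hull}}_s(\tau)+d_{1-\alpha}/\sqrt n .
\]
So if $s\in\widehat{\mathcal N}^{in}_\alpha(\tau)$, then $\kappa^{\mathrm{hull}}_s(\tau)\ge0$. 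Likewise, if $\kappa^{\mathrm{hull}}_s(\tau)\ge0$, then $\widehat\kappa^{\mathrm{hull}}_s(\tau)+d_{1-\alpha}/\sqrt n\ge0$. Because the band is simultaneous over $\tau$, the double inclusion holds for all $\tau$ on the single event $E_n$. This gives the first display. Under \Cref{ass:connected}, \Cref{cor:exactconnected} identifies the hull objects with the exact scalar ones, which yields the final sentence.

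\emph{Frontier rate.} Write $\widetilde\kappa_s(\tau)=\widehat\kappa^{\mathrm{hull}}_s(\tau)+d_{1-\alpha}/\sqrt n$. The band validity plus tightness of $d_{1-\alpha}$ give $\epsilon_n:=\sup_{s\in K,\tau}|\widetilde\kappa_s(\tau)-\kappa^{\mathrm{hull}}_s(\tau)|=O_p(n^{-1/2})$. The Hausdorff distance has two halves.

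\emph{(i) Every point of $\widehat{\mathcal F}^{out}_{\alpha,K}$ is close to the true frontier.} Take $s\in K$ with $\widetilde\kappa_s=0$. Then $|\kappa^{\mathrm{hull}}_s|\le\epsilon_n$. By the isolation condition and $\epsilon_n\to0$, with probability tending to one $s$ lies in a fixed $\rho_0$-neighbourhood of $\mathcal F^{\mathrm{hull}}_{\mathrm{int}}(\tau)$. On that neighbourhood we use a local linear growth bound
\[
|\kappa^{\mathrm{hull}}_s(\tau)|\ge c\,\operatorname{dist}(s,\mathcal F^{\mathrm{hull}}_{\mathrm{int}}(\tau)) .
\]
To obtain it, let $s_0$ be a nearest frontier point. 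Move from $s$ along the gradient line at $s_0$. By $C^1$ regularity on a neighbourhood of the compact $K$ (uniform continuity of $\nabla_s\kappa$) and the lower bound $\|\nabla_s\kappa\|\ge c_0>0$ on the frontier, the gradient stays bounded below on a uniform neighbourhood. A mean-value argument then gives the growth bound with $c$ uniform in $\tau$. Hence $\operatorname{dist}(s,\mathcal F^{\mathrm{hull}}_{\mathrm{int}})\le\epsilon_n/c$.

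\emph{(ii) Every true frontier point is close to $\widehat{\mathcal F}^{out}_{\alpha,K}$.} Take $s_0\in\mathcal F^{\mathrm{hull}}_{\mathrm{int}}(\tau)$ with unit normal $\nu=\nabla\kappa/\|\nabla\kappa\|$. For a fixed $M$, the points $s_0\pm (M\epsilon_n)\nu$ lie in $K$ with probability tending to one, since the frontier sits in $\operatorname{int}(K)$ uniformly. By the uniform gradient bound, $\kappa^{\mathrm{hull}}\gtrless\pm cM\epsilon_n$ at these two points. Choosing $cM>1$, $\widetilde\kappa$ takes opposite signs at the two endpoints. Continuity of $\widehat\kappa^{\mathrm{hull}}$ on $K$ then gives a zero of $\widetilde\kappa$ on the segment, so $\operatorname{dist}(s_0,\widehat{\mathcal F}^{out})\le M\epsilon_n$.

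In grid implementations, interpolation adds at most the mesh size, which is $o(n^{-1/2})$.

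The main obstacle is making the constants in (i) and (ii) uniform in $\tau$, that is, obtaining a uniform neighbourhood radius and gradient lower bound. I would handle this with the compactness of $K$ and $\calT$ together with joint continuity of $\nabla_s\kappa^{\mathrm{hull}}_s(\tau)$. If that joint continuity is not available, I would instead state the growth bound directly, using the assumed uniform infimum over frontier points and $\tau$ as the source of uniformity.
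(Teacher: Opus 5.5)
Your proposal is correct and follows essentially the same route as the paper. The paper proves the coverage inclusions on the simultaneous band event. For the frontier, it uses isolation to confine estimated zeros to a tube, a gradient lower bound to get normal distance $O_p(n^{-1/2})$, and the intermediate value theorem along the normal for the reverse half of the Hausdorff distance. Your linear-growth bound via a nearest frontier point is an explicit version of the paper's uniform implicit-function coordinates, and the paper's proof glosses over the uniformity-in-$\tau$ issue you flag in the same way.
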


\section{Simulation study}\label{sec:simulation}

The simulations stress-test the four components of the theory rather than tuning an implementation for one favorable data-generating process.  Experiment 1 verifies the closed-form nested envelope against exact finite-dimensional linear programs and illustrates why the nested model is not equivalent to a single product likelihood-ratio relaxation.  Experiment 2 evaluates CDF-band and QTE interval-hull coverage when the data-generating law is constructed to satisfy a known sensitivity pair.  Experiment 3 replaces low-dimensional finite-cell nuisance estimators by high-dimensional machine-learning nuisance fits in an observational-source design.  Experiment 4 targets nonregular features: mass points, active-set switching, subsampling, and two-dimensional frontier inference.  Together the experiments separate identification error, nuisance-estimation error, inverse-map nonregularity, and frontier-level uncertainty.

\subsection{Common implementation}

All experiments use the observed two-sample structure of \Cref{sec:notation}: a source sample of size $n_1$ with $(X,A,Y)$ and a target sample of size $n_0$ with $X$ only.  The reported runs are produced by the accompanying notebook, which calls \texttt{finite\_support\_sharpness\_audit.py}, \texttt{simulation2\_regular\_smooth\_inference.py}, \texttt{simulation3\_hdml\_nuisance\_learning\_best.py}, and \texttt{simulation4\_nonregular\_frontier\_optimized.py}.  The implementation uses $n_{\mathrm{jobs}}=8$ for the large Monte Carlo or finite-LP components.  Exact sample sizes, Monte Carlo replications, resampling counts, grid sizes and oracle Monte Carlo sizes are given in the relevant captions and appendix tables.  This avoids a common ambiguity in simulation sections: the four experiments have different numerical goals and therefore use different grids and resampling budgets.

The main inferential procedure is the monotone CDF-band inversion method from \Cref{sec:quantile}.  We compare it with plug-in hulls without band enlargement, a point-identified transported QTE estimator obtained by setting $(\Gam,\Lam)=(1,1)$, an observational-source ablation that omits $G_e\{\ind(A=a)-e_a(X)\}$, and Wald-type quantile endpoint intervals in the nonregular experiment.  Coverage is simultaneous over the displayed quantile grid.  In Experiments 2--4, standard errors in parentheses are Monte Carlo standard errors for the reported coverage proportions.

\subsection{Experiment 1: finite-support sharpness and non-collapse}

This experiment directly audits \Cref{lem:eventlr,prop:noncollapse,prop:lpcheck,thm:cdfsharp}.  It is an identification-level audit, not an estimation experiment.  We generate independent finite-support conditional cells with outcome support size
\[
        K\in\{2,3,5,8,12,20\},
\]
draw $r=(r_1,\ldots,r_K)\sim \operatorname{Dirichlet}(2,\ldots,2)$, choose threshold events $E_y$, and set $p=\sum_{j\in E_y}r_j$.  For $e\in(0,1)$ and $\Gam,\Lam\in\{1,1.05,1.25,1.5,2,3,5,8\}$, we compute the lower and upper conditional CDF bounds by the closed form
\[
        T^-_{\Lam}\{C^-_{\Gam}(p,e)\},
        \qquad
        T^+_{\Lam}\{C^+_{\Gam}(p,e)\},
\]
and by the exact finite linear programs in \Cref{prop:lpcheck}.  We also compare the nested model with the single product-LR relaxation
\[
        \Lam^{-1}\ell_\Gam(e)r_j\le t_j\le \Lam u_\Gam(e)r_j,
        \qquad \sum_{j=1}^K t_j=1.
\]
Let $W^{\mathrm{nest}}$ and $W^{\mathrm{prod}}$ denote the resulting CDF interval widths.  Since the product-LR relaxation contains every nested feasible distribution but does not enforce the intermediate normalization of $q$, theory predicts $W^{\mathrm{prod}}\ge W^{\mathrm{nest}}$, with strict inequality on nontrivial active regions.  The run uses $6{,}000$ exact finite LP solves, $200{,}000$ vectorized algebraic cases and $20{,}000$ whole-path audits.

\subsection{Experiment 2: regular smooth inference under exact nested tilts}

Experiment 2 evaluates regular CDF-process inference and CDF-band inversion.  It uses finite covariate support $\calX=\{0,\ldots,J-1\}$ with $J=6$.  Source and target covariate masses are multinomial laws proportional to
\[
        \exp\{0.15\sin(2\pi(x+1)/J)-0.10z_x\},
        \qquad
        \exp\{0.70z_x+0.20\cos(2\pi x/J)\},
\]
respectively, where $z_x=\{x-(J-1)/2\}/\{(J-1)/2\}$, with a small uniform mixture added for overlap.  The source treatment mechanism is observational,
\[
        e_1(x)=\operatorname{expit}\{-0.10+0.85z_x+0.25\sin(2\pi x/J)\},
\]
truncated to $[0.18,0.82]$.  Conditional source-arm outcomes follow truncated normals on $[-4.5,4.5]$, with arm-specific means and scales chosen to give smooth regular quantiles.

The true target potential-outcome distribution is generated by exact nested tilts at
\[
        s_0=(\Gam_0,\Lam_0)=(1.60,1.40),
\]
specifically $F_{Y^1\mid X,R=0}=b^-_{1,s_0}$ and $F_{Y^0\mid X,R=0}=b^+_{0,s_0}$.  Thus the target law is admissible at $s_0$ by construction and the QTE is near a boundary of the interval hull.  We compare $s_{\mathrm{low}}=(1.15,1.10)$, $s_0=(1.60,1.40)$ and $s_{\mathrm{high}}=(2.20,1.80)$.  The run uses $B=300$ Monte Carlo replications, $n_1\in\{400,800,1600\}$, $n_0=1.5n_1$, $149$ bootstrap draws, a threshold grid of size $121$ and a dense oracle grid of size $2001$.

\subsection{Experiment 3: high-dimensional nuisance learning with machine learning}\label{sec:sim3_hdml}

Experiment 3 is the machine-learning stress test.  It uses $d=50$, $X\mid R=1\sim N(0,\Sigma)$, $X\mid R=0\sim N(\delta,\Sigma)$, $\Sigma_{jk}=0.35^{|j-k|}$, and
\[
        \delta=(0.85,-0.65,0.75,0.75,0.55,0.45,-0.35,0.35,0.20,0.20,0,\ldots,0)^\top .
\]
The source treatment mechanism is nonlinear and observational:
\[
\begin{aligned}
        e_1(x)&=\operatorname{expit}\{g(x)\},\\
        g(x)&=-0.05+1.05\sin(1.15x_1)-0.90x_2+1.25x_3x_4+0.80\ind(x_3>0,x_4>0)\\
        &\quad -0.75\frac{x_5^2}{1+x_5^2}+0.60\cos(x_6+0.5x_7)+0.55\ind(x_8>0)-0.45\sin(x_9x_{10}),
\end{aligned}
\]
truncated to $[0.05,0.95]$.  Source outcomes are bounded by
\[
        Y=4\tanh\{(m_a(X)+\sigma_a(X)\varepsilon)/4\},
        \qquad \varepsilon\sim N(0,1)\text{ truncated to }[-2.5,2.5].
\]
The target law is generated by exact nested tilts at $s_0=(\Gam_0,\Lam_0)=(8.00,1.10)$, with arm 1 at the lower envelope and arm 0 at the upper envelope.  This makes the point-transport model $(\Gam,\Lam)=(1,1)$ misspecified and creates a propensity-stress design in which the $G_e$ component of the observational-source EIF is active.

The nuisance functions are estimated with cross-fitting.  The propensity is estimated by penalized logistic regression on a deliberately coarse nonlinear basis; the density ratio is estimated by a source--target classifier on a richer basis; and the conditional CDF is estimated separately by arm using a residualized high-dimensional CDF learner with penalized location and scale regressions followed by empirical residual rearrangement.  This construction follows the double/debiased machine-learning logic of \citet{ChernozhukovChetverikovDemirerDufloHansenNeweyRobins2018}.  The optional neural robustness run uses a shallow multilayer perceptron component \citep{GoodfellowBengioCourville2016}, but the reported main run uses the fast penalized/basis learner.  The run uses $B=200$ replications, $n_1\in\{600,1200,2400\}$, $n_0=1.5n_1$, $99$ multiplier bootstrap draws, a threshold grid of size $61$, and an independent oracle sample of size $50{,}000$.

\subsection{Experiment 4: nonregular inversion and two-dimensional frontier inference}

Experiment 4 deliberately violates the regular active-set and smooth-quantile conditions used for Wald inference.  The source-arm outcome distribution is zero-inflated with a continuous truncated-normal component:
\[
        Y=0\text{ with probability }\pi_a(X),
        \qquad
        Y\mid Y\ne0,A=a,X=x,R=1\sim\text{truncated normal on }[-3,3].
\]
The probabilities $\pi_a(X)$ put several target quantiles near the mass point at zero, and $s_0=(\Gam_0,\Lam_0)=(1.20,1.20)$ places active branches near the switch surfaces
\[
        p=\frac{\Gam}{\Gam+1},\qquad p=\frac{1}{\Gam+1},
        \qquad q=\frac{\Lam}{\Lam+1},\qquad q=\frac{1}{\Lam+1}.
\]
We compare recomputed $m$-out-of-$n$ subsampling bands with $m=\lfloor n^{0.6}\rfloor$ and $m=\lfloor n^{0.7}\rfloor$, regular $n$-out-of-$n$ bootstrap bands, and Wald endpoint intervals using local finite-difference density estimates.  The run uses $B=300$ replications, $n_1\in\{500,1000,2000\}$, $n_0=1.5n_1$, $149$ bootstrap draws, $149$ subsampling draws for each exponent, a threshold grid of size $181$, and a $31\times31$ sensitivity grid.  Frontier diagnostics are computed for the interval-hull non-refutation set at $\tau_0=0.5$.

\subsection{Results of simulation studies}

\paragraph{Experiment 1.}
The results in \Cref{fig:sim1_finite_support_audit} support the sharp-identification theory.  Across $6{,}000$ exact finite LPs, the largest lower- and upper-endpoint discrepancies are $4.955\times10^{-11}$ and $3.595\times10^{-11}$, respectively, both below the $10^{-10}$ audit threshold reported in \Cref{tab:sim1_finite_support_audit}.  The product-LR relaxation has zero dominance violations over $200{,}000$ algebraic cases, but it is strictly looser in $16.3\%$ of all cases and $25.83\%$ of the nontrivial subset.  Hence the nested envelope is both numerically sharp and genuinely different from a one-layer product-LR relaxation.

\begin{figure}[t]
    \centering
    {\includegraphics[width=0.98\textwidth]{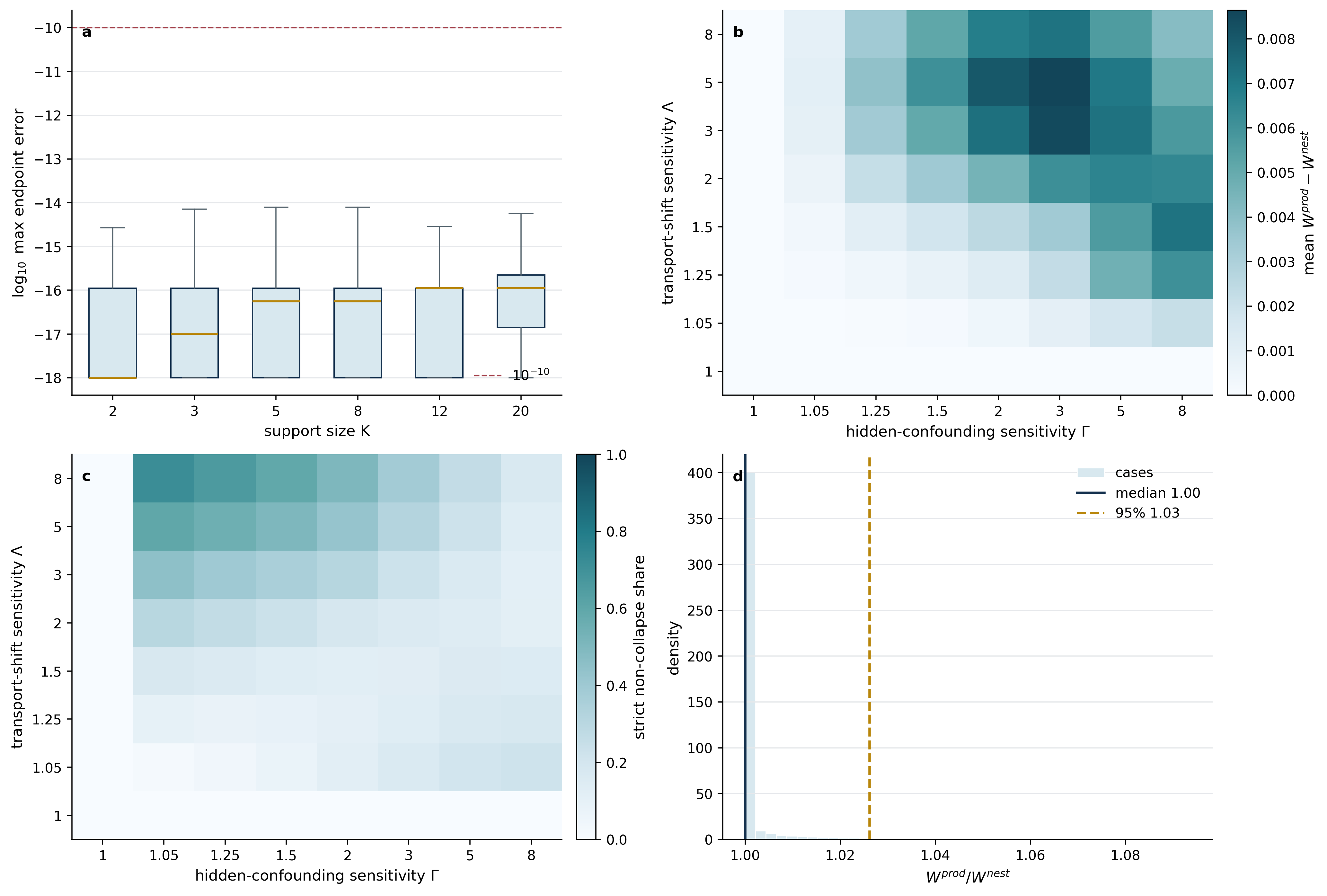}}
    \caption{Finite-support sharpness and non-collapse of the nested sensitivity envelope.  Panel (a) reports the base-10 logarithm of the maximum endpoint discrepancy between the closed-form nested CDF envelope and the exact finite linear programs, stratified by support size $K$.  The dashed line marks the numerical tolerance $10^{-10}$.  Panel (b) reports the average overwidth $W^{\mathrm{prod}}-W^{\mathrm{nest}}$.  Panel (c) reports the share of cases with strict non-collapse.  Panel (d) shows the distribution of $W^{\mathrm{prod}}/W^{\mathrm{nest}}$ over cases with positive nested width.  Full diagnostics are in \Cref{tab:sim1_finite_support_audit}.}
    \label{fig:sim1_finite_support_audit}
\end{figure}

\paragraph{Experiment 2.}
\Cref{tab:sim2_regular_main} separates identification validity from statistical coverage.  For $s_{\mathrm{low}}=(1.15,1.10)$, the true QTE is outside the population interval hull; simultaneous QTE coverage therefore falls to zero at the largest sample size even though the CDF-bound process for that misspecified pair is covered.  For the true and overspecified pairs, the true QTE lies in the interval hull and CDF-band inversion gives conservative simultaneous QTE coverage.  The population hull width increases monotonically from $0.465$ to $1.579$ and $2.689$ as the sensitivity model expands.

\begin{table}[t]
\centering
\setlength{\tabcolsep}{4pt}
\renewcommand{\arraystretch}{1.08}
\caption{Regular smooth inference under an exact nested-tilt DGP at $n_1=1600$ and $n_0=2400$.  The run uses $B=300$ Monte Carlo replications, $149$ bootstrap draws, a threshold grid of size $121$ and a dense oracle grid of size $2001$.  Coverage columns report Monte Carlo proportions with standard errors in parentheses.}
\label{tab:sim2_regular_main}
\begin{tabular}{llccccc}
\toprule
Sensitivity & $(\Gam,\Lam)$ & Truth in hull & QTE cover & CDF cover & Outer width & Pop. width \\
\midrule
underspecified & $(1.15,1.10)$ & no  & $0.000\,(0.000)$ & $0.993\,(0.005)$ & $1.213$ & $0.465$ \\
true & $(1.60,1.40)$ & yes & $1.000\,(0.000)$ & $0.953\,(0.012)$ & $2.602$ & $1.579$ \\
overspecified & $(2.20,1.80)$ & yes & $1.000\,(0.000)$ & $0.953\,(0.012)$ & $3.970$ & $2.689$ \\
\bottomrule
\end{tabular}
\end{table}

\paragraph{Experiment 3.}
\Cref{tab:sim3_hdml_main} shows that the main failure mode in the high-dimensional design is identification.  The point estimator is narrow but invalid: its simultaneous QTE coverage is zero and its endpoint RMSE is $1.247$ at $n_1=2400$.  The full DML procedure remains conservative but honest under high-dimensional nuisance learning.  The plug-in estimator lacks the orthogonal source correction and has much lower QTE coverage.  The No-$G_e$ ablation remains covered because the simultaneous bands are conservative, but omitting $G_e$ increases endpoint RMSE from $0.232$ to $0.286$ and process sup-norm error from $0.218$ to $0.280$.

\begin{table}[t]
\centering
\setlength{\tabcolsep}{4pt}
\renewcommand{\arraystretch}{1.08}
\caption{High-dimensional ML nuisance learning at the largest sample size, $n_1=2400$ and $n_0=3600$.  The run uses $B=200$ Monte Carlo replications, $99$ multiplier bootstrap draws, a threshold grid of size $61$ and an oracle Monte Carlo sample of size $50{,}000$.  Coverage is simultaneous over the quantile grid and uses CDF-band inversion.}
\label{tab:sim3_hdml_main}
\begin{tabular}{lccccc}
\toprule
Method & QTE cover & Hull contain & Endpoint RMSE & Outer width & Proc. err. \\
\midrule
DML & $1.000\,(0.000)$ & $0.470\,(0.035)$ & $0.232$ & $8.335$ & $0.218$ \\
Plug-in & $0.100\,(0.021)$ & $0.065\,(0.017)$ & $0.249$ & $2.343$ & $0.266$ \\
No-$G_e$ & $1.000\,(0.000)$ & $0.390\,(0.034)$ & $0.286$ & $8.507$ & $0.280$ \\
Point & $0.000\,(0.000)$ & $0.000\,(0.000)$ & $1.247$ & $1.177$ & $0.548$ \\
\bottomrule
\end{tabular}
\end{table}

\paragraph{Experiment 4.}
\Cref{tab:sim4_nonregular_main} confirms the nonregular-inference message of the theory.  Wald endpoint intervals are narrower but undercovered: their simultaneous QTE coverage is $0.567$.  CDF-band inversion is conservative and honest.  The subsampling procedures also localize the frontier more accurately than the regular bootstrap: the Hausdorff error is $0.384$ for $m=n^{0.6}$, compared with $0.441$ for the regular bootstrap.  Thus the bootstrap is conservative here, whereas subsampling better matches the local nonregular geometry.

\begin{table}[t]
\centering
\setlength{\tabcolsep}{4pt}
\renewcommand{\arraystretch}{1.08}
\caption{Nonregular inversion and frontier inference at the largest sample size, $n_1=2000$ and $n_0=3000$.  The run uses $B=300$ Monte Carlo replications, $149$ bootstrap draws, $149$ subsampling draws for each exponent, a threshold grid of size $181$ and a $31\times31$ sensitivity grid.}
\label{tab:sim4_nonregular_main}
\begin{tabular}{lccccc}
\toprule
Method & QTE cover & QTE width & Frontier outer & Hausdorff & Plug-in tip err. \\
\midrule
Subsample $m=n^{0.6}$ & $1.000\,(0.000)$ & $1.119$ & $0.977\,(0.009)$ & $0.384$ & $0.046$ \\
Subsample $m=n^{0.7}$ & $1.000\,(0.000)$ & $1.172$ & $0.987\,(0.007)$ & $0.404$ & $0.046$ \\
Regular bootstrap & $1.000\,(0.000)$ & $1.186$ & $0.993\,(0.005)$ & $0.441$ & $0.046$ \\
Wald endpoint & $0.567\,(0.029)$ & $0.681$ & -- & -- & -- \\
\bottomrule
\end{tabular}
\end{table}

\section{Real data illustration}\label{sec:realdata}

We illustrate the proposed procedure using the complete-case NHEFS smoking-cessation data from the \texttt{causaldata} package, which is based on the National Health and Nutrition Examination Survey I Epidemiologic Follow-up Study and is commonly used as a benchmark example in causal inference \citep{CDCNHEFS,HernanRobins2020,CausalData2024}. 
The outcome is 1971--1982 weight change, \(Y=\texttt{wt82\_71}\), and the treatment is smoking cessation, \(A=\texttt{qsmk}\), where \(A=1\) indicates that the subject quit smoking. 
The baseline covariates \(X\) include demographic, smoking-history and baseline health variables available before the post-treatment weight-change outcome: sex, race, income, marital status, education, school years, age, baseline weight \texttt{wt71}, smoking intensity, years smoked, exercise, activity level, hypertension, asthma, bronchitis, diabetes, alcohol frequency, alcohol use in the past year, and baseline price/tax variables when available.

To match the observed-data structure of the paper, we split the public data into a source study and a target covariate sample. 
The source sample consists of younger smokers and has \(A,Y,X\) observed. 
The target sample consists of older smokers and contributes only \(X\); its treatment and outcome columns are masked before estimation and are reported only as design diagnostics in \Cref{tab:realdata_nhefs_appendix}. 
Thus the analysis emulates the sampling structure \(O=(R,X,RA,RY)\): source units have treatment and outcome observed, while target units only contribute covariates. 
The purpose is not to validate coverage, because there is no causal ground truth in the real data. 
Instead, the analysis asks how transported QTE conclusions change when point transportability is replaced by the joint \((\Gam,\Lam)\) sensitivity model.

The estimators use five-fold cross-fitting. 
The source propensity \(e_a(x)\) is estimated from the source sample, the density ratio \(\omega(x)=dP^X_0/dP^X_1\) is estimated from a source--target sampling-score classifier, and the conditional source-arm CDFs \(p_a(y,x)\) are estimated on a threshold grid. 
We report QTE interval hulls at \(\tau\in\{.25,.50,.75\}\). 
The main sensitivity specification is \((\Gam,\Lam)=(2.0,1.5)\), and the table compares it with point transport, one-layer sensitivity analyses, plug-in and No-\(G_e\) ablations, and a product-relaxation benchmark.

\begin{table}[t]
\centering
\setlength{\tabcolsep}{3.5pt}
\renewcommand{\arraystretch}{1.08}
\caption{
NHEFS real-data illustration. 
The source sample consists of younger smokers and the target sample consists of older smokers; target treatments and outcomes are masked in the analysis. 
Entries report estimated transported QTE interval hulls for 1971--1982 weight change. 
They are computed with \(n_1=1009\), \(n_0=413\), 399 multiplier draws for confidence-band diagnostics, 181 threshold grid points, and five-fold cross-fitting.
}
\label{tab:realdata_nhefs_main}
\begin{tabular}{lccccc}
\toprule
Method & \((\Gamma,\Lambda)\) & \(\tau=.25\) & \(\tau=.50\) & \(\tau=.75\) & Avg. width \\
\midrule
Point transport 
& \((1.0,1.0)\) 
& \([3.91, 3.91]\) 
& \([3.58, 3.58]\) 
& \([4.56, 4.56]\) 
& \(0.00\) \\
Hidden-confounding only 
& \((2.0,1.0)\) 
& \([1.95, 7.49]\) 
& \([0.33, 7.82]\) 
& \([0.33, 6.19]\) 
& \(6.30\) \\
Transport-shift only 
& \((1.0,1.5)\) 
& \([0.00, 7.82]\) 
& \([-2.28, 7.82]\) 
& \([0.98, 6.51]\) 
& \(7.82\) \\
Proposed nested DML 
& \((2.0,1.5)\) 
& \([-1.95, 10.42]\) 
& \([-4.89, 12.38]\) 
& \([-1.95, 11.40]\) 
& \(14.33\) \\
Plug-in nested 
& \((2.0,1.5)\) 
& \([-1.95, 11.40]\) 
& \([-3.26, 12.38]\) 
& \([-2.93, 11.73]\) 
& \(14.55\) \\
No-\(G_e\) nested 
& \((2.0,1.5)\) 
& \([-1.95, 10.42]\) 
& \([-4.89, 12.38]\) 
& \([-1.30, 11.40]\) 
& \(14.12\) \\
Product relaxation 
& \((2.0,1.5)\) 
& \([-1.95, 11.40]\) 
& \([-3.26, 12.38]\) 
& \([-2.93, 11.73]\) 
& \(14.55\) \\
\bottomrule
\end{tabular}
\end{table}

\Cref{tab:realdata_nhefs_main} shows how the empirical conclusion changes when internal- and external-validity violations are allowed simultaneously. 
The point-transport analysis, corresponding to \((\Gam,\Lam)=(1,1)\), gives positive transported QTE estimates at all three quantile levels: \(3.91\), \(3.58\), and \(4.56\) pounds. 
Taken literally, this would suggest that smoking cessation increases weight change throughout the target older-smoker distribution. 
The proposed joint sensitivity analysis gives a different message. 
At \((\Gam,\Lam)=(2.0,1.5)\), the nested DML interval hulls are \([-1.95,10.42]\), \([-4.89,12.38]\), and \([-1.95,11.40]\), all of which contain zero. 
Thus the positive point-transport conclusion is not robust once both hidden treatment confounding in the source and source-to-target outcome-distribution shift are allowed.

The one-layer sensitivity rows clarify why the joint model is needed. 
Hidden-confounding-only bounds and transport-shift-only bounds both widen the QTE conclusions, but neither captures the full joint uncertainty. 
For the median QTE, hidden-confounding-only gives \([0.33,7.82]\), transport-shift-only gives \([-2.28,7.82]\), and the proposed nested analysis gives \([-4.89,12.38]\). 
The plug-in and No-\(G_e\) rows are ablations using the same \((\Gam,\Lam)\). 
Their proximity to the DML row shows that the main empirical change is driven by the joint sensitivity model, while their endpoint differences show that the source augmentation and observational-source propensity contribution can affect the estimated bound process. 
In this application, the product relaxation coincides with the plug-in nested estimator at the reported quantile levels, indicating that these quantiles lie in an equality regime; strict non-collapse is a theoretical and simulation-supported possibility, not a feature that must appear at every empirical active region.

Finally, the diagnostics in \Cref{tab:realdata_nhefs_appendix} show that this split is a demanding transport problem. 
The source and target covariate distributions are highly distinguishable, with maximum absolute standardized mean difference \(3.063\) and sampling-score AUC \(0.989\). 
This supports the use of an external-validity sensitivity layer rather than relying only on point transport.

\section{Discussion}\label{sec:discussion}

This paper develops a sharp semiparametric framework for target-population QTE bounds when two identification assumptions can fail simultaneously: source internal validity and conditional source-to-target outcome transportability.  The main identification result is the nested CDF envelope $T_\Lam^\sigma\{C_\Gam^\sigma(p,e)\}$.  The nesting is not a cosmetic composition.  It preserves separate normalizations for treatment-confounding and outcome-transport tilts, which yields bounds that are weakly tighter than a single product likelihood-ratio relaxation and are attainable as whole CDF paths.

The inference theory is deliberately split into regular and nonregular parts.  On active-set regular index sets, the endpoint process is pathwise differentiable and the canonical gradient contains a source-propensity derivative term whenever the source study is observational.  This term is absent only in known-randomization designs.  The high-dimensional simulation confirms that omitting this term degrades centering and process accuracy in a propensity-stress design.  On the full threshold index, active-set ties and mass points are unavoidable in many distributional applications.  For that reason the main inferential recommendation is to construct simultaneous CDF bands and invert them, rather than to rely on density-based Wald intervals for quantile endpoints.

Several limitations are important.  First, the joint sensitivity model is armwise and marginal.  It targets marginal potential-outcome distributions and QTEs; it does not impose a joint latent treatment-assignment model for $(Y^0,Y^1)$, nor does it identify unit-level treatment-effect distributions without additional copula restrictions.  Second, exact scalar QTE identified sets can be nonconnected when outcomes have atoms or support gaps.  The paper therefore formulates the primary QTE and frontier objects as sharp interval hulls.  Exact scalar non-refutation is recovered only under the connectedness condition in \Cref{ass:connected}.  Third, uniform Gaussian approximations require active-set separation and smooth quantile behavior; otherwise the valid route is directional inference and recomputed subsampling.  Fourth, the first-stage assumptions for machine learning are stated through product-rate and weak-convergence conditions.  The simulations show compatibility with high-dimensional nuisance learning, but they are not a universal guarantee for arbitrary black-box learners.

Natural extensions include multi-valued or continuous treatments, longitudinal or multivariate outcomes, covariate-benchmarking tools for calibrating plausible $(\Gam,\Lam)$ regions, and adaptive procedures for choosing regular versus directional inference based on estimated active-set margins.  A local asymptotic minimax theory for the CDF-bound and frontier processes would also clarify which components of interval width are due to irreducible partial identification and which are due to sampling uncertainty.

Overall, the contribution is narrow but hard: simultaneous internal-validity and external-validity sensitivity for transported QTEs requires a nonlinear chain from conditional CDF envelopes to target CDF processes, generalized inverses and two-dimensional frontier level sets.  Existing mean-effect sensitivity and point-identified transported QTE methods do not cover this chain.  Treating the CDF-bound process as the primary object leads to sharp identification and honest inference in both regular and nonregular regimes.

\newpage
\appendix
\setcounter{table}{0}
\renewcommand{\thetable}{A.\arabic{table}}
\renewcommand{\theHtable}{appendix.table.\arabic{table}}
\setcounter{figure}{0}
\renewcommand{\thefigure}{A.\arabic{figure}}
\renewcommand{\theHfigure}{appendix.figure.\arabic{figure}}

\section{Proofs for sharp identification}\label{app:identification}

\begin{proof}[Proof of \Cref{lem:orientation}]
The displayed relationship between the observed source-arm law and the source potential-outcome law gives
\[
        dF^S_a(y\mid x)=\frac{e_a(x)}{g_a(y,x)}\,dP^{\mathrm{obs}}_{a,x}(y),
\]
so $h_a=e_a/g_a$ and $\int h_a\,dP^{\mathrm{obs}}_{a,x}=1$ because $F^S_a(\cdot\mid x)$ is a probability measure.  Solving the odds-ratio inequalities in \Cref{ass:gamma} for $h=e/g$ gives
\[
        e+(1-e)/\Gam\le h\le e+\Gam(1-e),
\]
where $e=e_a(x)$; these are exactly $\ell_\Gam(e)$ and $u_\Gam(e)$.  Conversely, suppose $h$ satisfies the two displayed conditions and set $dF^S=h\,dP^{\mathrm{obs}}_{a,x}$ and $g=e/h$.  Since $h\ge \ell_\Gam(e)> e$ for finite $\Gam\ge1$ and $e<1$, and since $h<\infty$, $g\in(0,1)$; the lower overlap makes the odds well defined.  The same algebra gives the required odds-ratio bound.  Finally,
\[
        \frac{g(y,x)}{e}\,dF^S(y)=\frac{e/h(y,x)}{e}h(y,x)dP^{\mathrm{obs}}_{a,x}(y)=dP^{\mathrm{obs}}_{a,x}(y),
\]
so the observed law is reproduced.  Replacing the binary treatment by $A_a=\ind(A=a)$ proves the arm-$0$ orientation statement.
\end{proof}

\begin{proof}[Proof of \Cref{lem:armembed}]
Fix $x$ and abbreviate $e_a=e_a(x)$ and $P^{\mathrm{obs}}_a=P^{\mathrm{obs}}_{a,x}$.  For each arm define the counter-arm conditional law
\[
        \widetilde F^{1-a}_a(dy\mid x)
        =\frac{F^S_a(dy\mid x)-e_aP^{\mathrm{obs}}_a(dy)}{1-e_a}
        =\frac{h_a(y,x)-e_a}{1-e_a}\,P^{\mathrm{obs}}_a(dy).
\]
This is a probability measure.  Indeed, $h_a\ge\ell_\Gam(e_a)\ge e_a$, so the numerator is nonnegative, and its total mass is $(1-e_a)^{-1}\{1-e_a\}=1$.  Now draw $A$ with $P(A=a\mid X=x)=e_a$.  Conditional on $A=a$, set $Y^a$ to have law $P^{\mathrm{obs}}_a$ and set $Y^{1-a}$ to have law $\widetilde F^{a}_{1-a}$; choose any copula between these two conditional marginals.  This construction gives $Y=Y^A$ and reproduces the observed source-arm laws.

The marginal law of $Y^a$ is
\[
        e_aP^{\mathrm{obs}}_a+(1-e_a)\widetilde F^{1-a}_a=F^S_a.
\]
Bayes' rule gives, for $F^S_a$-almost every $y$,
\[
        P(A=a\mid Y^a=y,X=x,R=1)
        =e_a\frac{dP^{\mathrm{obs}}_a}{dF^S_a}(y)=\frac{e_a}{h_a(y,x)},
\]
which satisfies \Cref{ass:gamma} by \Cref{lem:orientation}.  This proves source embedding.  For the target population, no treatment assignment is observed or restricted; any two marginal kernels $F^T_0(\cdot\mid x)$ and $F^T_1(\cdot\mid x)$ can be coupled by a product copula, or by any other measurable copula, to form a joint conditional law of $(Y^0,Y^1)$ given $X=x$.  If the input kernels are measurable in $x$, the displayed counter-arm kernels are measurable by the Radon--Nikodym construction, and the Ionescu--Tulcea extension theorem yields measurable source and target Markov kernels.
\end{proof}

\begin{proof}[Proof of \Cref{lem:eventlr}]
The cases $p=0$ and $p=1$ are immediate, so assume $p\ne0,1$.  For any feasible $h$,
\begin{align*}
        E(hZ)&\ge \ell P(Z=1)=\ell p,\\
        1-E(hZ)&=E\{h(1-Z)\}\le uP(Z=0)=u(1-p),
\end{align*}
which gives $E(hZ)\ge \max\{\ell p,1-u(1-p)\}$.  Similarly,
\begin{align*}
        E(hZ)&\le up,\\
        1-E(hZ)&=E\{h(1-Z)\}\ge \ell(1-p),
\end{align*}
which gives $E(hZ)\le \min\{up,1-\ell(1-p)\}$.

For attainability of the lower endpoint, write $m=E(hZ)$.  Feasibility is equivalent to choosing $m\in[\ell p,up]$ and $1-m\in[\ell(1-p),u(1-p)]$.  Therefore the feasible interval for $m$ is
\[
        [\ell p,up]\cap[1-u(1-p),1-\ell(1-p)],
\]
whose left endpoint is $\max\{\ell p,1-u(1-p)\}$.  Choose $h$ to be constant on the two atoms $Z=1$ and $Z=0$ with conditional means $m/p$ and $(1-m)/(1-p)$.  These constants lie in $[\ell,u]$ by construction, so the lower endpoint is attained.  The upper endpoint is the right endpoint of the same feasible interval and is attained by the same construction.  For non-atomic or mixed distributions, the same conditional-mean construction can be implemented by constants on the sets $\{Z=1\}$ and $\{Z=0\}$; no additional randomization is needed.

If $p=F(y)$ is a CDF, then $y\mapsto\max\{\ell F(y),1-u(1-F(y))\}$ is nondecreasing and right-continuous because it is the maximum of two nondecreasing right-continuous functions.  Its limits at $-\infty$ and $+\infty$ are $0$ and $1$.  The upper envelope is the minimum of two nondecreasing right-continuous functions and has the same endpoint limits.  Thus both envelopes are CDFs.
\end{proof}

\begin{proof}[Proof of \Cref{lem:pathlr}]
The functions $L_{\ell,u}$ and $U_{\ell,u}$ are CDFs because they are maxima or minima of nondecreasing right-continuous functions with limits zero and one.  By \Cref{lem:eventlr}, $L_{\ell,u}(y)$ and $U_{\ell,u}(y)$ are the pointwise extrema of $E(h\ind(Y\le y))$ over all $h\in[\ell,u]$ with $E(h)=1$.  It remains to show that one tilt attains all thresholds simultaneously.  If $\ell=u=1$, then the only feasible tilt is $h\equiv1$ and both envelopes equal $F$, so the claim is immediate.  Hence suppose $\ell<u$.

For the lower envelope, define the threshold probability $r_L=(u-1)/(u-\ell)\in[0,1]$ and let $t_L=\inf\{y:F(y)\ge r_L\}$.  Set $h_L(y)=\ell$ for $y<t_L$ and $h_L(y)=u$ for $y>t_L$.  If $F$ has an atom at $t_L$, write $\Delta_L=F(t_L)-F(t_L-)$.  When $\Delta_L>0$, set
\[
        h_L(t_L)=\frac{1-\ell F(t_L-)-u\{1-F(t_L)\}}{\Delta_L};
\]
the inequalities $F(t_L-)\le r_L\le F(t_L)$ imply $h_L(t_L)\in[\ell,u]$.  If $\Delta_L=0$, the value at $t_L$ is immaterial.  This choice gives $\int h_L\,dF=1$.  Direct calculation gives
\[
        \int_{(-\infty,y]} h_L\,dF
        =\max\{\ell F(y),1-u(1-F(y))\}=L_{\ell,u}(y)
\]
for $y<t_L$, $y=t_L$ and $y>t_L$ separately.  Thus $F_L(( -\infty,y])=L_{\ell,u}(y)$ with $dF_L/dF=h_L\in[\ell,u]$.  The upper envelope is analogous, using the threshold probability $r_U=(1-\ell)/(u-\ell)$ and assigning $u$ below the threshold and $\ell$ above it.  If $t_U=F^{-1}(r_U)$ and $\Delta_U=F(t_U)-F(t_U-)>0$, the threshold value is
\[
        h_U(t_U)=\frac{1-uF(t_U-)-\ell\{1-F(t_U)\}}{\Delta_U},
\]
with arbitrary value on a zero jump.  This yields $dF_U/dF\in[\ell,u]$ and CDF $U_{\ell,u}$.
\end{proof}

\begin{proof}[Proof of \Cref{lem:measkernel}]
For each fixed $y$, measurability of $x\mapsto F_x(y)$ and of $\ell(x),u(x)$ implies measurability of $x\mapsto L_x(y)$ and $x\mapsto U_x(y)$.  By \Cref{lem:pathlr}, for each $x$ these functions are CDFs on the compact interval $\calY$.  A family of CDFs whose evaluations at every real $y$ are measurable defines a Markov kernel: first define the kernel on intervals $(-\infty,y]$ by the displayed CDF values, then extend to Borel sets by the monotone-class theorem.

On the set where $\ell(x)=u(x)=1$, take the derivative identically equal to one.  On the complementary set, the threshold probabilities $r_L(x)=\{u(x)-1\}/\{u(x)-\ell(x)\}$ and $r_U(x)=\{1-\ell(x)\}/\{u(x)-\ell(x)\}$ are measurable.  Generalized quantiles $t_L(x)=\inf\{y:F_x(y)\ge r_L(x)\}$ and $t_U(x)=\inf\{y:F_x(y)\ge r_U(x)\}$ are measurable because $F_x(y)$ is a measurable kernel CDF.  The intermediate values on threshold atoms are the explicit ratios displayed in the proof of \Cref{lem:pathlr}, with $F,\ell,u,t$ replaced by $F_x,\ell(x),u(x),t(x)$.  The numerator and denominator are measurable functions of $x$ because $x\mapsto F_x(t(x))$ and $x\mapsto F_x(t(x)-)$ are measurable for a measurable quantile $t(x)$; on zero jumps we set any measurable value in $[\ell(x),u(x)]$, for instance one.  Therefore the threshold derivatives constructed in the proof of \Cref{lem:pathlr} are jointly measurable in $(x,y)$ and generate the kernels $L_x$ and $U_x$.
\end{proof}

\begin{proof}[Proof of \Cref{lem:coherence}]
Each displayed map is the maximum or minimum of two affine continuous functions on $[0,1]$, hence is continuous and nondecreasing.  The endpoint values follow by substitution.  The inequalities $C^-_\Gam\le C^+_\Gam$ and $T^-_\Lam\le T^+_\Lam$ are equivalent to nonemptiness of the binary-event feasible intervals in \Cref{lem:eventlr}; explicitly, $0<\ell_\Gam(e)\le1\le u_\Gam(e)$ and $0<\Lam^{-1}\le1\le\Lam$, so the feasible intervals are nonempty for every event probability.  A nondecreasing continuous map preserving zero and one sends CDFs to CDFs; composing such maps preserves this property and the order.
\end{proof}

\begin{proof}[Proof of \Cref{lem:sourceenv}]
Fix $(a,x)$ and write $P^{\mathrm{obs}}_{a,x}$ for the observed source-arm law.  By \Cref{lem:orientation}, admissible source potential-outcome laws are precisely the tilted laws $h\,dP^{\mathrm{obs}}_{a,x}$ with $\ell_\Gam(e_a(x))\le h\le u_\Gam(e_a(x))$ and $\int h\,dP^{\mathrm{obs}}_{a,x}=1$.  Applying \Cref{lem:eventlr} to the event $\{Y\le y\}$, whose observed probability is $p_a(y,x)$, gives the displayed lower and upper bounds.  Applying \Cref{lem:pathlr} and \Cref{lem:measkernel} to the whole observed source-arm CDF path gives simultaneously attainable lower and upper CDF kernels with measurable threshold tilts.  The converse construction in \Cref{lem:orientation} turns those tilts into admissible latent arm probabilities, completing sharpness.
\end{proof}

\begin{proof}[Proof of \Cref{lem:targetenv}]
Fix $(a,x)$ and an admissible source potential-outcome law $F^S_a(\cdot\mid x)$.  Under \Cref{ass:lambda}, admissible target laws are exactly tilts $r\,dF^S_a$ with $\Lam^{-1}\le r\le\Lam$ and $\int r\,dF^S_a=1$.  \Cref{lem:eventlr} applied to $\{Y^a\le y\}$ gives $T^-_\Lam(F^S_a(y\mid x))$ and $T^+_\Lam(F^S_a(y\mid x))$.  Since $T^-_\Lam$ and $T^+_\Lam$ are nondecreasing by \Cref{lem:coherence}, minimizing or maximizing over the sharp source envelope from \Cref{lem:sourceenv} gives the nested endpoints.  \Cref{lem:pathlr,lem:measkernel} give measurable threshold outcome-shift kernels that attain the whole lower or upper target CDF path.
\end{proof}

\begin{proof}[Proof of \Cref{prop:noncollapse}]
By \Cref{lem:sourceenv}, the feasible values of the source potential-outcome CDF at $(a,x,y)$ form the interval
\[
        [C^-_\Gam(p_a(y,x),e_a(x)),C^+_\Gam(p_a(y,x),e_a(x))].
\]
By \Cref{lem:targetenv}, conditional on any feasible source value $q$, the target value lies in $[T^-_\Lam(q),T^+_\Lam(q)]$.  The maps $T^-_\Lam$ and $T^+_\Lam$ are nondecreasing in $q$.  Therefore the smallest feasible target value is obtained at the smallest feasible source value, and the largest feasible target value is obtained at the largest feasible source value.  This gives the displayed nested formula.  Attainability follows by composing the source threshold tilt from \Cref{lem:sourceenv} with the target threshold tilt from \Cref{lem:targetenv}.

For non-collapse, let $p=p_a(y,x)$ and $e=e_a(x)$.  The single product-tilt lower bound with product bounds $[\ell_\Gam(e)/\Lam,u_\Gam(e)\Lam]$ is
\[
        B^-_{prod}=\max\{\ell_\Gam(e)p/\Lam,1-u_\Gam(e)\Lam(1-p)\}.
\]
Because $C^-_\Gam(p,e)\ge \ell_\Gam(e)p$ and $C^-_\Gam(p,e)\ge 1-u_\Gam(e)(1-p)$,
\[
        T^-_\Lam\{C^-_\Gam(p,e)\}
        =\max\{C^-_\Gam(p,e)/\Lam,1-\Lam[1-C^-_\Gam(p,e)]\}
        \ge B^-_{prod}.
\]
Similarly, because $C^+_\Gam(p,e)\le u_\Gam(e)p$ and $C^+_\Gam(p,e)\le 1-\ell_\Gam(e)(1-p)$,
\[
        T^+_\Lam\{C^+_\Gam(p,e)\}
        =\min\{\Lam C^+_\Gam(p,e),1-[1-C^+_\Gam(p,e)]/\Lam\}
        \le B^+_{prod}.
\]
With $e=0.1$, $\Gam=2$, $\Lam=1.5$ and $p=0.7$, $\ell_\Gam(e)=0.55$ and $u_\Gam(e)=1.9$.  The source lower value is
\[
        C^-_\Gam(0.7,0.1)=\max\{0.55\cdot0.7,1-1.9\cdot0.3\}=0.43.
\]
The nested lower target value is
\[
        T^-_{1.5}(0.43)=\max\{0.43/1.5,1-1.5(1-0.43)\}=0.286\overline6.
\]
The product-tilt lower value is
\[
        \max\{(0.55/1.5)0.7,1-(1.9)(1.5)(0.3)\}=0.256\overline6.
\]
Thus the product relaxation is strictly looser in this admissible configuration.
\end{proof}

\begin{proof}[Proof of \Cref{prop:lpcheck}]
The variables $q=(q_1,\ldots,q_K)$ are the source potential-outcome probabilities relative to the observed source-arm support.  The constraints $\ell_\Gam(e)r_j\le q_j\le u_\Gam(e)r_j$ and $\sum_jq_j=1$ are exactly the finite-support version of the source inverse-selection tilt.  By \Cref{lem:eventlr}, the feasible values of $q(E_y)=\sum_{j\in E_y}q_j$ are the interval $[C^-_\Gam(p,e),C^+_\Gam(p,e)]$.

Conditional on any such $q$, the variables $t$ are target probabilities satisfying a likelihood-ratio bound relative to $q$.  Applying \Cref{lem:eventlr} again gives the feasible interval for $t(E_y)$ as $[T^-_\Lam\{q(E_y)\},T^+_\Lam\{q(E_y)\}]$.  Since $T^-_\Lam$ and $T^+_\Lam$ are nondecreasing, minimizing and maximizing over the feasible interval for $q(E_y)$ gives the stated values.  This proves that the closed form is the exact value of the finite linear programs.
\end{proof}

\begin{proof}[Proof of \Cref{thm:cdfsharp}]
Validity is immediate from \Cref{lem:sourceenv,lem:targetenv}: for each $x$, every admissible target conditional CDF lies between $b^-_{a,s}(\cdot,x)$ and $b^+_{a,s}(\cdot,x)$, and integration over $P^X_0$ preserves the order.  By \Cref{lem:coherence}, the two integrated functions are ordered CDFs.

For sharpness of the lower process, \Cref{lem:sourceenv} supplies a measurable source least-favorable kernel attaining $C^-_\Gam\{p_a(\cdot,x),e_a(x)\}$ for $P^X_0$-almost every $x$, and \Cref{lem:targetenv} supplies a measurable target outcome-shift kernel attaining $T^-_\Lam$ of that source kernel.  \Cref{lem:measkernel} ensures that the threshold locations and atom-splitting constants can be chosen as measurable functions of $x$.  Combining these conditional kernels with the target covariate law $P^X_0$ yields an admissible target potential-outcome law whose marginal CDF is exactly $\psi^-_{a,s}$.  The upper process is identical with the upper threshold kernels.  Therefore no uniformly tighter lower or upper CDF process can be valid over the joint sensitivity model.
\end{proof}

\begin{proof}[Proof of \Cref{cor:qte}]
If a CDF $F$ is known only to satisfy $L\le F\le U$, then its left quantile satisfies
\[
        \inf\{y:U(y)\ge\tau\}\le F^{-1}(\tau)\le \inf\{y:L(y)\ge\tau\}.
\]
Apply this with $L=\psi^-_{a,s}$ and $U=\psi^+_{a,s}$.  \Cref{thm:cdfsharp} shows that the two endpoint CDFs are attainable for each arm as entire processes.  \Cref{lem:armembed} allows the arm-1 and arm-0 endpoint-achieving kernels needed for $\Delta^-_s$ and $\Delta^+_s$ to be embedded in a common joint potential-outcome law.  Hence the displayed QTE endpoints are attainable and all attainable scalar QTE values lie in the interval between them, proving sharpness of the interval hull.
\end{proof}

\begin{proof}[Proof of \Cref{prop:connectedprimitive}]
Convexity of the source sensitivity constraint gives
\[
        \ell_\Gam(e_a)\le h_{a,\rho}\le u_\Gam(e_a),\qquad \int h_{a,\rho}\,dP^{\mathrm{obs}}_{a,x}=1.
\]
The displayed $r_{a,\rho}$ is a weighted average of $r_{a,0}$ and $r_{a,1}$ with nonnegative weights proportional to $\rho h_{a,1}$ and $(1-\rho)h_{a,0}$.  Hence $r_{a,\rho}\in[\Lam^{-1},\Lam]$.  Moreover,
\[
        \int r_{a,\rho} h_{a,\rho}\,dP^{\mathrm{obs}}_{a,x}
        =\rho\int r_{a,1}h_{a,1}\,dP^{\mathrm{obs}}_{a,x}
         +(1-\rho)\int r_{a,0}h_{a,0}\,dP^{\mathrm{obs}}_{a,x}=1,
\]
so the interpolated target law is admissible under the two-layer sensitivity model.  By assumption, $\rho\mapsto Q_{a,\rho}(\tau)$ is continuous.  By \Cref{lem:armembed}, the two arms may be varied independently, so the map
\[
        (\rho_1,\rho_0)\mapsto Q_{1,\rho_1}(\tau)-Q_{0,\rho_0}(\tau)
\]
from the connected set $[0,1]^2$ into $\Rr$ is continuous.  Its image is connected and contains the sharp lower and upper QTE endpoints at the appropriate corners.  A connected subset of $\Rr$ containing these endpoints contains the whole interval between them.  Since \Cref{cor:qte} shows that no attainable scalar QTE value can lie outside this interval, the exact scalar QTE set equals the sharp interval hull.  Thus \Cref{ass:connected} holds at $(s,\tau)$.
\end{proof}

\begin{proof}[Proof of \Cref{cor:exactconnected}]
By \Cref{cor:qte}, the exact set $\mathfrak I_\Delta(s,\tau)$ contains the two endpoints $\Delta^-_s(\tau)$ and $\Delta^+_s(\tau)$ and is contained in their closed interval.  Under \Cref{ass:connected}, $\mathfrak I_\Delta(s,\tau)$ is connected.  A connected subset of $\Rr$ that contains two points contains the entire interval between them.  Therefore $\mathfrak I_\Delta(s,\tau)=[\Delta^-_s(\tau),\Delta^+_s(\tau)]$, and the equivalence for zero follows.
\end{proof}

\begin{proof}[Proof of \Cref{cor:reductions}]
If $\Gam=1$, then $\ell_\Gam(e)=u_\Gam(e)=1$, so $C^-_\Gam(p,e)=C^+_\Gam(p,e)=p$.  If $\Lam=1$, then $T^-_\Lam(q)=T^+_\Lam(q)=q$.  Combining these identities proves all claims.
\end{proof}

\section{Proofs for semiparametric theory}\label{app:semiparametric}

\begin{proof}[Proof of \Cref{lem:tangent}]
Along a regular path, write
\[
        \psi^\sigma_{a,s,t}(y)=\int G^\sigma_s\{p_{a,t}(y,x),e_{a,t}(x)\}\,dP^X_{0,t}(x).
\]
The derivative of the target covariate law is
\[
        \int b^\sigma_{a,s}(y,x)S^X_0(x)\,dP^X_0(x)
        =E\left[\frac{\ind(R=0)}{\pi_0}\{b^\sigma_{a,s}(y,X)-\psi^\sigma_{a,s}(y)\}S(O)\right],
\]
where the centering appears because $E(S^X_0\mid R=0)=0$.  For the source conditional outcome law,
\[
        \dot p_a(y,x)=E\!\left[\frac{\ind(A=a)}{e_a(X)}\{\ind(Y\le y)-p_a(y,X)\}S(O)\mid R=1,X=x\right].
\]
For the source treatment mechanism in the observational model,
\[
        \dot e_a(x)=E\!\left[\{\ind(A=a)-e_a(X)\}S(O)\mid R=1,X=x\right],
\]
whereas $\dot e_a=0$ in the known-design model.  Multiplying these conditional derivatives by $G_p^\sigma$ and $G_e^\sigma$, integrating over $P^X_0$, and using $dP^X_0=\omega\,dP^X_1$ gives the source term in the lemma.  Scores for the source covariate marginal law have zero contribution because the conditional residual $\zeta^\sigma_{a,s,y}$ has mean zero given $(R,X)=(1,x)$; scores for the sampling probability have zero contribution because both displayed components have conditional mean zero within their source strata.  This proves the decomposition.
\end{proof}

\begin{proof}[Proof of \Cref{thm:eif}]
Work along a regular parametric submodel with score $S(O)$ at the true law.  The target part of the functional is $E_0 b^\sigma_{a,s}(y,X)$.  Differentiating the target covariate law gives
\[
        \left.\frac{d}{dt}E_{0,t} b^\sigma_{a,s}(y,X)\right|_{0}
        =E\left[\frac{\ind(R=0)}{\pi_0}\{b^\sigma_{a,s}(y,X)-\psi^\sigma_{a,s}(y)\}S(O)\right].
\]
Now condition on $(R,X)=(1,x)$.  The derivative of $p_a(y,x)$ along the source conditional outcome law equals
\[
        E\left[\frac{\ind(A=a)}{e_a(X)}\{\ind(Y\le y)-p_a(y,X)\}S(O)\mid R=1,X=x\right].
\]
When $e_a$ is unknown, the derivative of $e_a(x)$ equals
\[
        E\left[\{\ind(A=a)-e_a(X)\}S(O)\mid R=1,X=x\right],
\]
and when $e_a$ is known by design this derivative is zero.  Since $G_s^\sigma$ is differentiable on the active region, the chain rule gives the conditional derivative
\[
        E\{\zeta^\sigma_{a,s,y}(O)S(O)\mid R=1,X=x\}.
\]
Averaging this derivative over the target covariate law is the same as averaging over $P^X_1$ with weight $\omega(x)$, which yields
\[
        E\left[\frac{\ind(R=1)}{\pi_1}\omega(X)\zeta^\sigma_{a,s,y}(O)S(O)\right].
\]
The sum of the target and source displays is $E\{\phi^\sigma_{a,s,y}(O)S(O)\}$, proving that the displayed function is an influence function.  The two conditional residuals used above are the canonical gradients for the conditional Bernoulli mean $p_a$ and the source propensity $e_a$ in the nonparametric source model; the target term is the canonical gradient for an expectation under $P^X_0$.  These tangent spaces are orthogonal under the mixture sampling model, so the displayed influence function is the canonical gradient and its variance is the efficiency bound.
\end{proof}

\begin{proof}[Proof of \Cref{prop:orthogonal}]
At the true nuisance value, the expectation of the score equals zero.  Let $\nu_t=(p_t,e_t,\omega_t)$ be a regular nuisance path through $\nu^\star$, and write $\dot p$, $\dot e$ and $\dot\omega$ for its derivative at $t=0$.  In the known-design model, $e_t=e$ and $\dot e=0$.  Work on a fixed active branch, so ordinary derivatives $G_p$ and $G_e$ exist.

The derivative of the target plug-in part is
\[
        \left.\frac{d}{dt}E_0G\{p_t(y,X),e_t(X)\}\right|_{t=0}
        =E_0\{G_p\dot p+G_e\dot e\},
\]
with the $G_e\dot e$ term absent in the known-design model.  The source augmentation has conditional expectation, under the true law,
\[
\begin{aligned}
A(t)&=E_1\Big[\omega_t(X)\Big\{
G_p\{p_t,e_t\}\frac{e}{e_t}(p-p_t)
+\chi G_e\{p_t,e_t\}(e-e_t)\Big\}\Big].
\end{aligned}
\]
At $t=0$ the expression in braces is zero, so the derivative of $\omega_t$ contributes nothing.  Differentiating the remaining factors and using the fact that derivatives of $G_p$, $G_e$ and $e/e_t$ multiply the zero residuals $p-p_t$ and $e-e_t$ at $t=0$, we obtain
\[
        \left.\frac{d}{dt}A(t)\right|_{t=0}
        =-E_1\{\omega(X)(G_p\dot p+\chi G_e\dot e)\}.
\]
Since $dP^X_0=\omega dP^X_1$, this equals $-E_0(G_p\dot p+G_e\dot e)$ in the observational-source model and $-E_0G_p\dot p$ in the known-design model.  It cancels the derivative of the target plug-in part.  Thus the pathwise derivative of the expected score with respect to every nuisance component is zero.  The calculation is local to a fixed active branch; at active-set ties the map is directional rather than linearly differentiable, and this proposition is not invoked.
\end{proof}

\begin{proof}[Proof of \Cref{prop:firststageprimitive}]
Condition on the training folds.  The Donsker and $L_2(P)$-continuity conditions imply stochastic equicontinuity of the empirical process indexed by the score class, so
\[
        \sup_{i\in\mathcal I_{\mathrm{reg}}}
        \left|\mathbb G_n\{m_i(\cdot;\psi_i,\widehat\nu)-m_i(\cdot;\psi_i,\nu^\star)\}\right|=o_p(1),
\]
where $i=(a,\sigma,y,s)$.  Cross-fitting permits this conditional argument without own-observation bias.
It remains to verify the deterministic drift.  Fix an index and abbreviate
$G=G^\sigma_s$, $p=p_a(y,X)$, $e=e_a(X)$, $\widehat p=\widehat p_a(y,X)$,
$\widehat e=\widehat e_a(X)$ and $\widehat\omega=\widehat\omega(X)$; set
$\widehat e=e$ when $\chi=0$.  On a fixed active branch, $G$ is of the form
$G(p,e)=c_0+c_1p+c_2e+c_3pe$ with coefficients uniformly bounded over
$\mathcal I_{\mathrm{reg}}$.  Taking conditional expectation of the cross-fitted score given the training sample gives
\[
\begin{aligned}
D_i(\widehat\nu)
&=E_1\Big[\omega\{G(\widehat p,\widehat e)-G(p,e)
  +G_p(\widehat p,\widehat e)\frac{e}{\widehat e}(p-\widehat p)
  +\chi G_e(\widehat p,\widehat e)(e-\widehat e)\}\Big]\\
&\quad +E_1\Big[(\widehat\omega-\omega)
  \{G_p(\widehat p,\widehat e)\frac{e}{\widehat e}(p-\widehat p)
  +\chi G_e(\widehat p,\widehat e)(e-\widehat e)\}\Big].
\end{aligned}
\]
The first line is the target plug-in drift plus the source augmentation drift with the true covariate ratio.  Since $G$ is bilinear on the active branch and $e/\widehat e$ is uniformly bounded by overlap, this first line is bounded in absolute value by a constant times
$E_1|\widehat p-p|\,|\widehat e-e|$; it is zero when $\chi=0$ because then $\widehat e=e$.  The second line is bounded by a constant times
\[
        E_1|\widehat\omega-\omega|\{ |\widehat p-p|+\chi |\widehat e-e|\}.
\]
Cauchy--Schwarz therefore gives the uniform bound
\[
        \sup_{i\in\mathcal I_{\mathrm{reg}}} |D_i(\widehat\nu)|
        \le C\{\delta_{p,n}\delta_{\omega,n}
        +\chi\delta_{e,n}\delta_{\omega,n}
        +\chi\delta_{p,n}\delta_{e,n}\}+o_p(n^{-1/2}),
\]
where the last $o_p(n^{-1/2})$ accounts for active-branch agreement and uniformly negligible approximation terms.  The product-rate condition in \Cref{ass:donsker} makes this drift $o_p(n^{-1/2})$.  Active-set agreement follows by assumption.  Finally, a Donsker class with square-integrable envelope is pre-Gaussian and satisfies the functional central limit theorem in $\ell^\infty(\mathcal I_{\mathrm{reg}})$.  These are precisely the components required in \Cref{ass:firststage}.
\end{proof}

\begin{proof}[Proof of \Cref{thm:ual}]
Write $P_n$ for the empirical measure and $\phi_i$ for the influence function indexed by $i=(a,\sigma,y,s)$.  Expanding the cross-fitted one-step estimator around the true nuisance functions gives, uniformly in $i\in\mathcal I_{\mathrm{reg}}$,
\[
        \widehat\psi_i-\psi_i=(P_n-P)\phi_i + R_{1n,i}+R_{2n,i},
\]
where $R_{1n,i}$ is the empirical-process error from replacing $\nu^\star$ by $\widehat\nu$ in the score and $R_{2n,i}$ is the deterministic drift of the orthogonal score.  Cross-fitting removes own-observation bias, \Cref{prop:orthogonal} eliminates the first-order nuisance drift, and \Cref{ass:firststage} gives
\[
        \sup_i\sqrt n|R_{1n,i}+R_{2n,i}|=o_p(1).
\]
The difference between using $n_r^{-1}\sum_{R_i=r}$ and $n^{-1}\sum_i \pi_r^{-1}\ind(R_i=r)$ is absorbed in the same influence function by the standard ratio expansion for sample splitting indicators.  The asserted asymptotic linearity follows.  The Gaussian approximation is the pre-Gaussian/Donsker or high-dimensional approximation condition in \Cref{ass:firststage}, applied to the centered influence-function class.
\end{proof}

\begin{proof}[Proof of \Cref{prop:directionalfinite}]
With finite covariate support, the estimators of $P^X_0$, $P^X_1$ and $e_a(x)$ are finite-dimensional smooth functions of multinomial cell proportions on events whose probabilities tend to one.  For each fixed $(a,x)$, the empirical source-arm CDF process
$y\mapsto\widehat p_a(y,x)-p_a(y,x)$ is indexed by half-lines and is therefore Donsker in $\ell^\infty(\calY)$; with finitely many arms and covariate values, the joint primitive process converges weakly in the stated product space.  Bounded cell probabilities make the map from $(P^X_0,P^X_1)$ to $\omega=dP^X_0/dP^X_1$ continuously differentiable.

The maps $C^\sigma_\Gam$, $T^\sigma_\Lam$ and $G^\sigma_s=T^\sigma_\Lam\circ C^\sigma_\Gam$ are finite maxima or minima of affine functions.  They are Hadamard directionally differentiable uniformly over compact $\calS$, with derivatives displayed in \Cref{thm:hdd}.  Because the target expectation is a finite sum over $x$, composition with the weakly convergent primitive process yields \Cref{ass:directional} on the full index set $\mathcal I$.  Standard recomputed $m$-out-of-$n$ subsampling for weakly convergent statistics, together with $m\to\infty$, $m/n\to0$, and continuity of the limiting critical-value distribution, gives the subsampling assertion.  Restricting $y$ and $s$ to a finite grid yields the finite-dimensional special case.
\end{proof}

\begin{proof}[Proof of \Cref{thm:hdd}]
The maps $(p,e)\mapsto g_j^\sigma(p,e)$ and $q\mapsto t_j^\sigma(q)$ are continuous affine maps.  The maps taking a finite maximum or minimum of affine maps are Hadamard directionally differentiable in sup norm, with derivative equal to the maximum or minimum of the active directional derivatives.  This gives the displayed derivatives $\dot C$ and $\dot T$.  Composition of Hadamard directionally differentiable maps gives $\dot G^\sigma$ by the directional chain rule.  Finally,
\[
        \psi^\sigma_{a,s}(y)=\int G^\sigma_s\{p_a(y,x),e_a(x)\}\,dP^X_0(x),
\]
so differentiating the integral gives the signed-measure direction $h_0\{b^\sigma_{a,s}(y,\cdot)\}$ plus the integral of the derivative of $G^\sigma_s$.  Boundedness of the outcome support, overlap, and \Cref{ass:directional} justify taking the derivative uniformly in $\ell^\infty(\mathcal I)$.  The final assertion is the directional delta method applied to the primitive weak limit.
\end{proof}

\begin{proof}[Proof of \Cref{cor:subsampling}]
By \Cref{thm:hdd} and \Cref{ass:directional}, the full-sample estimator and the recomputed subsample estimator are asymptotically obtained by applying the same Hadamard directionally differentiable map to primitive estimators with the same tight limit law, after scaling by $\sqrt n$ and $\sqrt m$, respectively.  The condition $m\to\infty$ and $m/n\to0$ makes the full-sample centering error negligible relative to the subsample scale.  The continuity of the limiting sup-norm distribution at the critical value is the standard condition ensuring quantile consistency for subsampling.  Therefore the conditional subsampling distribution of the displayed statistic consistently estimates the law of the directional-delta-method limit.  This is the usual subsampling argument for directionally differentiable statistics, applied to the CDF-bound process.
\end{proof}

\section{Proofs for quantile and frontier inference}\label{app:inference}

\begin{proof}[Proof of \Cref{thm:quantileband}]
On the simultaneous CDF-band event, the raw bands contain each CDF $\psi^\sigma_{a,s}$.  The monotone outer-envelope construction preserves containment: if $z\le y$, then $\widehat L^{raw}(z)\le \psi(z)\le\psi(y)$, so $\sup_{z\le y}\widehat L^{raw}(z)\le\psi(y)$; similarly, if $z\ge y$, then $\psi(y)\le\psi(z)\le\widehat U^{raw}(z)$, so $\psi(y)\le\inf_{z\ge y}\widehat U^{raw}(z)$.  Thus $\widehat L\le\psi\le\widehat U$ pointwise and the monotone envelopes are valid CDF bands.

If $L\le F\le U$, then $\{y:U(y)\ge\tau\}\supseteq\{y:F(y)\ge\tau\}\supseteq\{y:L(y)\ge\tau\}$.  Taking infima gives
\[
        \mathcal Q_L(\tau;L,U)\le F^{-1}(\tau)\le \mathcal Q_U(\tau;L,U).
\]
Apply this relation to $F=\psi^+_{a,s}$ for $q^-_{a,s}$ and to $F=\psi^-_{a,s}$ for $q^+_{a,s}$, uniformly over $(a,s,\tau)$ on the CDF-band event.
\end{proof}

\begin{proof}[Proof of \Cref{cor:qteband}]
On the event in \Cref{thm:quantileband},
\[
        q^-_{1,s}(\tau)\ge \widehat q^{-,lo}_{1,s}(\tau),
        \qquad
        q^+_{0,s}(\tau)\le \widehat q^{+,hi}_{0,s}(\tau),
\]
so
\[
        \Delta^-_s(\tau)=q^-_{1,s}(\tau)-q^+_{0,s}(\tau)
        \ge \widehat q^{-,lo}_{1,s}(\tau)-\widehat q^{+,hi}_{0,s}(\tau)
        =\widehat\Delta^{lo}_s(\tau).
\]
Similarly,
\[
        \Delta^+_s(\tau)=q^+_{1,s}(\tau)-q^-_{0,s}(\tau)
        \le \widehat q^{+,hi}_{1,s}(\tau)-\widehat q^{-,lo}_{0,s}(\tau)
        =\widehat\Delta^{hi}_s(\tau).
\]
This proves inclusion of the true identified interval in the estimated outer interval uniformly.
\end{proof}

\begin{proof}[Proof of \Cref{cor:qif}]
Under \Cref{ass:smoothq}, the inverse map is Hadamard differentiable at the relevant CDF.  If $q=F^{-1}(\tau)$ and $f(q)=F'(q)>0$, its derivative in direction $h$ is $-h(q)/f(q)$.  Apply this to $F=\psi^+_{a,s}$ for $q^-_{a,s}$ and to $F=\psi^-_{a,s}$ for $q^+_{a,s}$, using the CDF influence functions from \Cref{thm:eif}.  The QTE endpoint formulas follow by linearity.
\end{proof}

\begin{proof}[Proof of \Cref{prop:kappa}]
By \Cref{cor:qif}, each arm-specific quantile endpoint has a uniformly linear expansion under active-set and smooth-quantile regularity.  The maps defining $\Delta^-_s(\tau)$ and $\Delta^+_s(\tau)$ are linear combinations of these quantile endpoints, so their influence functions are $\varphi^-_{s,\tau}$ and $\varphi^+_{s,\tau}$.  If the active argument of the minimum in $\kappa^{\mathrm{hull}}_s(\tau)=\min\{\Delta^+_s(\tau),-\Delta^-_s(\tau)\}$ is separated uniformly, the minimum map is locally linear and selects the displayed branch.  Uniformity follows from the assumed uniformity of the quantile expansions and the separation margin.  Conditional multiplier consistency follows from the same empirical-process approximation used in \Cref{thm:ual}.
\end{proof}

\begin{proof}[Proof of \Cref{thm:frontier}]
On the event
\[
        \sup_{s,\tau}\sqrt n|\widehat\kappa^{\mathrm{hull}}_s(\tau)-\kappa^{\mathrm{hull}}_s(\tau)|\le d_{1-\alpha},
\]
if $s\in\widehat{\mathcal N}^{in}_{\alpha}(\tau)$, then
\[
        \kappa^{\mathrm{hull}}_s(\tau)\ge \widehat\kappa^{\mathrm{hull}}_s(\tau)-d_{1-\alpha}/\sqrt n\ge0,
\]
so $s\in\mathcal N^{\mathrm{hull}}(\tau)$.  Conversely, if $s\in\mathcal N^{\mathrm{hull}}(\tau)$, then $\kappa^{\mathrm{hull}}_s(\tau)\ge0$ and
\[
        \widehat\kappa^{\mathrm{hull}}_s(\tau)+d_{1-\alpha}/\sqrt n\ge \kappa^{\mathrm{hull}}_s(\tau)\ge0,
\]
so $s\in\widehat{\mathcal N}^{out}_{\alpha}(\tau)$.  This proves the set inclusion uniformly on the event, whose probability tends to at least $1-\alpha$.

For the local level-set statement, let
\[
        \Delta_n=\sup_{s\in K,\tau\in\calT}
        |\widehat\kappa^{\mathrm{hull}}_s(\tau)-\kappa^{\mathrm{hull}}_s(\tau)|+d_{1-\alpha}/\sqrt n.
\]
The uniform band gives $\Delta_n=O_p(n^{-1/2})$.  The isolation condition implies that, with probability tending to one, the estimated zero-level equation
$\widehat\kappa^{\mathrm{hull}}_s(\tau)+d_{1-\alpha}/\sqrt n=0$ has no solutions in $K$ outside any fixed small tube around $\mathcal F^{\mathrm{hull}}_{\mathrm{int}}(\tau)$, uniformly in $\tau$.

Inside such a tube, compact containment and the lower bound on $\|\nabla_s\kappa^{\mathrm{hull}}\|$ allow a uniform implicit-function representation. If the estimator is evaluated only on a grid, interpolate it continuously on each grid cell; the additional Hausdorff error is bounded by the mesh size, which is $o(n^{-1/2})$ under the stated grid convention. In local coordinates $(u,v)$, the true frontier is $v=0$ and $\kappa^{\mathrm{hull}}(u,v,\tau)$ is strictly monotone in $v$, with derivative bounded away from zero uniformly.  Thus any solution of the estimated zero-level equation inside the tube lies at normal distance at most $C\Delta_n$ from the true frontier.  Conversely, for every point on the true frontier, moving along the normal coordinate by $\pm C\Delta_n$ changes the true $\kappa^{\mathrm{hull}}$ by more than the maximal perturbation; the intermediate value theorem therefore gives a solution of the estimated zero-level equation within distance $C\Delta_n$.  Hence the Hausdorff distance between $\widehat{\mathcal F}^{out}_{\alpha,K}(\tau)$ and $\mathcal F^{\mathrm{hull}}_{\mathrm{int}}(\tau)$ is $O_p(n^{-1/2})$ uniformly in $\tau$.  Under \Cref{ass:connected}, \Cref{cor:exactconnected} identifies the interval hull with the exact scalar QTE set, so the exact-frontier statement follows by substitution.
\end{proof}

\section{Algebraic audit and numerical stress checks}\label{app:audit}

This appendix records two internal checks used to guard against common errors in the nested-envelope algebra.  First, the proof of \Cref{prop:noncollapse} shows analytically that replacing the two normalized tilts by one product tilt is a relaxation: the product lower endpoint is weakly below the nested lower endpoint and the product upper endpoint is weakly above the nested upper endpoint.  Hence a paper using only the product likelihood-ratio bound would generally report conservative, not sharp, transported CDF bounds for the joint model in this manuscript.

Second, the closed forms were stress-tested against the finite binary-event linear program.  For randomly generated $p\in[0,1]$, $e\in(0,1)$, $\Gam\ge1$ and $\Lam\ge1$, the feasible interval for $m=E(hZ)$ under $\ell\le h\le u$ and $E(h)=1$ is exactly
\[
        [\ell p,up]\cap[1-u(1-p),1-\ell(1-p)].
\]
The numerical check repeatedly compared this linear-program endpoint with $C^-_\Gam$, $C^+_\Gam$, $T^-_\Lam$ and $T^+_\Lam$, and also verified monotonicity of the resulting CDF maps on random grids.  No violations were found up to floating-point error.  This check is not used as a proof; it is included to make transparent that the closed-form formulas have been compared with the underlying finite-dimensional optimization problem.

\newpage
\section{Additional simulation diagnostics}\label{app:simulation-diagnostics}

\begin{table}[!htbp]
\centering
\scriptsize
\setlength{\tabcolsep}{4pt}
\renewcommand{\arraystretch}{1.15}
\caption{Full diagnostics for Experiment 1.  The table reports the numerical audit of the finite-support sharpness formula, the non-collapse comparison between the nested model and the product-LR relaxation, and whole-path validity checks for least-favorable CDF constructions.}
\label{tab:sim1_finite_support_audit}
\begin{tabularx}{0.98\textwidth}{>{\raggedright\arraybackslash}p{0.30\textwidth}>{\raggedleft\arraybackslash}p{0.16\textwidth}>{\raggedright\arraybackslash}X}
\toprule
Metric & Value & Definition \\
\midrule
Full finite LP solves & $6{,}000$ & Number of threshold-level two-layer LPs solved using \texttt{scipy.optimize.linprog}. \\
Vectorized algebraic cases & $200{,}000$ & Independent event-mass cases used for non-collapse and dominance diagnostics. \\
Process-path audits & $20{,}000$ & Whole-CDF lower/upper path constructions checked for monotonicity, endpoint conditions, ordering, and admissible Radon--Nikodym derivatives. \\
Max LP discrepancy, lower & $4.955\times 10^{-11}$ & $\max |T^-_{\Lam}\{C^-_{\Gam}(p,e)\}-v^-_{\mathrm{LP}}|$. \\
Max LP discrepancy, upper & $3.595\times 10^{-11}$ & $\max |T^+_{\Lam}\{C^+_{\Gam}(p,e)\}-v^+_{\mathrm{LP}}|$. \\
Boundary-event max discrepancy & $7.105\times 10^{-15}$ & Maximum LP discrepancy among threshold events with $p=0$ or $p=1$. \\
Mean product overwidth & $0.002487$ & Average $W^{\mathrm{prod}}-W^{\mathrm{nest}}$; theory predicts this quantity is nonnegative. \\
Strict non-collapse share & $0.163000$ & Fraction of all algebraic cases with $W^{\mathrm{prod}}>W^{\mathrm{nest}}+10^{-10}$. \\
Strict non-collapse share, nontrivial subset & $0.258299$ & Same fraction restricted to $\Gam>1$, $\Lam>1$, interior event probability and positive nested width. \\
99\% product/nested width ratio, nontrivial subset & $1.113234$ & Upper-tail width ratio restricted to the nontrivial subset. \\
Analytic example: nested lower & $0.286667$ & $T^-_{\Lam}\{C^-_{\Gam}(p,e)\}$ at $p=0.7$, $e=0.1$, $\Gam=2$, and $\Lam=1.5$. \\
Analytic example: product lower & $0.256667$ & Single product-LR lower endpoint at the same $(p,e,\Gam,\Lam)$. \\
Product dominance violations & $0$ & Cases in which the product relaxation is narrower than the nested model. \\
Max process-path violation & $2.487\times 10^{-14}$ & Maximum numerical violation of monotonicity, endpoint, ordering or Radon--Nikodym constraints. \\
Elapsed seconds & $33.872$ & Wall-clock runtime for this run. \\
\bottomrule
\end{tabularx}
\end{table}

\begin{figure}[!htbp]
    \centering
    \includegraphics[width=0.98\textwidth]{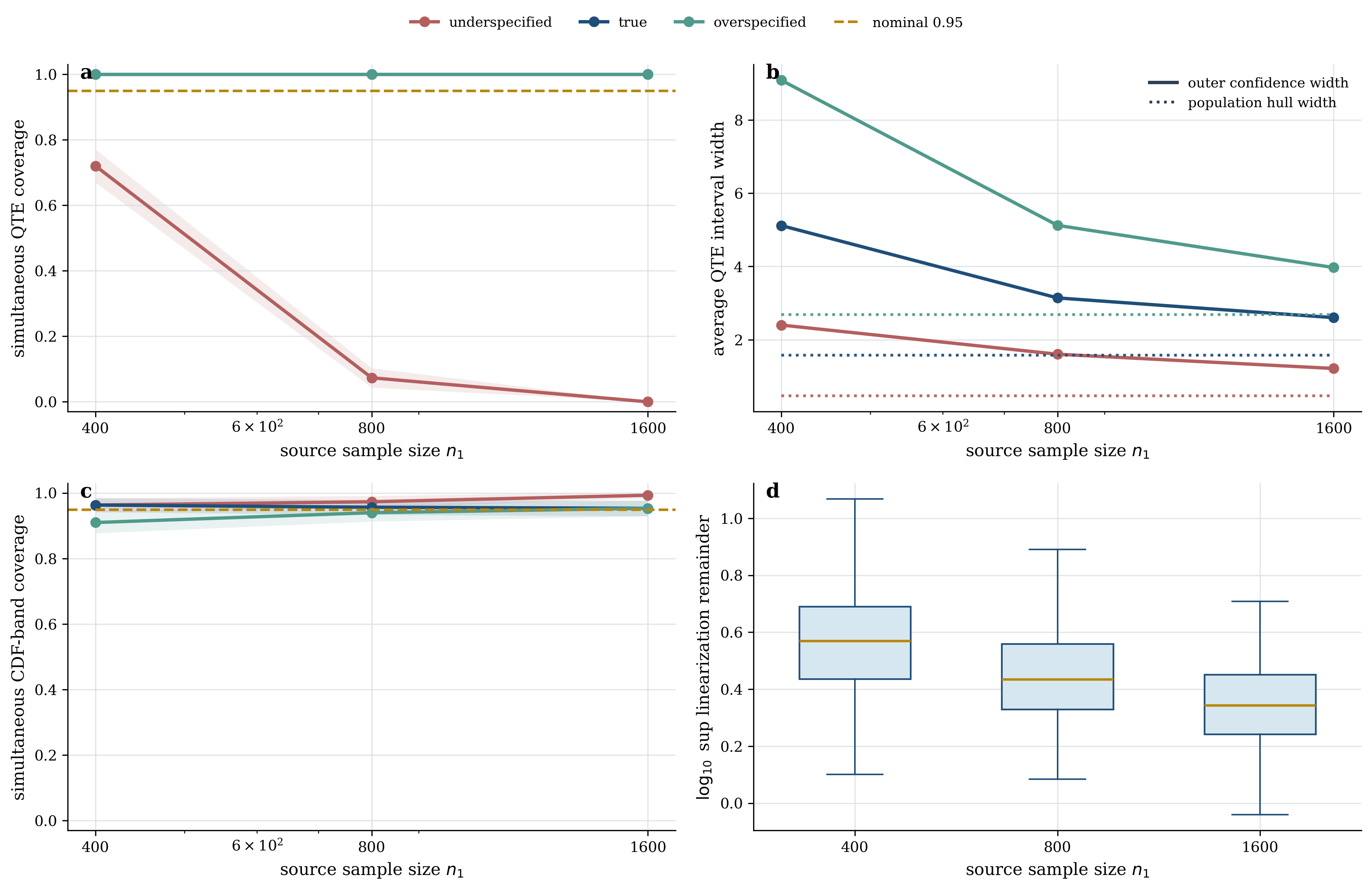}
    \caption{Regular smooth inference under exact nested tilts.  Panel (a) reports simultaneous QTE coverage over the quantile grid.  Panel (b) reports average QTE interval width; solid lines show outer confidence hulls and dotted lines show oracle population hulls.  Panel (c) reports simultaneous CDF-band coverage.  Panel (d) reports the first-order linearization diagnostic under the true sensitivity pair.  Full numerical diagnostics are reported in \Cref{tab:sim2_regular_appendix}.}
    \label{fig:sim2_regular_smooth}
\end{figure}

\begin{table}[!htbp]
\centering
\scriptsize
\setlength{\tabcolsep}{3pt}
\renewcommand{\arraystretch}{1.08}
\caption{Extended diagnostics for Experiment 2.  QTE coverage is simultaneous over the quantile grid; CDF coverage is simultaneous over treatment arms, endpoint signs and thresholds.  Plug-in coverage uses the estimated interval hull without CDF-band enlargement.}
\label{tab:sim2_regular_appendix}
\resizebox{\textwidth}{!}{%
\begin{tabular}{rrlrrrrrrrr}
\toprule
$n_1$ & $n_0$ & Sensitivity & QTE cov. & CDF cov. & plug-in cov. & outer width & plug-in width & pop. width & CDF width & lin. rem. \\
\midrule
$400$ & $600$ & overspecified & $1.000$ & $0.910$ & $1.000$ & $9.087$ & $2.988$ & $2.689$ & $0.390$ & -- \\
$400$ & $600$ & underspecified & $0.720$ & $0.963$ & $0.000$ & $2.398$ & $0.509$ & $0.465$ & $0.219$ & -- \\
$400$ & $600$ & true & $1.000$ & $0.963$ & $0.453$ & $5.116$ & $1.750$ & $1.579$ & $0.292$ & $3.713$ \\
$800$ & $1200$ & overspecified & $1.000$ & $0.940$ & $1.000$ & $5.122$ & $2.855$ & $2.689$ & $0.263$ & -- \\
$800$ & $1200$ & underspecified & $0.073$ & $0.973$ & $0.000$ & $1.602$ & $0.477$ & $0.465$ & $0.148$ & -- \\
$800$ & $1200$ & true & $1.000$ & $0.957$ & $0.323$ & $3.140$ & $1.651$ & $1.579$ & $0.196$ & $2.723$ \\
$1600$ & $2400$ & overspecified & $1.000$ & $0.953$ & $1.000$ & $3.970$ & $2.749$ & $2.689$ & $0.183$ & -- \\
$1600$ & $2400$ & underspecified & $0.000$ & $0.993$ & $0.000$ & $1.213$ & $0.470$ & $0.465$ & $0.103$ & -- \\
$1600$ & $2400$ & true & $1.000$ & $0.953$ & $0.307$ & $2.602$ & $1.603$ & $1.579$ & $0.137$ & $2.205$ \\
\bottomrule
\end{tabular}%
}
\end{table}

\begin{table}[!htbp]
\centering
\scriptsize
\setlength{\tabcolsep}{3pt}
\renewcommand{\arraystretch}{1.08}
\caption{Extended diagnostics for Experiment 3.  QTE coverage is simultaneous over the quantile grid.  The process sup-norm error is computed against the oracle bound process for sensitivity-aware estimators and against the true target CDF for the point estimator.  The $G_e$ diagnostics are reported only for the full DML estimator because the other methods deliberately omit this EIF component.}
\label{tab:sim3_hdml_appendix}
\resizebox{\textwidth}{!}{%
\begin{tabular}{rrlrrrrrrrrrr}
\toprule
$n_1$ & $n_0$ & Method & QTE cov. & Hull cont. & RMSE & Abs. bias & outer width & hull width & Proc. err. & nuis. CDF ISE & $G_e$ act. & $|G_e|$ \\
\midrule
$600$ & $900$ & DML & $1.000$ & $0.620$ & $0.626$ & $0.531$ & $15.699$ & $3.210$ & $0.517$ & $0.0180$ & $0.379$ & $0.095$ \\
$600$ & $900$ & Plug-in & $0.120$ & $0.040$ & $0.565$ & $0.483$ & $2.028$ & $1.646$ & $0.399$ & $0.0180$ & \multicolumn{1}{c}{--} & \multicolumn{1}{c}{--} \\
$600$ & $900$ & No-$G_e$ & $1.000$ & $0.655$ & $0.648$ & $0.552$ & $15.671$ & $3.298$ & $0.569$ & $0.0180$ & \multicolumn{1}{c}{--} & \multicolumn{1}{c}{--} \\
$600$ & $900$ & Point & $0.850$ & $0.000$ & $1.230$ & $1.202$ & $6.795$ & $0.000$ & $0.525$ & $0.0180$ & \multicolumn{1}{c}{--} & \multicolumn{1}{c}{--} \\
$1200$ & $1800$ & DML & $1.000$ & $0.480$ & $0.392$ & $0.329$ & $12.509$ & $2.869$ & $0.337$ & $0.0081$ & $0.472$ & $0.110$ \\
$1200$ & $1800$ & Plug-in & $0.140$ & $0.040$ & $0.364$ & $0.303$ & $2.229$ & $2.011$ & $0.326$ & $0.0081$ & \multicolumn{1}{c}{--} & \multicolumn{1}{c}{--} \\
$1200$ & $1800$ & No-$G_e$ & $1.000$ & $0.525$ & $0.454$ & $0.384$ & $12.581$ & $3.011$ & $0.402$ & $0.0081$ & \multicolumn{1}{c}{--} & \multicolumn{1}{c}{--} \\
$1200$ & $1800$ & Point & $0.075$ & $0.000$ & $1.231$ & $1.218$ & $2.324$ & $0.000$ & $0.541$ & $0.0081$ & \multicolumn{1}{c}{--} & \multicolumn{1}{c}{--} \\
$2400$ & $3600$ & DML & $1.000$ & $0.470$ & $0.232$ & $0.188$ & $8.335$ & $2.626$ & $0.218$ & $0.0038$ & $0.510$ & $0.117$ \\
$2400$ & $3600$ & Plug-in & $0.100$ & $0.065$ & $0.249$ & $0.200$ & $2.343$ & $2.211$ & $0.266$ & $0.0038$ & \multicolumn{1}{c}{--} & \multicolumn{1}{c}{--} \\
$2400$ & $3600$ & No-$G_e$ & $1.000$ & $0.390$ & $0.286$ & $0.233$ & $8.507$ & $2.717$ & $0.280$ & $0.0038$ & \multicolumn{1}{c}{--} & \multicolumn{1}{c}{--} \\
$2400$ & $3600$ & Point & $0.000$ & $0.000$ & $1.247$ & $1.240$ & $1.177$ & $0.000$ & $0.548$ & $0.0038$ & \multicolumn{1}{c}{--} & \multicolumn{1}{c}{--} \\
\bottomrule
\end{tabular}%
}
\end{table}

\begin{table}[!htbp]
\centering
\scriptsize
\setlength{\tabcolsep}{3pt}
\renewcommand{\arraystretch}{1.08}
\caption{Extended diagnostics for Experiment 4.  QTE coverage is simultaneous over the quantile grid.  Frontier outer coverage and inner validity are evaluated on the compact $(\Gam,\Lam)$ grid.  Hull containment, plug-in endpoint RMSE and plug-in tipping error are diagnostics of the common estimated CDF-bound process; the methods differ in how they construct outer QTE and frontier confidence sets.}
\label{tab:sim4_nonregular_appendix}
\resizebox{\textwidth}{!}{%
\begin{tabular}{rrlrrrrrrrr}
\toprule
$n_1$ & $n_0$ & Method & QTE cov. & Hull cont. & QTE width & Plug-in RMSE & Front. outer & Inner valid & Hausdorff & Plug-in tip err. \\
\midrule
$500$ & $750$ & Subsample $m=n^{0.6}$ & $1.000$ & $0.247$ & $1.488$ & $0.117$ & $0.930$ & $1.000$ & $0.501$ & $0.095$ \\
$500$ & $750$ & Subsample $m=n^{0.7}$ & $1.000$ & $0.247$ & $1.604$ & $0.117$ & $0.990$ & $1.000$ & $0.554$ & $0.095$ \\
$500$ & $750$ & Regular bootstrap & $1.000$ & $0.247$ & $1.875$ & $0.117$ & $1.000$ & $1.000$ & $0.693$ & $0.095$ \\
$500$ & $750$ & Wald endpoint & $0.550$ & $0.247$ & $0.773$ & $0.117$ & -- & -- & -- & -- \\
$1000$ & $1500$ & Subsample $m=n^{0.6}$ & $1.000$ & $0.267$ & $1.276$ & $0.079$ & $0.987$ & $1.000$ & $0.461$ & $0.056$ \\
$1000$ & $1500$ & Subsample $m=n^{0.7}$ & $1.000$ & $0.267$ & $1.371$ & $0.079$ & $0.993$ & $1.000$ & $0.485$ & $0.056$ \\
$1000$ & $1500$ & Regular bootstrap & $1.000$ & $0.267$ & $1.461$ & $0.079$ & $1.000$ & $1.000$ & $0.520$ & $0.056$ \\
$1000$ & $1500$ & Wald endpoint & $0.603$ & $0.267$ & $0.721$ & $0.079$ & -- & -- & -- & -- \\
$2000$ & $3000$ & Subsample $m=n^{0.6}$ & $1.000$ & $0.273$ & $1.119$ & $0.061$ & $0.977$ & $1.000$ & $0.384$ & $0.046$ \\
$2000$ & $3000$ & Subsample $m=n^{0.7}$ & $1.000$ & $0.273$ & $1.172$ & $0.061$ & $0.987$ & $1.000$ & $0.404$ & $0.046$ \\
$2000$ & $3000$ & Regular bootstrap & $1.000$ & $0.273$ & $1.186$ & $0.061$ & $0.993$ & $1.000$ & $0.441$ & $0.046$ \\
$2000$ & $3000$ & Wald endpoint & $0.567$ & $0.273$ & $0.681$ & $0.061$ & -- & -- & -- & -- \\
\bottomrule
\end{tabular}%
}
\end{table}

\section{Additional real-data details}\label{app:realdata}

This appendix gives additional details for the real-data illustration in \Cref{sec:realdata}. 
We use the complete-case NHEFS data distributed with the \texttt{causaldata} package, which contains the smoking-cessation example from \citet{HernanRobins2020}; the underlying public-use NHEFS files are provided by CDC/NCHS \citep{CDCNHEFS,CausalData2024}. 
The complete-case version keeps subjects with complete data on the smoking-cessation treatment, baseline covariates, baseline weight, follow-up weight, and weight-change variables.

The analysis uses
\[
        A=\texttt{qsmk},\qquad
        Y=\texttt{wt82\_71}.
\]
Here \(A=1\) denotes quitting smoking, and \(Y\) is weight change from 1971 to 1982. 
The baseline covariate vector \(X\) contains only pre-outcome variables. 
In the implementation, the candidate baseline covariates are
\[
\begin{aligned}
        &\texttt{sex},\ \texttt{race},\ \texttt{income},\ \texttt{marital},\ \texttt{education},\ \texttt{school},\ \texttt{age},\ \texttt{wt71},\\
        &\texttt{smokeintensity},\ \texttt{smokeyrs},\ \texttt{exercise},\ \texttt{active},\ \texttt{hbp},\ \texttt{asthma},\ \texttt{bronch},\\
        &\texttt{diabetes},\ \texttt{alcoholfreq},\ \texttt{alcoholpy},\ \texttt{price71},\ \texttt{tax71},
\end{aligned}
\]
with variables included whenever available in the downloaded data. 
Post-outcome or post-treatment variables, including follow-up weight and post-baseline smoking changes, are not used as covariates.

The public data include an \texttt{older} indicator. 
We use \(\texttt{older}=0\) as the source sample and \(\texttt{older}=1\) as the target covariate sample. 
This produces \(n_1=1009\) source observations and \(n_0=413\) target observations. 
For source observations, \(A,Y,X\) are used. 
For target observations, only \(X\) is used; the observed target treatment and outcome variables in the public data are masked before estimation. 
The target treatment rate reported in \Cref{tab:realdata_nhefs_appendix} is therefore only a design diagnostic and does not enter any estimator.

The nuisance functions are estimated with five-fold cross-fitting. 
The treatment mechanism \(e_a(x)=P(A=a\mid R=1,X=x)\) is estimated in the source sample. 
The covariate density ratio \(\omega(x)=dP^X_0/dP^X_1\) is estimated by fitting a source--target sampling-score classifier on pooled covariates and converting the fitted odds into a density-ratio estimate. 
The conditional CDFs \(p_a(y,x)=P(Y\le y\mid R=1,A=a,X=x)\) are estimated separately by treatment arm on a threshold grid of size 181. 
The main table uses 399 multiplier draws for confidence-band diagnostics. 
For the descriptive robustness frontier, we evaluate a \(51\times51\) grid over \(\Gam\in[1,4]\) and \(\Lam\in[1,3]\), and report the smallest product \(\Gam\Lam\) for which the plug-in median-QTE interval hull contains zero.

\begin{table}[!htbp]
\centering
\small
\setlength{\tabcolsep}{4pt}
\renewcommand{\arraystretch}{1.10}
\caption{
Additional diagnostics for the NHEFS real-data illustration. 
The target treatment and outcome columns are not used by any estimator; target treatment rate is shown only to document the public-data split. 
The descriptive frontier tip is computed on the plug-in \(51\times51\) sensitivity grid over \(\Gam\in[1,4]\) and \(\Lam\in[1,3]\).
}
\label{tab:realdata_nhefs_appendix}
\begin{tabularx}{0.98\textwidth}{>{\raggedright\arraybackslash}p{0.28\textwidth}>{\raggedleft\arraybackslash}p{0.16\textwidth}>{\raggedright\arraybackslash}X}
\toprule
Diagnostic & Value & Definition \\
\midrule
Source sample size 
& \(1009\) 
& Younger-smoker source sample with treatment and outcome observed. \\
Target covariate sample size 
& \(413\) 
& Older-smoker target sample; treatment and outcome masked. \\
Treatment rate, source 
& \(0.221\) 
& Fraction with \(\texttt{qsmk}=1\) in the source sample. \\
Treatment rate, target (masked) 
& \(0.322\) 
& Reported only as a design diagnostic; not used for estimation. \\
Max absolute SMD 
& \(3.063\) 
& Largest absolute standardized mean difference between target and source covariates. \\
Propensity AUC 
& \(0.555\) 
& Held-out AUC for the source treatment nuisance learner. \\
Sampling-score AUC 
& \(0.989\) 
& Held-out AUC for discriminating target versus source covariates. \\
Median estimated density ratio 
& \(0.033\) 
& Median estimated \(dP_0^X/dP_1^X\) among source observations. \\
95\% estimated density ratio 
& \(1.018\) 
& 95th percentile of estimated \(dP_0^X/dP_1^X\) among source observations. \\
Descriptive frontier tip 
& \(1.440\) 
& Smallest \(\Gam\Lam\) on the plug-in grid for which the median-QTE interval hull contains zero. \\
\bottomrule
\end{tabularx}
\end{table}

\paragraph{Interpretation of \Cref{tab:realdata_nhefs_appendix}.}
The diagnostics confirm that the real-data split creates a meaningful transport problem. 
The maximum absolute standardized mean difference is \(3.063\), and the sampling-score AUC is \(0.989\), so the target older-smoker covariate distribution is far from the source younger-smoker covariate distribution. 
The estimated density ratio is also highly concentrated among source observations: its median is \(0.033\), while its \(95\%\) percentile is \(1.018\). 
These values suggest that only a subset of source observations receives substantial target-representation weight. 
This is precisely the setting in which point transport is vulnerable to external-validity violations.

The propensity AUC is \(0.555\), indicating that quitting status is only weakly predictable from the source baseline covariates used by the learner. 
This does not rule out unmeasured confounding; rather, it motivates a sensitivity analysis because treatment assignment in the source remains observational. 
The descriptive frontier tip \(1.440\) indicates that, on the plug-in \((\Gam,\Lam)\) grid, a moderate product of the two sensitivity parameters is enough for the median-QTE interval hull to include zero. 
This number should not be interpreted as an estimate of the true sensitivity level. 
It is a robustness summary: it reports how far one must move away from the point-transport model, along the grid used here, before the median-QTE conclusion becomes non-refuting for the null.

\end{document}